\newif\ifdraft
\newif\ifcomments
\newif\iffull
\newif\ifpadding
\newif\ifextended
    \DeclareSIUnit\bps{b/s}
\footnotesize\textsf{(Compiled at \currenttime, \today)}}}
\titleformat{\subsubsection}{\normalfont\em\bf}{\thesubsubsection}{1em}{\em}
\titlespacing*{\section}{0pt}{7pt}{3pt}
\titlespacing*{\subsection}{0pt}{4pt}{2pt}
\titlespacing*{\subsubsection}{0pt}{2pt}{0pt}
\DeclareMathAlphabet{\mathcal}{OMS}{cmsy}{m}{n}
\renewcommand{\paragraph}[1]{\smallskip\noindent\textbf{#1.}\enskip}
\newcommand{\code}[1]{\texttt{#1}\xspace}
\definecolor{FGreen}{cmyk}{0.9,0.2,0.5,0.3}
\newcommand*{\ie}{%
    \@ifnextchar{,}%
        {i.e.}%
        {i.e.,\@\xspace}%
}
\newcommand*{\eg}{%
    \@ifnextchar{,}%
        {e.g.}%
        {e.g.,\@\xspace}%
}
\newcommand*{\etc}{%
    \@ifnextchar{.}%
        {etc}%
        {etc.\@\xspace}%
}
\newcommand*{\etal}{%
    \@ifnextchar{.}%
        {\textit{et al}}%
        {\textit{et al.}\@\xspace}%
}
\def\approx{$\mathtt{\sim}$} %
\newcommand*\bubble[1]{\tikz[baseline=(char.base)]{
            \node[shape=circle,scale=0.8,draw,inner sep=2pt,fill=black,
            text=white] (char)
            {#1};}}
\def\thickhline{%
  \noalign{\ifnum0=`}\fi\hrule \@height \thickarrayrulewidth \futurelet
   \reserved@a\@xthickhline}
\def\@xthickhline{\ifx\reserved@a\thickhline
               \vskip\doublerulesep
               \vskip-\thickarrayrulewidth
             \fi
      \ifnum0=`{\fi}}
\newlength{\thickarrayrulewidth}
\newcommand\T{\rule{0pt}{2.6ex}}       %
\newcommand\B{\rule[-1.2ex]{0pt}{0pt}} %
\def\oselect{\code{oassign}}
\def\osort{\code{osort}}
\def\oarr{\code{oaccess}}
\def\dummy{\code{isDummy}}
\def\mynull{\code{null}}
\newenvironment{compactenumerate}{ 
    \begin{enumerate}[nolistsep,leftmargin=*,label=\arabic*)]
        \setlength{\itemsep}{0.1pt}
        \setlength{\parskip}{0.1pt}
        \setlength{\parsep}{0.1pt}     
    }{\end{enumerate}}
\newenvironment{compactenumeratealph}{ 
	\begin{enumerate}[a),nolistsep,leftmargin=*]
		\setlength{\itemsep}{0.1pt}
		\setlength{\parskip}{0.1pt}
		\setlength{\parsep}{0.1pt}     
	}{\end{enumerate}}
\newtheorem{theorem}{Theorem}
\newtheorem{definition}{Definition}
\newcommand{\eat}[1]{\ignorespaces}
\newcommand{\drop}[1]{\ignorespaces}
\LetLtxMacro{\todonote}{\todo}
\renewcommand{\todo}[2][]
{\todonote[size=\small,caption={#2}, #1]
{\begin{spacing}{0.5}#2\end{spacing}}}
\newcommand{\sys}{Visor\xspace}
\begin{document}

\date{}

 \twocolumn[
 \bigskip
\centerline{\Large \bf \sys: Privacy-Preserving Video Analytics as a Cloud Service}
 \bigskip
 \centerline{Rishabh Poddar$^{1, 2}$, Ganesh Ananthanarayanan$^2$, Srinath Setty$^2$, Stavros Volos$^2$, Raluca Ada Popa$^1$}
 \smallskip
 \centerline{$^1$UC Berkeley \quad $^2$Microsoft Research}
 \smallskip
 \centerline{\small \code{<rishabhp,raluca>@eecs.berkeley.edu} \quad \code{<ga,srinath,svolos>@microsoft.com}}
 \bigskip %
 ]

\newcommand{\finaloverhead}{{$2\times$--$6\times$}\xspace}
\newcommand{\finaloverheadnopadding}{{$1.6\times$--$2.9\times$}\xspace}

\begin{abstract}
   Video-analytics-as-a-service is becoming an important offering for cloud providers. A key concern in such services is privacy of the videos being analyzed. While trusted execution environments (TEEs) %
   are promising options for preventing the direct leakage of private video content, they remain vulnerable to {\em side-channel attacks}. %

   We present \sys, a system that provides confidentiality for the user's video stream as well as the ML models %
   in the presence of a compromised cloud platform and untrusted co-tenants. 
   \sys executes video pipelines in a \emph{hybrid} TEE that spans both the CPU and GPU. It protects the pipeline against side-channel attacks induced by data-dependent access patterns of video modules, and also addresses leakage in the CPU-GPU communication channel.  
    \sys is up to $1000\times$ faster than na\"ive oblivious solutions, and its overheads relative to a {\em non-oblivious baseline} are limited to \finaloverhead.
\end{abstract}

\pagestyle{plain}
\pagenumbering{gobble}

\section{Introduction}

Cameras are being deployed pervasively %
for 
the many applications they enable, such as traffic planning, retail experience, and enterprise security \cite{VisionZero,camera:retail,Verkada}. Videos from the cameras are streamed to the cloud, where they are processed using video analytics pipelines \cite{VideoStorm, Chameleon:Sigcomm, NoScope} composed of computer vision techniques (e.g., OpenCV \cite{opencv}) and convolutional neural networks (e.g., object detector CNNs \cite{Redmon:YOLO}); as illustrated in \cref{fig:pipeline}. Indeed, ``video-analytics-as-a-service'' is becoming an important offering for cloud providers \cite{Azure:Video:analytics, Amazon:Video:analytics}.

Privacy of the video contents is of paramount concern in the ``video analytics-as-a-service'' offerings. Videos often contain sensitive information, such as users' home interiors, people in workspaces, or license plates of cars. %
For example, the Kuna home monitoring service~\cite{Kuna} transmits videos from users' homes to the cloud, analyzes the videos, and notifies users when it detects movement in areas of interest. 
For user privacy, video streams must remain {\em confidential} and not be revealed to the cloud provider %
or other co-tenants in the cloud.

Trusted execution environments (TEEs)~\cite{SGX:HASP13,Graviton:OSDI18} are a natural fit for privacy-preserving video analytics in the cloud. In contrast to cryptographic approaches, such as homomorphic encryption, TEEs rely on the assumption that cloud tenants also trust the hardware. %
The hardware provides %
the ability to create secure ``enclaves'' that are protected against privileged attackers. TEEs are more compelling than cryptographic techniques since they are orders of magnitude faster. 
In fact, CPU TEEs (e.g., Intel SGX \cite{SGX:HASP13}) lie at the heart of confidential cloud computing \cite{IBM:SGX, Azure:SGX}. Meanwhile, recent advancements in GPU TEEs \cite{Graviton:OSDI18,HIX:Jang:2019} enable the execution of ML models (e.g., neural networks) with strong privacy guarantees as well. CPU and GPU TEEs, thus, present an opportunity for building privacy-preserving video analytics systems.

Unfortunately, TEEs (e.g., Intel SGX) are vulnerable to a host of side-channel attacks %
(e.g., \cite{wang-sgx-leaky, sgxcache-brasser, sgxattacks-foreshadow, sgxattacks-xu:pagefaults}). %
For instance, in \cref{s:background:attacks} we show that by observing just the memory access patterns of a widely used bounding box detection OpenCV module, an attacker can infer the {\em exact shapes and positions of all moving objects} in the video.
In general, an attacker can infer crucial information about the video being processed, such as the times when there is activity, objects that appear in the video frame, all of which when combined with knowledge about the physical space being covered by the camera, can lead to serious violations of confidentiality.

We present \sys, a system for privacy-preserving video analytics services.
\sys protects the confidentiality of the videos being analyzed from the service provider and other co-tenants.
When tenants host their own CNN models in the cloud, it also protects the model parameters and weights.  %
\sys protects against a powerful enclave attacker who can compromise the software stack outside the enclave, as well as observe any {\em data-dependent accesses} to network, disk, or memory via side-channels (similar to prior work \cite{Ohrimenko:ObliviousML, Raccoon}). %

\sys makes two primary contributions,
combining insights from ML systems, security, computer vision, and algorithm design.
First, we present a privacy-preserving framework for machine-learning-as-a-service (MLaaS), which supports CNN-based ML applications spanning both CPU and GPU resources.
Our framework can potentially power applications beyond video analytics, such as medical imaging, recommendation systems, and financial forecasting.
Second, we develop novel \emph{data-oblivious} algorithms with provable privacy guarantees within our MLaaS framework, for commonly used vision modules.
The modules are efficient and can be composed to construct many different video analytics pipelines. 
In designing our algorithms, we formulate a set of design principles that can be broadly applied to other vision modules as well.

\paragraph{1) Privacy-Preserving MLaaS Framework}
\sys leverages a \emph{hybrid} TEE that spans CPU and GPU resources available in the cloud. Recent work has shown that scaling video analytics pipelines requires judicious use of both CPUs and GPUs~\cite{Scanner:Poms:2018,Focus:OSDI18}. 
Some pipeline modules can run on CPUs at the required frame rates (\eg video decoding or vision algorithms) while others (\eg CNNs) require GPUs, as shown in Figure \ref{fig:pipeline}. 
Thus, our solution spans both CPU and GPU TEEs, and combines them into a unified trust domain.

\sys systematically addresses access-pattern-based leakage across the components of the hybrid TEE, from video ingestion to CPU-GPU communication to CNN processing. In particular, we take the following steps:
\begin{compactenumeratealph}
\item \sys leverages a suite of data-oblivious primitives to remove access pattern leakage from the CPU TEE. The primitives enable the development of oblivious modules with provable privacy guarantees, the access patterns of which are always independent of private data. 

\item \sys relies on a novel oblivious communication protocol to remove leakage from the CPU-GPU channel. As the CPU modules serve as filters, the data flow in the CPU-GPU channel (on which objects of each frame are passed to the GPU) leaks information about the contents of each frame, enabling attackers to infer the number of moving objects in a frame. At a high level, \sys pads the channel with dummy objects, leveraging the observation that our application is not constrained by the CPU-GPU bandwidth. To reduce GPU wastage, \sys intelligently minimizes running the CNN on the dummy objects.

\item \sys makes CNNs running in a GPU TEE oblivious by leveraging \emph{branchless} CUDA instructions to implement conditional operations (e.g., ReLU and max pooling) in a data-oblivious way.

\end{compactenumeratealph}

\paragraph{2) Efficient Oblivious Vision Pipelines}
Next, we design novel data-oblivious algorithms for vision modules that are foundational for video analytics, and implement them using the oblivious primitives provided by the framework described above.
Vision algorithms are used in video analytics pipelines to extract the moving foreground objects. These algorithms (\eg background subtraction, bounding box detection, object cropping, and tracking) run on CPUs and serve as cheap {\em filters} to discard frames instead of invoking expensive CNNs on the GPU for each frame's objects (more in \cref{s:model}). %
The modules can be composed to construct various vision pipelines, such as medical imaging and motion tracking.

As we demonstrate in \cref{s:evaluation}, na\"ive approaches for making these algorithms data-oblivious, such that their operations are independent of each pixel's value, can slow down video pipelines by several orders of magnitude. 
Instead, we carefully craft oblivious vision algorithms for each module in the video analytics pipeline, including the popular VP8 video decoder \cite{VP8Overview}.
Our overarching goal is to transform each algorithm into a pattern that processes each pixel identically.
To apply this design pattern efficiently, we devise a set of algorithmic and systemic optimization strategies based on the properties of vision modules, as follows.
First, we employ a divide-and conquer approach---\ie we break down each algorithm into independent subroutines based on their functionality, and tailor each subroutine individually.
Second, we cast sequential algorithms into a form that \emph{scans} input images while performing identical operations on each pixel. %
Third, identical pixel operations allow us to systemically amortize the processing cost across groups of pixels in each algorithm.
For each vision module, we derive the operations applied per pixel in conjunction with these design strategies.
Collectively, these strategies improve performance by up to $1000\times$ over na\"ive oblivious solutions.
We discuss our approach in more detail in \cref{s:obl_overview}; %
nevertheless, we note that it can potentially help inform the design of other oblivious vision modules as well, beyond the ones we consider in \sys.

In addition, as shown by prior work, bitrate variations in encrypted network traffic can also leak information about the underlying video streams~\cite{Schuster:Video:Attack}, beyond access pattern leakage at the cloud.
To prevent this leakage, we modify the video encoder to carefully pad video streams \emph{at the source} in a way that optimizes the video decoder's latency.
\sys thus provides an end-to-end solution for private video analytics.

\paragraph{Evaluation Highlights}
We have implemented \sys on Intel SGX CPU enclaves \cite{SGX:HASP13} and Graviton GPU enclaves \cite{Graviton:OSDI18}.
We evaluate \sys on commercial video streams of cities and datacenter premises containing sensitive data. 
Our evaluation shows that \sys's vision components perform up to $1000\times$ better than na\"ive oblivious solutions, and over $6$ to $7$ orders of magnitude better than a state-of-the-art general-purpose system for oblivious program execution.
Against a {\em non-oblivious baseline}, \sys's overheads are limited to \finaloverhead which still enables us to analyze multiple streams simultaneously in real-time on our testbed.
\sys is versatile and can accommodate different combinations of vision components used in real-world applications.
Thus, \sys provides an efficient solution for private video analytics.

\begin{figure}[t!]
    \centering
    \subfigure[\bf Pipeline with object classifier (e.g., ResNet).]{
        \includegraphics[height=2cm]{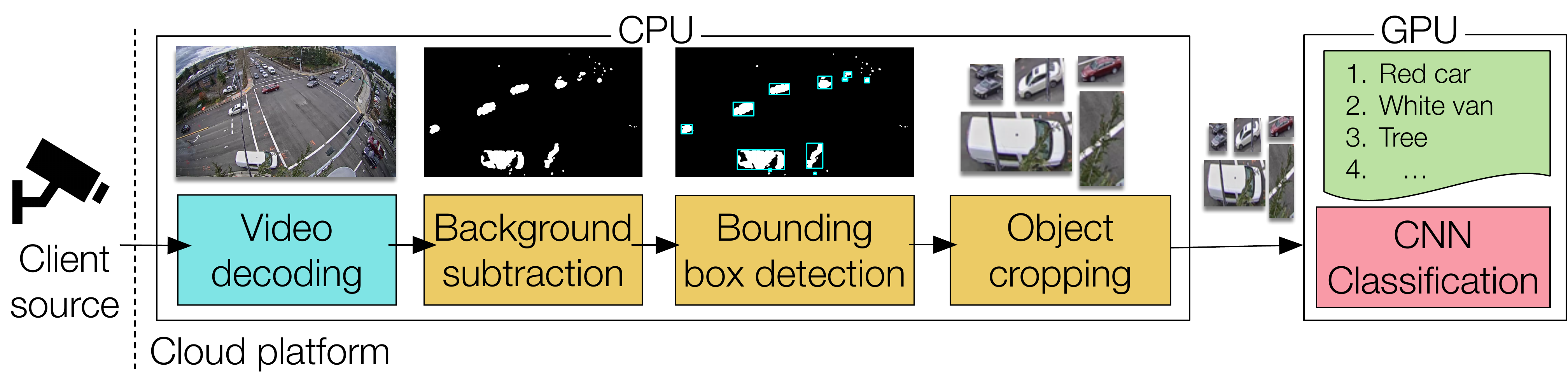}
    \label{fig:pipeline:resnet}}
    \hspace{.25in}
    \subfigure[\bf Pipeline with object detector (e.g., Yolo).]{
        \includegraphics[height=2cm]{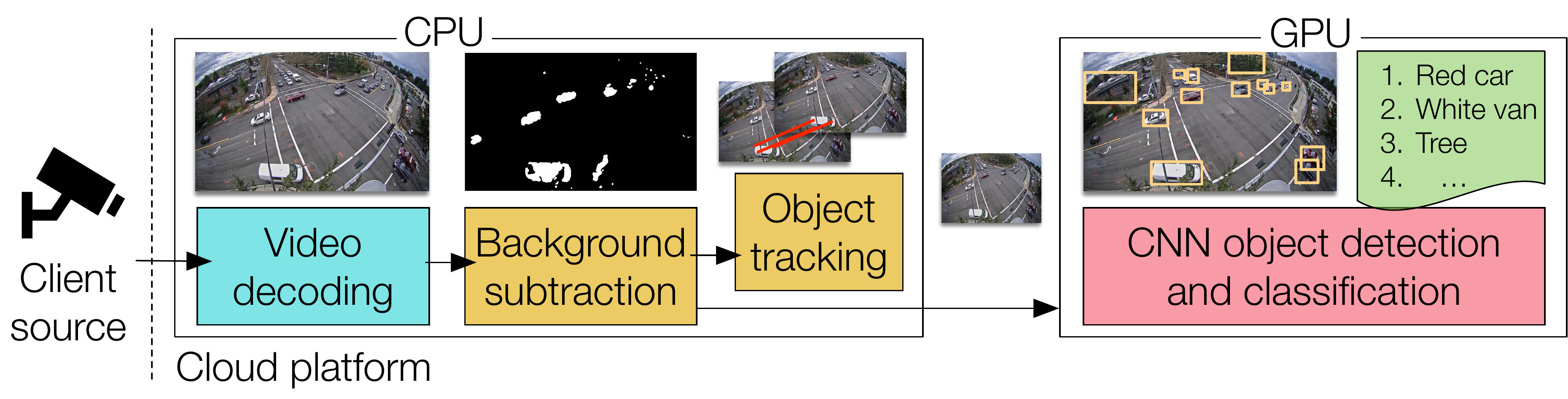}
    \label{fig:pipeline:yolo}}
    \caption{Video analytics pipelines. Pipeline (a) extracts the objects using vision algorithms and classifies the cropped objects using a CNN classifier on the GPU. Pipeline (b) also uses the vision algorithms as a filter, but sends the entire frame to the CNN detector. Both pipelines may optionally use object tracking. %
    }
    \label{fig:pipeline}
\end{figure}

\section{Background and Motivation}\label{s:background}

\subsection{Video Analytics as a Service} \label{s:model}

\cref{fig:pipeline} depicts the canonical pipelines for video analytics~\cite{Focus:OSDI18, NoScope, VideoStorm, AWStream, MS:Rocket:web}.
The client (\eg a source camera) feeds the video stream to the service hosted in the cloud, which (a)~decodes the video into frames, (b)~extracts objects from the frames using vision algorithms, and (c)~classifies the objects using a pre-trained convolutional neural network (CNN). Cameras typically offer the ability to control the resolution and frame rate at which the video streams are encoded. 

Recent work demonstrates that scaling video analytics pipelines requires judicious use of both CPUs and GPUs~\cite{Scanner:Poms:2018,Focus:OSDI18}.
In \sys, we follow the example of Microsoft's Rocket platform for video analytics~\cite{MS:Rocket:web,MS:Rocket:git}---we split the pipelines by running video decoding and vision modules on the CPU, while offloading the CNN to the GPU (as shown in \cref{fig:pipeline}).
The vision modules process each frame to detect the moving ``foreground'' objects in the video using background subtraction~\cite{MOG:Survey:2008}, compute each object's bounding box \cite{Suzuki:1985}, and crop them from the  frame for the CNN classifier. These vision modules can sustain the typical frame rates of videos even on CPUs, thereby serving as vital ``filters'' to reduce the expensive CNN operations on the GPU \cite{Focus:OSDI18, NoScope}, and are thus widely used in practical deployments. For example, CNN classification in \cref{fig:pipeline:resnet} is invoked only if moving objects are detected in a region of interest in the frame. 
Optionally, the moving objects are also tracked to infer directions (say, cars turning left). 
The CNNs can either be object classifiers (e.g., ResNet \cite{He:Resnet:2016}) as in \cref{fig:pipeline:resnet}; or object detectors (\eg Yolo \cite{Redmon:YOLO}) as in \cref{fig:pipeline:yolo}, which take whole frames as input. 
The choice of pipeline modules is application dependent \cite{Chameleon:Sigcomm, Focus:OSDI18} and \sys targets confidentiality for all pipeline modules, their different combinations, and vision CNNs. %

While our description focuses on a multi-tenant cloud service, our ideas equally apply to multi-tenant {\em edge compute} systems, say, at cellular base stations~\cite{MEC:ETSI}. Techniques for lightweight programmability on the cameras to reduce network traffic (\eg using smart encoders \cite{SmartCodec:Vivotek} or dynamically adapting frame rates \cite{Edgecomputing:video}) are orthogonal to {\sys}'s techniques.

\subsection{Trusted Execution Environments}
\label{s:background:enclaves}

Trusted execution environments, or enclaves, protect application's code and data from all other software in a system.
Code and data loaded in an enclave---CPU and GPU TEEs---can be verified by clients using the \emph{remote attestation} feature.

\noindent\textbf{Intel SGX}
\cite{SGX:HASP13} enables TEEs on CPUs and enforces isolation by storing enclave code and data in a protected memory region called the Enclave Page Cache (EPC). %
The hardware ensures that no software outside the enclave can access EPC contents.

\noindent\textbf{Graviton}
\cite{Graviton:OSDI18} enables TEEs on GPUs in tandem with trusted applications hosted in CPU TEEs. 
Graviton prevents an adversary from observing or tampering with traffic (data and commands) transferred to/from the GPU. 
A trusted GPU runtime (\eg CUDA runtime) hosted in a CPU TEE attests that all code/data have been securely loaded onto the GPU.

\subsection{Attacks based on Access Pattern Leakage}
\label{s:background:attacks}

TEEs are vulnerable to leakage from side-channel attacks %
that exploit micro-architectural side-channels 
\cite{sgxattacks-foreshadow, sgxcache-gotzfried, sgxcache-brasser, sgxcache-schwarz, sgxcache-moghimi, sgxattacks-hahnel:cache, sgxattacks-lee:branches, SGX:attack:ZombieLoad, sgxcache-cachequote}, 
software-based channels~\cite{sgxattacks-xu:pagefaults, bulck-sgxattack:pagefaults}, or application-specific leakage, such as network and memory accesses. %

\begin{figure} [t!]
    \centering
    \includegraphics[width=0.7\linewidth]{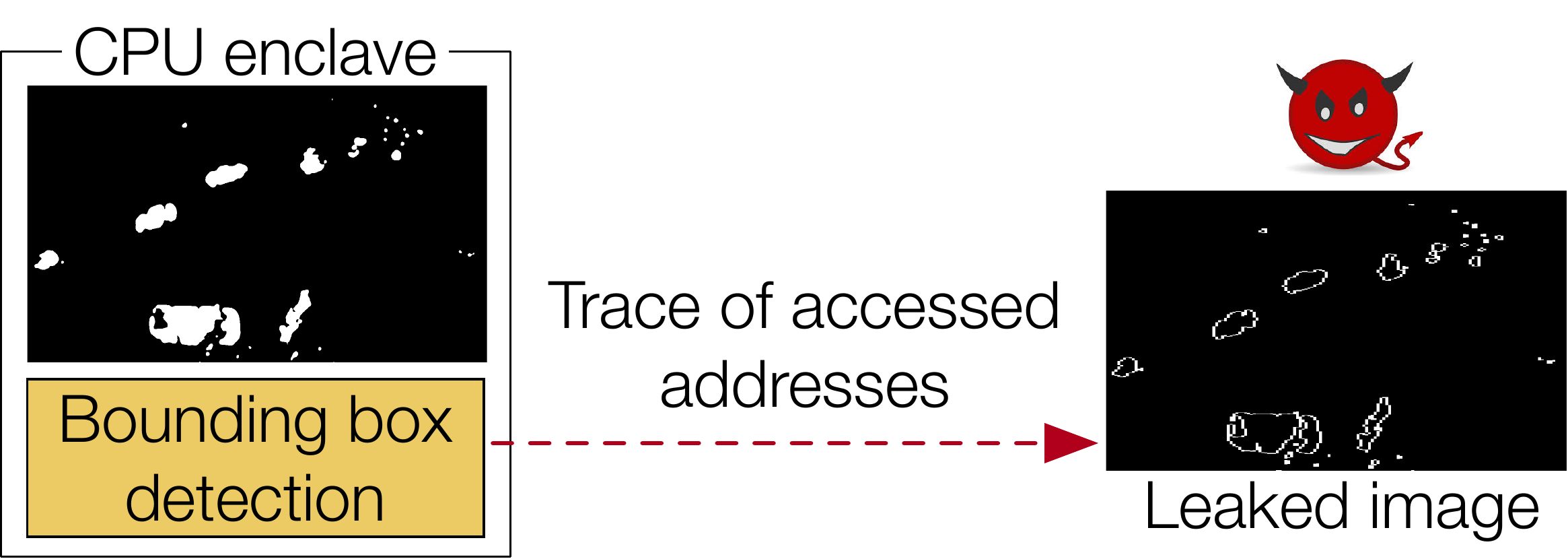}
    \caption{Attacker obtains all the frame's objects (right) using access pattern leakage in the bounding box detection module.}
    \label{fig:leakage}
\end{figure}

A large subset of these attacks exploit {\em data-dependent memory access patterns} %
(\eg %
branch-prediction, cache-timing, or controlled page fault attacks). %
Xu \etal~\cite{sgxattacks-xu:pagefaults} show that by simply observing the page access patterns of image decoders, an attacker can reconstruct entire images.
We ourselves analyzed the impact of access pattern leakage at cache-line granularity~\cite{sgxcache-gotzfried,sgxcache-brasser,sgxcache-schwarz,sgxcache-moghimi} on the bounding box detection algorithm \cite{Suzuki:1985} (see \cref{fig:pipeline:resnet}; \S\ref{s:model}). 
We simulated existing attacks by capturing the memory access trace during an execution of the algorithm, and then examined the trace to reverse-engineer the contents of the input frame.
Since images are laid out predictably in memory, we found that the attacker is able to infer the locations of all the pixels touched during execution, and thus, the {\em shapes and positions of all objects} (as shown in \cref{fig:leakage}).
Shapes and positions of objects are the core content of any video, and allow the attacker to infer sensitive information like times when patients are visiting private medical centers or when residents are inside a house, and even infer if the individuals are babies or on wheelchairs based on their size and shapes. In fact, conversations with customers of one of the largest public cloud providers
indeed confirm that {\em privacy of the videos is among their top-two concerns} in signing up for the video analytics cloud service.%

\section{Threat Model and Security Guarantees}\label{s:threatmodel}

We describe the attacker's capabilities and lay out the attacks that are in scope and out of scope for our work. %

\subsection{Hardware Enclaves and Side-Channels}\label{s:threatmodel:attacks}
Our trusted computing base includes: 
\begin{enumerate*}[($i$)]
	\item the GPU package and its enclave implementation, 
	\item the CPU package and its enclave implementation, and 
	\item the video analytics pipeline implementation and GPU runtime hosted in the CPU enclave.
\end{enumerate*}

The design of \sys is not tied to any specific hardware enclave; instead, \sys builds 
on top of an {\em abstract} model of hardware enclaves where the attacker controls the server’s software stack outside the enclave (including the OS), but cannot perform any attacks to glean information from inside the processor (including processor keys). 
The attacker can additionally observe the contents and access patterns of all (encrypted) pages in memory, for both data and code.
We assume that the attacker can observe the enclave's memory access patterns at cache line granularity~\cite{Ohrimenko:ObliviousML}. 
Note that our attacker model includes the cloud service provider as well as other co-tenants.

We instantiate \sys with the widely-deployed Intel SGX enclave. However, recent attacks show that SGX does not quite satisfy the abstract enclave model that \sys requires. For example, attackers may be able to distinguish \emph{intra} cache line memory accesses~\cite{sgxattacks:CacheBleed,sgxattacks:Memjam}.
In \sys, we mitigate these attacks by disabling hyperthreading in the underlying system, disallowing attackers from observing intra-core side-channels; clients can verify that hyperthreading is disabled during remote attestation~\cite{SGX:IAS}.
One may also employ complementary solutions for closing hyperthreading-based attacks~\cite{SGX:defense:Varys,SGX:defense:Hyperrace}.

Other attacks that violate our abstract enclave model are out of scope:
such as attacks based on timing analysis or %
power consumption~\cite{SGX:attack:Plundervolt,Attack:Clkscrew}, DoS attacks~\cite{SGXattack:rowhammer:SGXbomb:DOS:Jang,SGX:attack:rowhammer}, or rollback attacks~\cite{Memoir:rollback} (which have complementary solutions~\cite{SGX:LCM:defense:rollback:Brandenburger:2018, SGX:ROTE:defense:rollback:Matetic:2017}).
Transient execution attacks (\eg \cite{sgxattacks-foreshadow,SGX:attack:ZombieLoad,sgxattacks-sgxpectre,vanbulck2020lvi,crosstalk,MDS:attack:RIDL,cacheOut}) are also out of scope;
these attacks violate the threat model of SGX and are typically patched promptly by the vendor via microcode updates.
In the future, one could swap out Intel SGX in our implementation for upcoming enclaves such as MI6~\cite{mi6}  and Keystone~\cite{keystone} that address many of the above drawbacks of SGX.

\sys provides protection against {\em any channel of attack that exploits data-dependent access patterns} within our abstract enclave model, which represent a large class of known attacks on enclaves
(\eg cache attacks~\cite{sgxcache-gotzfried, sgxcache-brasser, sgxcache-schwarz, sgxcache-moghimi, sgxattacks-hahnel:cache}, branch prediction~\cite{sgxattacks-lee:branches}, paging-based attacks~\cite{sgxattacks-xu:pagefaults, bulck-sgxattack:pagefaults}, or memory bus snooping~\cite{membuster}).
We note that even if co-tenancy is disabled (which comes at considerable expense), privileged software such as the OS and hypervisor can still infer access patterns (\eg by monitoring page faults), thus still requiring data-oblivious solutions.

Recent work has shown side-channel leakage on GPUs ~\cite{GPU:Covert:Naghibijouybari:2017, GPU:Side:Naghibijouybari:2018, GPU:Timing:Jiang:2017, GPU:Timing:Jiang:2016} including the exploitation of data access patterns out of the GPU. We expect similar attacks to be mounted on GPU {\em enclaves} as video and ML workloads gain in popularity, and our threat model applies to GPU enclaves as well.

\ifpadding
\else
Finally, the TLS traffic from the camera leaks the variation in bitrate of the video to an attacker observing the network. 
Padding the video segments {\em at the camera} addresses this leakage \cite{Schuster:Video:Attack}, and this is complementary to \sys's threat model {\em in the cloud}.
For completeness, we measure the impact of such padding on \sys's performance in \cref{s:padding}.

\fi

\subsection{Video Streams and CNN Model}
Each client owns its video streams, and it expects to protect its video from the cloud and co-tenants of the video analytics service. %
The vision algorithms are assumed to be public.

We assume that the CNN model's architecture is public, but its weights are private and may be proprietary to either the client or the cloud service.
\sys protects the weights in both scenarios within enclaves, in accordance with the threat model and guarantees from \cref{s:threatmodel:attacks};
however, when the weights are proprietary to the cloud service, the client may be able to learn some information about the weights by analyzing the results of the pipeline~\cite{ML:model:predictionAPI:Tramer:2016,ML:inversion:Fredrikson:2015,ML:inversion:pharma:Fredrikson:2014}.
Such attacks are out of scope for Visor.

\drop{
    The CNN model has two deployment scenarios. In both scenarios, we assume that the model's architecture is public. 
    In one scenario, the client owns the model weights and analyzes the videos on the cloud due to the availability of richer compute that is unavailable and expensive to provision on-premise. However, the client wishes to conceal the weights from the cloud provider. 
    In the other scenario, the cloud provider owns the model weights and wishes to conceal anything about the weights \emph{from the clients}, beyond what can be inferred from the model's results. \sys protects both the deployment scenarios for CNNs.
    Protecting against attacks~\mbox{\cite{Abadi:MLdefense:DP, Nasr:MLdefense:Regularization:2018, Iyengar:MLdefense:DP:2019}}  that use a model's results to extract its weights~\mbox{\cite{ML:model:predictionAPI:Tramer:2016, ML:model:reverse:Oh:2018, ML:model:hyperparameters:Wang:2018}} or its training data~\mbox{\cite{ML:data:membership:Shokri:2017, ML:data:membership:Salem:2019, ML:data:remember:Song:2017, ML:inversion:Fredrikson:2015, ML:inversion:pharma:Fredrikson:2014}}, is complementary to \sys. 
}

\ifpadding
    Finally, recent work has shown that the camera's encrypted network traffic leaks the video's bitrate variation to an attacker observing the network~\cite{Schuster:Video:Attack}, which may consequently leak information about the video contents.
    \sys eliminates this leakage by padding the video segments {\em at the camera}, in such a way that optimizes the latency of decoding the padded stream at the cloud~(\S\ref{s:padding}).
\fi

\subsection{Provable Guarantees for Data-Obliviousness} %

\sys provides {\em data-obliviousness}
within our abstract enclave model from \cref{s:threatmodel:attacks}, which guarantees that the memory access patterns of enclave code does not reveal any information about sensitive data.
We rely on the enclaves themselves to provide integrity, along with authenticated encryption.

 We formulate the guarantees of data-obliviousness using the ``simulation paradigm'' \cite{Goldreich2004:Vol1}.  
First, we define a {\em trace of observations} that the attacker sees in our threat model. Then, we define the {\em public information}, \ie information we do not attempt to hide and is known to the attacker. Using these, we argue that there exists a simulator, such that for all videos $V$, when given {\em only} the public information (about $V$ and the video algorithms), the simulator can produce a trace that is indistinguishable from the real trace visible to an attacker who observes the access patterns during \sys's processing of $V$. By ``indistinguishable'', we mean that no polynomial-time attacker can distinguish between the simulated trace and the real trace observed by the attacker. %
The fact that a simulator can produce the same observations as  seen by the  attacker {\em even without knowing the private data in the video stream} implies that the attacker does not learn sensitive data about the video.

In our attacker model, the trace of observations is the sequence of the addresses of memory references to code as well as data, along with the accessed data (which is encrypted).
The public information is all of \sys's algorithms, formatting and sizing information, but not the video data. 
For efficiency, \sys also takes as input some public parameters that represent various upper bounds on the properties of the video streams, \eg, the maximum number of objects per frame, or upper bounds on object dimensions.

\ifextended
In \cref{s:proofs}, we provide a formal definition of data-obliviousness (\cref{def:oblivious}); a summary of public information for each algorithm; and proofs of security along with detailed pseudocode for each algorithm.
Since \sys's data-oblivious algorithms (\cref{s:decoding} and \cref{s:algorithms}) follow an {\em identical sequence of memory accesses} that depend only on public information and are {\em independent} of data content, our proofs are easy to verify. %
\else
We defer a formal treatment of \sys's security guarantees---including the definitions and proofs of security, along with detailed pseudocode for each algorithm---to an extended appendix~\cite{visor:extended}.
In summary, we show that \sys's data-oblivious algorithms (\cref{s:decoding} and \cref{s:algorithms}) follow an \emph{identical sequence of memory accesses} that depend only on public information and are \emph{independent} of data content. %
\fi

\section{A Privacy-Preserving MLaaS Framework}
\label{s:system}

In this section, we present a privacy-preserving framework for machine-learning-as-a-service (MLaaS), that supports CNN-based ML applications spanning both CPU and GPU resources.
Though \sys focuses on protecting video analytics pipelines, our framework can more broadly be used for a range of MLaaS applications such as medical imaging, recommendation systems, and financial forecasting. 

Our framework comprises three key features that collectively enable data-oblivious execution of ML services. First, it protects the computation in ML pipelines using a \emph{hybrid} TEE that spans both the CPU and GPU. 
Second, it provides a secure CPU-GPU communication channel that additionally prevents the leakage of information via traffic patterns in the channel.
Third, it prevents access-pattern-based leakage on the CPU and GPU by facilitating the development of data-oblivious modules using a suite of optimized primitives.

\begin{figure} [t!]
    \centering
    \includegraphics[width=\linewidth]{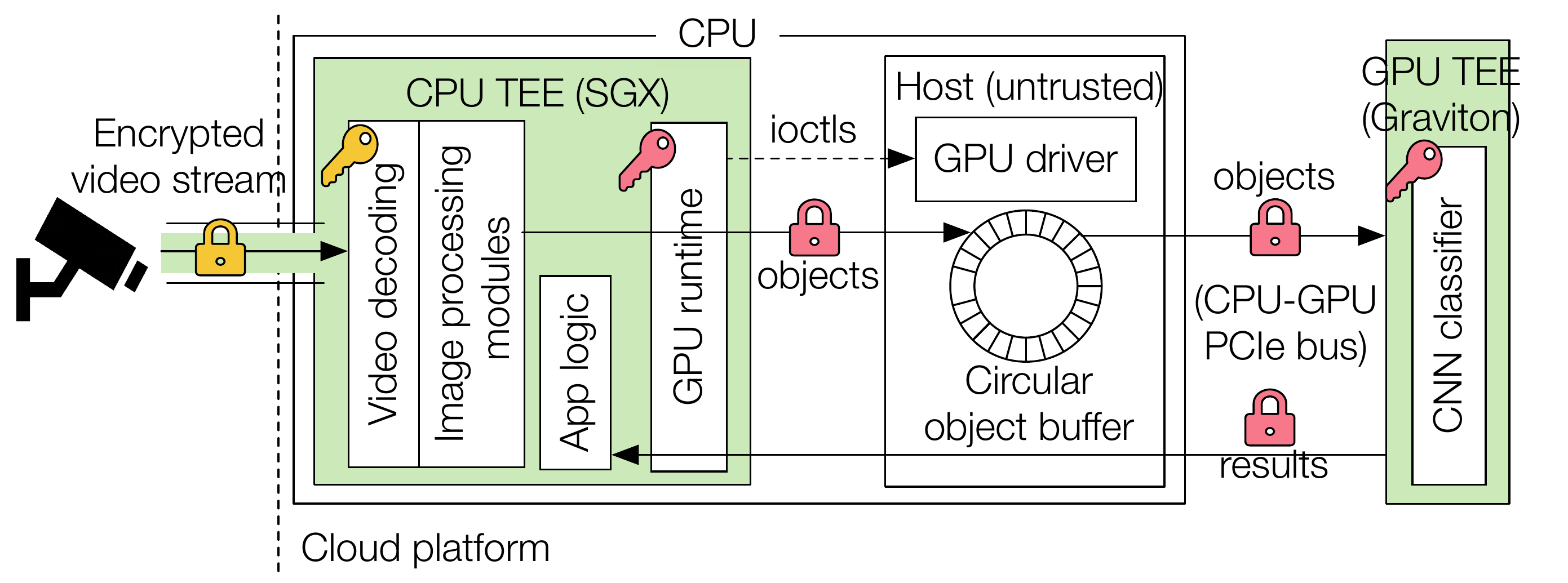}
    \caption{\sys's hybrid TEE architecture. Locks indicate encrypted data channels, and keys indicate decryption points.}
    \label{fig:architecture}
\end{figure}

\subsection{Hybrid TEE Architecture} \label{s:system:architecture}
\cref{fig:architecture} shows \sys's architecture. \sys receives encrypted video streams from the client's camera, which are then fed to the video processing pipeline. We refer to the architecture as a {\em hybrid} TEE as it spans both the CPU and GPU TEEs, with different modules of the video pipeline (\cref{s:model}) being placed across these TEEs.
We follow the example of prior work that has shown that running the non-CNN modules of the pipeline on the CPU, and the CNNs on the GPU \cite{Focus:OSDI18, Scanner:Poms:2018, MS:Rocket:web}, results in efficient use of the expensive GPU resources while still keeping up with the incoming frame rate of videos. %

Regardless of the placement of modules across the CPU and GPU, we note that attacks based on data access patterns can be mounted on {\em both} CPU and GPU TEEs, as explained in \cref{s:threatmodel:attacks}. 
As such, our data-oblivious algorithms and techniques are broadly applicable irrespective of the placement,
though our description is based on non-CNN modules running on the CPU and the CNNs on the GPU.
\drop{
Nonetheless, the placement of modules across the CPU and GPU is orthogonal to \sys. As explained in \cref{s:threatmodel:attacks}, attacks based on data access patterns can be mounted on {\em both} CPU and GPU enclaves, and hence our data-oblivious modules and techniques are applicable regardless of the placement. That said, our description is based on non-CNN modules running on the CPU and the CNNs on the GPU.
}

\paragraph{CPU and GPU TEEs}
We implement the CPU TEE using Intel SGX enclaves, and the GPU TEE using Graviton secure contexts \cite{Graviton:OSDI18}. 
The CPU TEE also runs Graviton's trusted GPU runtime, which enables \sys to securely bootstrap the GPU TEE and establish a single trust domain across the TEEs.
The GPU runtime talks to the untrusted GPU driver (running on the host outside the CPU TEE) to manage resources on the GPU via \code{ioctl} calls. In Graviton, each \code{ioctl} call is translated to a sequence of commands submitted to the command processor. Graviton ensures {\em secure} command submission (and subsequently \code{ioctl} delivery) as follows: 
 \begin{enumerate*}[($i$)]
 \item for task submission, the runtime uses authenticated encryption to protect commands from being dropped, replayed, or reordered, and 
 \item for resource management, the runtime validates signed summaries returned by the GPU upon completion.
 \end{enumerate*}
The GPU runtime {\em encrypts all inter-TEE communication}.

We port the non-CNN video modules (\cref{fig:pipeline}) to SGX enclaves using the Graphene LibOS~\cite{Graphene:SGX:Tsai:2017}.
In doing so, we instrument Graphene to support the \code{ioctl} calls that are used by the runtime to communicate with the GPU driver. %

\paragraph{Pipeline execution} %
The hybrid architecture requires us to protect against attacks on the CPU TEE, GPU TEE, and the CPU-GPU channel. 
As \cref{fig:architecture} illustrates, \sys decrypts the video stream inside the CPU TEE, and obliviously decodes out each frame (in \cref{s:decoding}). 
\sys then processes the decoded frames using oblivious vision algorithms to extract objects from each frame (in \cref{s:algorithms}).
\sys extracts the \emph{same} number of objects of \emph{identical dimensions} from each frame (some of which are dummies, up to an upper-bound) and feeds them into a circular buffer. 
This avoids leaking the \emph{actual} number of objects in each frame and their \emph{sizes}; the attacker can observe accesses to the buffer, even though objects are encrypted. 
Objects are dequeued from the buffer and sent to the GPU (\cref{s:system:communication}) where they are decrypted and processed obliviously by the CNN in the GPU TEE (\cref{s:algorithms:cnn}).

\subsection{CPU-GPU Communication} \label{s:system:communication}

Although the CPU-GPU channel in \cref{fig:architecture} transfers encrypted objects, \sys needs to ensure that its traffic patterns are independent of the video content.
Otherwise, an attacker observing the channel can infer the processing rate of objects, and hence the number (and size) of the detected objects in each frame. 
To address this leakage, \sys ensures that 
\begin{enumerate*}[($i$)] 
\item the CPU TEE transfers the same number of objects to the GPU per frame, and 
\item CNN inference runs at a fixed rate (or batch size) in the GPU TEE.
\end{enumerate*} %
Crucially, \sys ensures that the CNN processes as few {\em dummy objects} as possible. While our description focuses on \cref{fig:pipeline:resnet} to hide the processing rate of {\em objects of a frame} on the GPU, our techniques directly apply to the pipeline of \cref{fig:pipeline:yolo} to hide the processing rate of complete frames using {\em dummy frames}.

Since the CPU TEE already extracts a fixed number of objects per frame (say $k_\code{max})$ for obliviousness, we enforce an inference rate of $k_\code{max}$ for the CNN as well, regardless of the number of \emph{actual} objects in each frame (say $k$).
The upper bound $k_\code{max}$ is easy to learn for each video stream in practice.
However, this leads to a wastage of GPU resources, which must now also run inference on $(k_\code{max} - k)$ dummy objects per frame. 
To limit this wastage, we develop an oblivious protocol that leads to processing as few dummy objects as possible.

\paragraph{Oblivious protocol} \sys runs CNN inference on $k^\prime (<< k_\code{max})$ objects per frame.  %
\sys's CPU pipeline extracts $k_\code{max}$ objects from each frame (extracting dummy objects if needed) and pushes them into the head of the circular buffer (\cref{fig:architecture}). %
At a fixed rate (\eg once per frame, or every 33ms for a 30fps video), $k^\prime$ objects are dequeued from the \emph{tail} of the buffer and sent to the GPU that runs inference on all $k^\prime$ objects.

We reduce the number of dummy objects processed by the GPU as follows. %
We sort the buffer using \osort in ascending order of ``priority'' values (dummy objects are assigned lower priority), thus moving dummy objects to the \emph{head} of the buffer and actual objects to the \emph{tail}. Dequeuing from the tail of the buffer ensures that actual objects are processed first, and that %
dummy objects at the head of the buffer are likely {\em overwritten} %
before being sent to the GPU. 
The circular buffer's size is set large enough to avoid overwriting actual objects.%

The consumption (or inference) rate $k^\prime$ should be set relative to the actual number of objects that occur in the frames of the video stream. Too high a value of $k^\prime$ results in GPU wastage due to dummy inferences, while too low a value leads to delay in the processing of the objects in the frame (and potentially overwriting them in the circular buffer). In our experiments, we use a value of $k^\prime = 2 \times k_\code{avg}$ ($k_\code{avg}$ is the average number of objects in a frame) that leads to little delay and wastage.

\paragraph{Bandwidth consumption}
The increase in traffic on the CPU-GPU PCIe bus (\cref{fig:architecture}) due to additional dummy objects for obliviousness is not an issue because the bus is not bandwidth-constrained.
Even with \sys's oblivious video pipelines, we measure the data rate to be $<$\SI{70}{\mega\byte/\second}, in contrast to the several \SI{}{\giga\byte/\second} available in PCIe interconnects.

\subsection{CNN Classification on the GPU}\label{s:algorithms:cnn}
The CNN processes identically-sized objects at a fixed rate on the GPU.
The vast majority of CNN operations, such as matrix multiplications, have inherently input-independent access patterns~\cite{Ohrimenko:ObliviousML, Privado:ObliviousML}. 
The operations that are \emph{not} oblivious can be categorized as conditional assignments. 
For instance, the ReLU function, when given an input $x$, replaces $x$ with $max($0$, $x$)$; likewise, the max-pooling layer replaces each value within a square input array with its maximum value.

Oblivious implementation of the $max$ operator may use CUDA \code{max}/\code{fmax} intrinsics for integers/ floats, which get compiled to \code{IMNMX}/\code{FMNMX} instructions~\cite{Cuda:ISA} that execute the $max$ operation branchlessly. This ensures that the code is free of data-dependent accesses, making CNN inference oblivious. %

\subsection{Oblivious Modules on the CPU}\label{s:system:cpu}
After providing a data-oblivious CPU-GPU channel and CNN execution on the GPU, we address the video modules (in \cref{fig:pipeline}) that execute on the CPU. 
We carefully craft oblivious versions of the video modules using novel efficient algorithms (which we describe in the subsequent sections).
To implement our algorithms, we use a set of oblivious primitives which we summarize below.

\paragraph{Oblivious primitives}
\label{s:background:primitives}
We use three basic primitives, similar to prior work~\cite{Raccoon, Ohrimenko:ObliviousML, Zerotrace:Sasy}.
Fundamental to these primitives is the x86 \code{CMOV} instruction, which takes as input two registers---a source and a destination---and moves the source to the destination if a condition is true.
Once the operands have been loaded into registers, the instructions are immune to memory-access-based pattern leakage because registers are private to the processor, making any register-to-register operations oblivious by default.

\paragraph{1) Oblivious assignment \code{(\oselect)}}
The \oselect primitive is a wrapper around the \code{CMOV} instruction that conditionally assigns a value to the destination operand. This primitive can be used for performing dummy write operations by simply setting the input condition to false. We implement multiple versions of this primitive for different integer sizes. We also implement a vectorized version using SIMD instructions.%

\paragraph{2) Oblivious sort \code{(\osort)}}
The \osort primitive obliviously sorts an array with the help of a bitonic sorting network~\cite{BatcherSort}.
Given an input array of size $n$, the network sorts the array by performing $O(n \log^2(n))$ compare-and-swap operations, which can be implemented using the \oselect primitive. As the network layout is fixed given the input size $n$, execution of each network has identical memory access patterns.

\paragraph{3) Oblivious array access \code{(\oarr)}}
The \oarr primitive accesses the $i$-th element in an array, without leaking the value of $i$.
The simplest way of implementing \oarr is to scan the entire array. %
However, as discussed in our threat model (\cref{s:threatmodel:attacks}), %
hyperthreading is disabled, preventing any sharing of intra-core resources (e.g., L1 cache) with an adversary, and consequently mitigating known attacks~\cite{sgxattacks:Memjam, sgxattacks:CacheBleed} that can leak access patterns at sub-cache-line granularity using shared intra-core resources. 
Therefore, we assume access pattern leakage at the granularity of cache lines, and it suffices for \oarr to scan the array at cache-line granularity for obliviousness, instead of per element or byte.

\section{Designing Oblivious Vision Modules}
\label{s:obl_overview}

Na\"ive approaches and generic tools for oblivious execution of vision modules can lead to prohibitive performance overheads. For instance, a na\"ive approach for implementing oblivious versions of CPU video analytics modules (as in \cref{fig:pipeline}) is to simply rewrite them using the oblivious primitives outlined in \cref{s:background:primitives}.
Such an approach:
\begin{enumerate*}[(i)]
\item eliminates all branches and replaces conditional statements with \oselect operations to prevent control flow leakage via access patterns to code,
\item implements all array accesses via \oarr to prevent leakage via memory accesses to data, and
\item performs all iterations for a fixed number of times while executing dummy operations when needed.
\end{enumerate*}
The simplicity of this approach, however, comes at the cost of high overheads: two to three orders of magnitude.
Furthermore, as we show in \cref{s:evaluation:related}, generic tools for executing programs obliviously such as Raccoon~\cite{Raccoon} and Obfuscuro~\cite{Obfuscuro} also have massive overheads---six to seven orders of magnitude.

Instead, we demonstrate that by carefully crafting oblivious vision modules using the primitives outlined in \cref{s:background:primitives}, \sys improves performance over na\"ive approaches by several orders of magnitude.
In the remainder of this section, we present an overview of our design strategy, before diving into the detailed design of our algorithms in \cref{s:decoding} and \cref{s:algorithms}.

\subsection{Design Strategy}
Our overarching goal is to transform each algorithm into a pattern that processes each pixel identically, regardless of the pixel's value.
To apply this design pattern efficiently, we devise a set of algorithmic and systemic optimization strategies.
These strategies are informed by the properties of vision modules, as follows.

\paragraph{1) Divide-and-conquer for improving performance}
We break down each vision algorithm into independent subroutines based on their functionality and make each subroutine oblivious individually. %
Intuitively, this strategy improves performance by
\begin{enumerate*}[(i)]
    \item allowing us to tailor each subroutine separately, and 
    \item preventing the overheads of obliviousness from getting compounded.
\end{enumerate*}

\paragraph{2) Scan-based sequential processing}
Data-oblivious processing of images demands that each pixel in the image be indistinguishable from the others.
This requirement presents an opportunity to revisit the design of sequential image processing algorithms.
Instead of simply rewriting existing algorithms using the data-oblivious primitives from \cref{s:background:primitives},
 
we find that recasting the algorithm into a form that scans the image, while applying the same functionality to each pixel, yields superior performance.
Intuitively, this is because any non-sequential pixel access implicitly requires a scan of the image for obliviousness (\eg using \oarr); therefore, by transforming the algorithm into a scan-based algorithm, we get rid of such non-sequential accesses.

\paragraph{3) Amortize cost across groups of pixels}
Processing each pixel in an identical manner lends itself naturally to optimization strategies that enable batched computation over pixels---\eg the use of data-parallel (SIMD) instructions.

\smallskip
\noindent In \sys, we follow the general strategy above to design oblivious versions of popular vision modules that can be composed and reused across diverse pipelines. %
However, our strategy can potentially help inform the design of other oblivious vision modules as well, beyond the ones we consider.

\subsection{Input Parameters for Oblivious Algorithms}\label{s:system:parameters}
Our oblivious algorithms rely on a set of public input parameters that need to be provided to \sys before the deployment of the video pipelines.
These parameters represent various upper bounds on the properties of the video stream, such as the maximum number of objects per frame, or the maximum size of each object.

\cref{table:parameters} summarizes the list of input parameters across all the modules of the vision pipeline.

\begin{figure}
\centering
\small
\begin{tabular}[t]{p{2.9cm}|p{4.7cm}}
\thickhline
Component & Input parameters \T\B \\
			  \thickhline
			  Video decoding (\cref{s:decoding}) &
			  Number of bits used to encode each (padded) row of blocks;
			  \\\hline
			  Background sub. (\cref{s:algorithms:bgs}) & -- \\\hline
			  Bounding box detection (\cref{s:algorithms:objdet}) &  
			  \begin{enumerate*}[($i$)]
			  \item Maximum number of objects per image; \item Maximum number of different labels that can be assigned to pixels (an object consists of all labels that are adjacent to each other).
			  \end{enumerate*}\\\hline
			  Object cropping (\cref{s:algorithms:cropping}) & 
			  Upper bounds on object dimensions.
			  \\\hline
			  Object tracking (\cref{s:algorithms:tracking}) & 
			  \begin{enumerate*}[($i$)]
			  \item An upper bound on the intermediate number of features;
			  \item An upper bound on the total number of features.
			  \end{enumerate*}\\\hline
			  CNN Inference (\cref{s:algorithms:cnn}) & -- \\%\hline
				  \thickhline
				  \end{tabular}
				  \caption{Public input parameters in \sys's oblivious modules.
				  }
\label{table:parameters}
\end{figure}

There are multiple ways by which these parameters may be determined.
\begin{enumerate*}[($i$)]
\item The model owner may obtain these parameters simultaneously while training the model on a public dataset.
\item The client may perform offline empirical analysis of their video streams and choose a reasonable set of parameters.
\item \sys may also be augmented to compute these parameters dynamically, based on historical data (though we do not implement this).
\end{enumerate*}
We note that providing these parameters is not strictly necessary, but meaningful parameters can significantly improve the performance of our algorithms.

\section{Oblivious Video Decoding}\label{s:decoding}

Video encoding converts a sequence of raw images, called \emph{frames}, into a compressed bitstream. Frames are of two types: \emph{keyframes} and \emph{interframes}. Keyframes are encoded %
to only exploit redundancy across pixels {\em within the same frame}. Interframes, on the other hand, use the prior frame as reference (or the most recent keyframe), and thus can exploit temporal redundancy in pixels {\em across frames}. 

\paragraph{Encoding overview} We ground our discussion using the VP8 encoder \cite{VP8Overview}, but our techniques are broadly applicable. 
A frame is decomposed into square arrays of pixels called {\em blocks}, and then compressed using the following steps (see \cref{fig:encoding}).
\bubble{1} An estimate of the block is first \emph{predicted} using reference pixels (in a previous frame if interframe or the current frame if keyframe). The prediction is then subtracted from the actual block to obtain a \emph{residue}. 
\bubble{2} 
Each block in the residue is \emph{transformed} into the frequency domain (\eg using a discrete cosine transform), and its coefficients are \emph{quantized} %
thus improving compression. 
\iffull
At the end of this step, each block comprises a sequence of 16 data values, the last several of which are typically zeros as the quantization factors for the later coefficients are larger than those of the initial ones.
\fi
\bubble{3} Each (quantized) block is compressed into a variable-sized bitstream using a binary prefix tree and arithmetic encoding. %
\iffull
The last few coefficients that are zeros are not encoded, and an end-of-block symbol (EOB) is encoded instead.
\fi
Block prediction modes, cosine transformation, and arithmetic encoding are core to all video encoders (\eg H264~\cite{H264}, VP9~\cite{VP9}) and thus our oblivious techniques carry over to all popular codecs.

\begin{figure}[t!]
    \centering
    \includegraphics[width=0.8\linewidth]{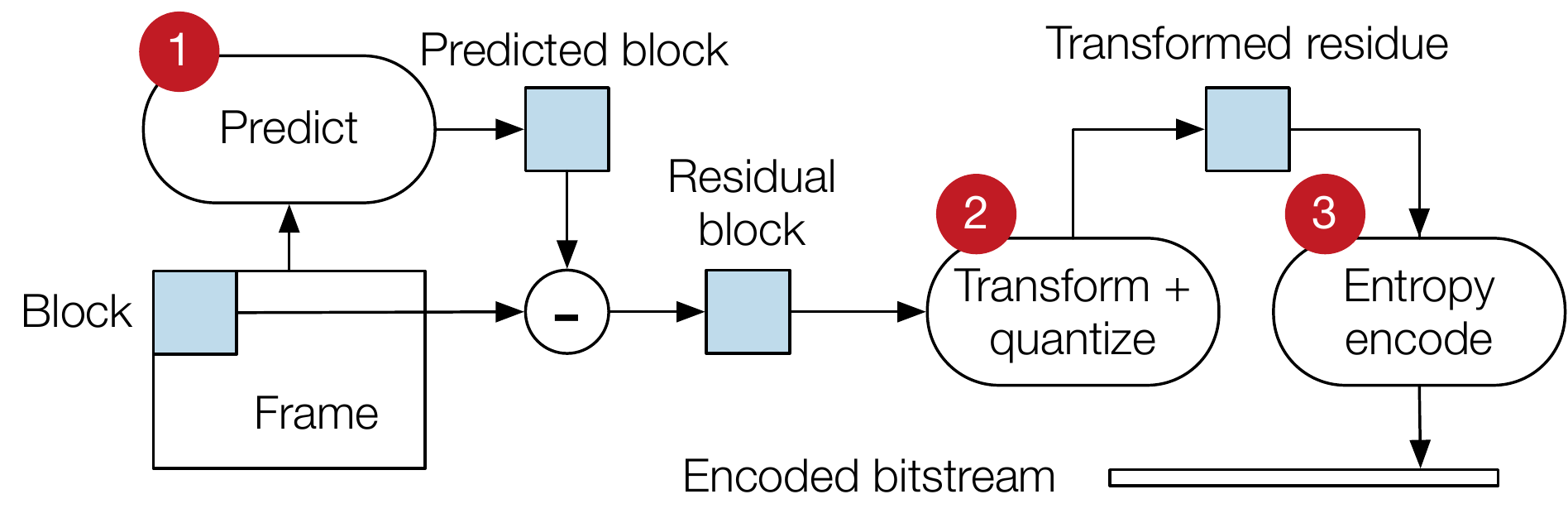}
    \caption{Flowchart of the encoding process.}
    \label{fig:encoding}
\end{figure}

The {\em decoder} reverses the steps of the encoder: 
 \begin{enumerate*}[($i$)]
 \item the incoming video bitstream is entropy decoded (\cref{s:decoding:bitstream}); 
 \item the resulting coefficients are dequantized and inverse transformed to obtain the residual block (\cref{s:decoding:transformation}); and 
 \item previously decoded pixels are used as reference to obtain a prediction block, which are then added to the residue (\cref{s:decoding:prediction}).
 \end{enumerate*}
\ifextended
\drop{
While our explanation here is simplified, 
\cref{s:proofs:decoder} provides detailed pseudocode and proofs of obliviousness for the decoder.%
}
\else
Our explanation here is simplified; we defer detailed pseudocode along with security proofs to an extended appendix~\cite{visor:extended}.%
\fi

\ifpadding
\subsection{Video Encoder Padding}\label{s:padding}
While the video stream is in transit, the bitrate variation of each frame is visible to an attacker observing the network even if the traffic is TLS-encrypted. This variability can be exploited for fingerprinting video streams~\cite{Schuster:Video:Attack} and understanding its content. %
Overcoming this leakage requires changes to the video {\em encoder} to ``pad'' each frame with dummy bits to an upper bound before sending the stream to \sys. %

We modify the video encoder to pad the encoded video streams. However, instead of applying padding at the level of frames, we pad each individual \emph{row of blocks} within the frames. Compared to frame-level padding, padding individual rows of blocks significantly improves latency of oblivious decoding, but at the cost of an increase in network bandwidth.

Padding the frames of the video stream, 
however, negates the benefit of using \emph{interframes} during encoding of the raw video stream, which are typically much smaller than keyframes. %
We therefore configure the encoder to encode all raw video frames into keyframes, which eliminates the added complexity of dealing with interframes, and consequently simplifies the oblivious decoding procedure.

We note that it may not always be possible to modify legacy cameras to incorporate padding.
In such cases, potential solutions include the deployment of a lightweight edge-compute device that pads input camera feeds before streaming them to the cloud.
For completeness, we also discuss the impact of the lack of padding in \cref{s:appendix:padding}, along with the accompanying security-performance tradeoff.
\fi

\begin{figure*}[t!]
	\centering
	\subfigure[\bf CCL-based algorithm for bounding box detection]{
		\includegraphics[width=0.46\textwidth]{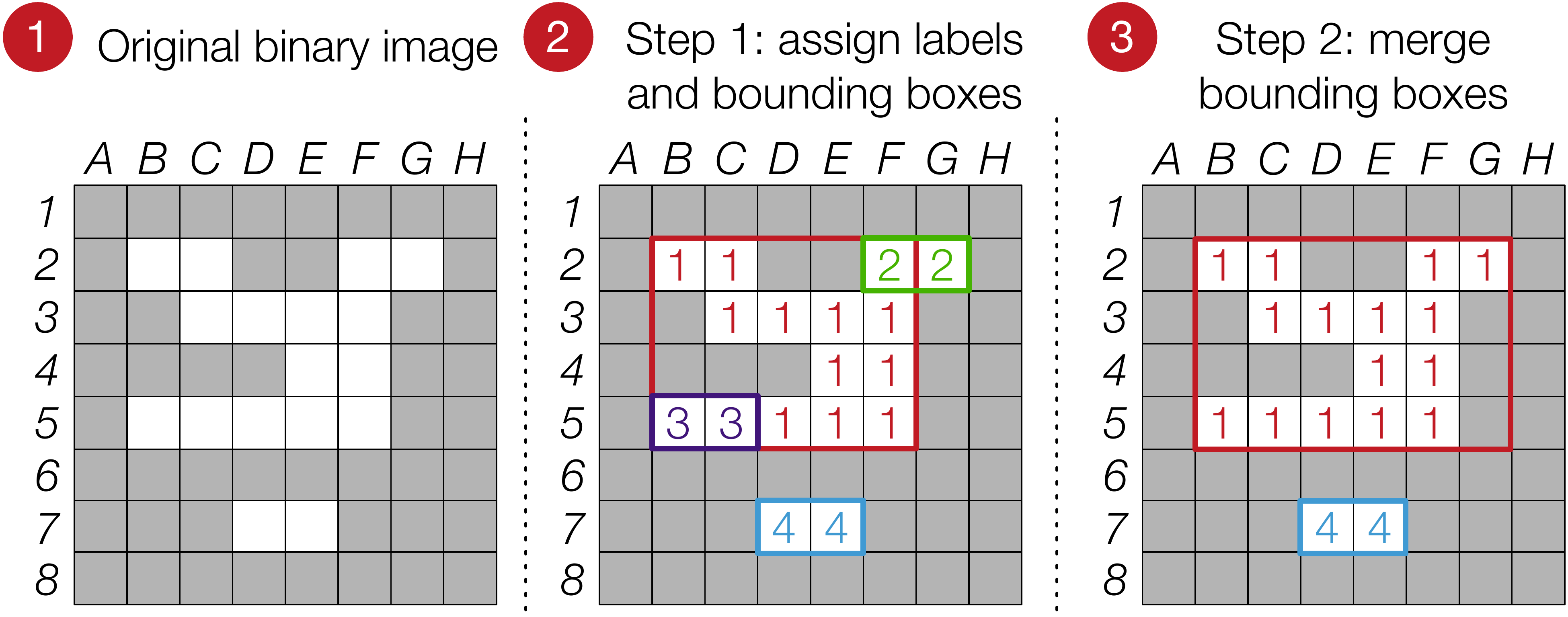}
	\label{fig:ccl}}
	\hspace{.25in}
	\subfigure[\bf Enhancement via parallelization]{
		\includegraphics[width=0.46\textwidth]{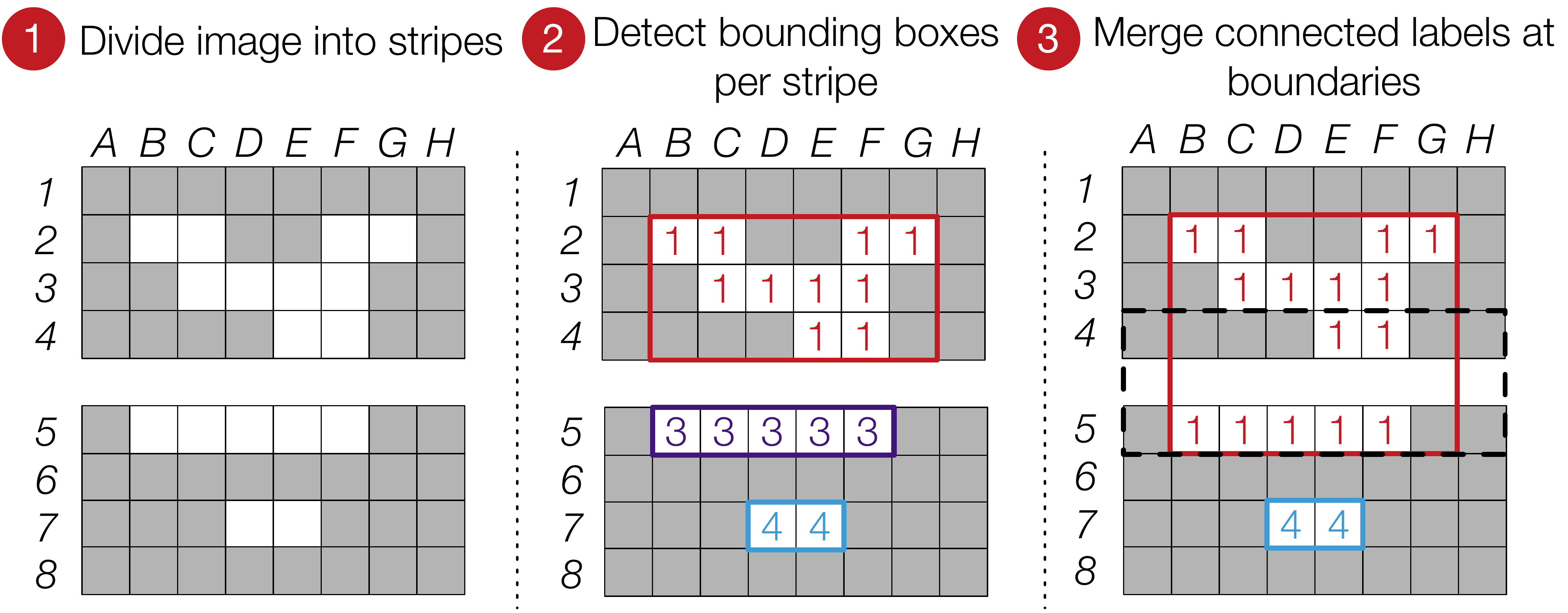}
	\label{fig:ccl:parallel}}
	\caption{Oblivious bounding box detection}
\end{figure*}

\subsection{Bitstream Decoding}
\label{s:decoding:bitstream}

The bitstream decoder reconstructs blocks with the help of a \emph{prefix tree}.
At each node in the tree it decodes a single bit from the compressed bitstream via arithmetic decoding, and traverses the tree based on the value of the bit. 
While decoding the bit, the decoder first checks whether any more bits can be decoded at the current bitstream position, and if not, it advances the bitstream pointer by two bytes.
Once it reaches a leaf node, it outputs a coefficient based on the position of the leaf, and assigns the coefficient to the current pixel in the block. %
\iffull
If an EOB symbol is decoded, then all the coefficients remaining in the block are assigned a value of zero.
\fi
This continues for all the coefficients in the frame.%

\paragraph{Requirements for obliviousness}
The above algorithm leaks information about the compressed bitstream. First, the traversal of the tree leaks the \emph{value of the parsed coefficient}.
For obliviousness, we need to ensure that during traversal, the identity of the current node being processed remains secret. 
Second, not every position in the bitstream encodes the same number of coefficients, and the bitstream pointer advances variably during decoding. Hence, this leaks the \emph{number of coefficients} that are encoded per two-byte chunk (which may convey their values).
\iffull
Finally, the presence of EOB coefficients, coupled with the assignment of decoded coefficients to pixels, leaks the number of zero coefficients per block of the frame---prior work has demonstrated attacks that exploit similar leakage to infer the outlines of all objects in the frame~\cite{sgxattacks-xu:pagefaults}.
\fi
We design a solution that \emph{decouples} the parsing of coefficients, \ie prefix tree traversal (\cref{s:decoding:traversal}), from the assignment of the parsed coefficients to pixels (\cref{s:decoding:assignment}).

\subsubsection{Oblivious prefix tree traversal}\label{s:decoding:traversal}
A simple way to make tree traversal oblivious is to represent the prefix tree as an array.
We can then obliviously fetch any node in the tree using \oarr (\cref{s:background:primitives}).
Though this hides the identity of the fetched node, we need to also ensure that \emph{processing} of the nodes does not leak their identity.

In particular, we need to ensure that nodes are indistinguishable from each other by performing an identical set of operations at each node.
Unfortunately, this requirement is complicated by the following facts. (1)~Only leaf nodes in the tree produce outputs (\ie the parsed coefficients) and not the intermediate nodes. (2)~We do not know beforehand which nodes in the tree will cause the bitstream pointer to be advanced; at the same time, we need to ensure that the pointer is advanced predictably and independent of the bitstream.
To solve these problems, we take the following steps.
\begin{compactenumerate}
\item We modify each node to output a coefficient regardless of whether it is a leaf state or not. Leaves output the parsed coefficient, while other states output a dummy value.
\item We introduce a dummy node into the prefix tree. %
While traversing the tree, if no more bits can be decoded at the current bitstream position, we transition to the dummy node and perform a bounded number of dummy decodes.%
\end{compactenumerate}
These modifications ensure that while traversing the prefix tree, all that an attacker sees is that at \emph{some} node in the tree, a single bit was decoded and a single value was outputted. %

Note that in this phase, we do not assign coefficients to pixels, and instead collect them in a list. 
If we were to assign coefficients to pixels in this phase, then the decoder would need to obliviously scan the entire frame (using \oarr) at every node in the tree, in order to hide the pixel's identity. %
Instead, by \emph{decoupling} parsing from assignment, we are able to perform the assignment obliviously using a super-linear number of accesses (instead of quadratic), as we explain next.

\iffull
\begin{figure}
    \centering
    \includegraphics[width=\linewidth]{figs/coeffsort.pdf}
    \caption{Steps for obliviously sorting the coefficients into place after populating it with zero coefficients. For simplicity, this illustration assumes that there are two subblocks, with three coefficients per subblock.}
    \label{fig:coeffs}
\end{figure}
\fi

\subsubsection{Oblivious coefficient assignment}\label{s:decoding:assignment} 
At the end of \cref{s:decoding:traversal}, we have a list of actual and dummy coefficients. %
The key idea is that if we can obliviously sort this set of values using \osort such that  all the actual coefficients are contiguously ordered while all dummies are pushed to the front,
then we can simply read the coefficients off the end of the list sequentially and assign them to pixels one by one. %

\iffull
    However, recall that in lieu of the trailing zeros within each block, the encoder encodes an EOB symbol instead. Therefore, we need to append the requisite zeros to the set and move them to the appropriate indices before we can carry out the assignment.
    To achieve this, our algorithm makes a single forward pass over the set to add the zeros, while updating all index values per tuple in a way that ensures the zeros will be sorted to the correct positions, as illustrated in \cref{fig:coeffs}.
\fi

To enable such a sort, we modify the prefix tree traversal to additionally output a tuple $(\code{flag}, \code{index})$ per coefficient; \code{flag} is  0 for dummies and 1 otherwise; \code{index} is an increasing counter as per the pixel's index.
Then, the desired sort can be achieved by sorting the list based on the value of the tuple.

\ifpadding
As the complexity of oblivious sort is super-linear in the number of elements being sorted,
an important optimization is to decode and assign coefficients to pixels at the granularity of {\em rows of blocks} rather than frames.
While the number of bits per row of blocks may be observed, the algorithm's obliviousness is not affected as each row of blocks in the video stream is padded to an upper bound (\cref{s:padding}); had we applied frame-level padding, this optimization would have revealed the number of bits per row of blocks.
In \cref{s:eval:decoding}, we show that this technique improves oblivious decoding latency by \approx$6\times$.
\fi

\subsection{Dequantization and Inverse Transformation} \label{s:decoding:transformation}
The next step in the decoding process is to 
 \begin{enumerate*}[($i$)]
 \item dequantize the coefficients decoded from the bitstream, followed by 
 \item inverse transformation to obtain the residual blocks.
 \end{enumerate*}
Dequantization just multiplies each coefficient by a quantization factor. 
The inverse transformation also %
performs a set of identical arithmetic operations irrespective of the coefficient values.

\subsection{Block Prediction}\label{s:decoding:prediction}

Prediction is the final stage in decoding. The residual block obtained after \cref{s:decoding:transformation} is added to a {\em predicted block}, obtained using a previously constructed block as reference, to obtain the raw pixel values. 
In keyframes, each block is \emph{intra}-predicted---\ie it uses a block in the same frame as referenced.
\ifpadding
We do not discuss interframes because as described in \cref{s:padding}, the padded input video streams in \sys only contain keyframes.
\fi

Intra-predicted blocks are computed using one of several \emph{modes}. %
A mode to encode a block refers to a combination of pixels on its top row and left column used as reference. %
Obliviousness requires that the prediction mode %
remains private.
Otherwise, an attacker can identify the pixels that are most similar to each other, thus revealing details about the frame.

\iffull
We investigate two different ideas for making intra-prediction oblivious. 
First, we note that in each prediction mode, the value of a pixel in the predicted block can be expressed as a linear combination $\Sigma a_i p_i$ of all the pixels that lie above and to the left of the block.
Here $p_i$ represents the adjoining pixels and $a_i$ are weights. 
Thus, to compute the value of the predicted pixel obliviously, we can simply evaluate the expression after using the \oselect primitive to obliviously assign each $a_i$ a value based on the mode and the location of the current pixel. 

A second approach is to simply evaluate all possible predictions for the pixel and store them in an array, indexing each prediction by its mode.
Then, use the \oarr primitive to obliviously select the correct prediction from the array.

We implemented both approaches, and found that the second offers better performance in practice.
This is because in the second approach, we can compute the predicted values for several pixels simultaneously at the level of individual rows, which amortizes the cost of our operations. 
\fi

We make intra-prediction oblivious by evaluating all possible predictions for the pixel and storing them in an array, indexing each prediction by its mode. Then, we use \oarr to obliviously select the correct prediction from the array. %

\ifpadding
\else
\subsubsection{Inter-prediction for interframes} \label{s:decoding:inter}
Inter-predicted blocks use {\em previously decoded frames} as reference (either the previous frame, or the most recent keyframe). %
Obliviousness of inter-prediction requires that the reference block (which frame, and block's coordinates therein) remains private during decoding. Otherwise, an attacker observing access patterns during inter-prediction can discern the motion of objects across frames. Furthermore, some blocks even in interframes can be \emph{intra}-predicted for coding efficiency, and oblivious approaches need to conceal whether an interframe block is inter- or intra-predicted. 
A na\"ive, but inefficient, approach to achieve obliviousness is to access \emph{all blocks in possible reference frames} at least once---if any block is left untouched, its location its leaked to the attacker. %

We leverage empirical properties of video streams to make our oblivious solution efficient: 
 \begin{enumerate*}[($i$)]
 \item Most blocks in interframes are inter-predicted (\approx$99\%$ blocks in our streams); and 
 \item Coordinates of reference blocks are close to the coordinates of inter-predicted blocks (in a previous frame), \eg $90\%$ of blocks are radially within 1 to 3 blocks.
 \end{enumerate*}
These properties enable two optimizations. %
First, we assume every block in an interframe is inter-predicted.
Any error due to this assumption on intra-predicted blocks is minor in practice. 
Second, %
instead of scanning all blocks in prior frames, we only access blocks within a small distance of the current block.
If the reference block is indeed within this distance, we fetch it obliviously using \oarr; else, (in the rare cases) we use the block at the same coordinates in the previous frame as reference.
\fi

\section{Oblivious Image Processing}\label{s:algorithms}

\begin{figure*}[t!]
    \centering
    \subfigure[\bf Localizing objects.]{
        \centering
        \includegraphics[width=0.28\textwidth]{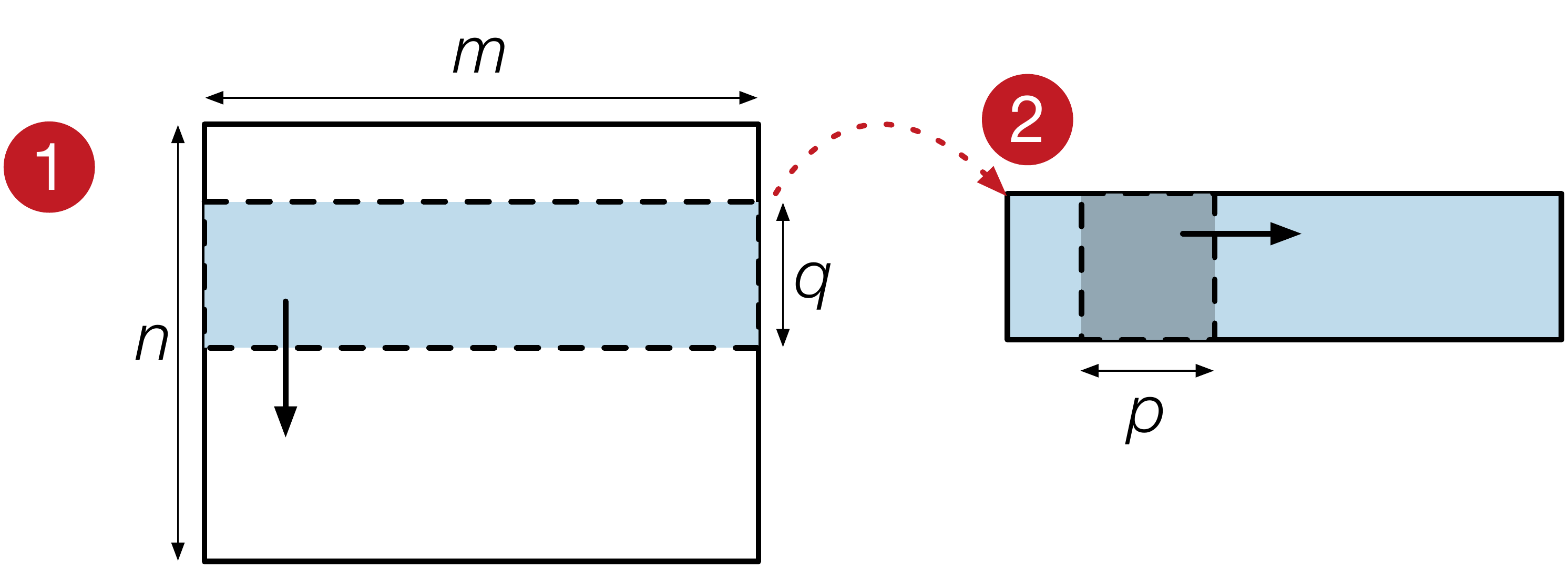}
    \label{fig:cropv1}}
    \subfigure[\bf Bilinear interpolation.]{
        \centering
        \includegraphics[width=0.29\textwidth]{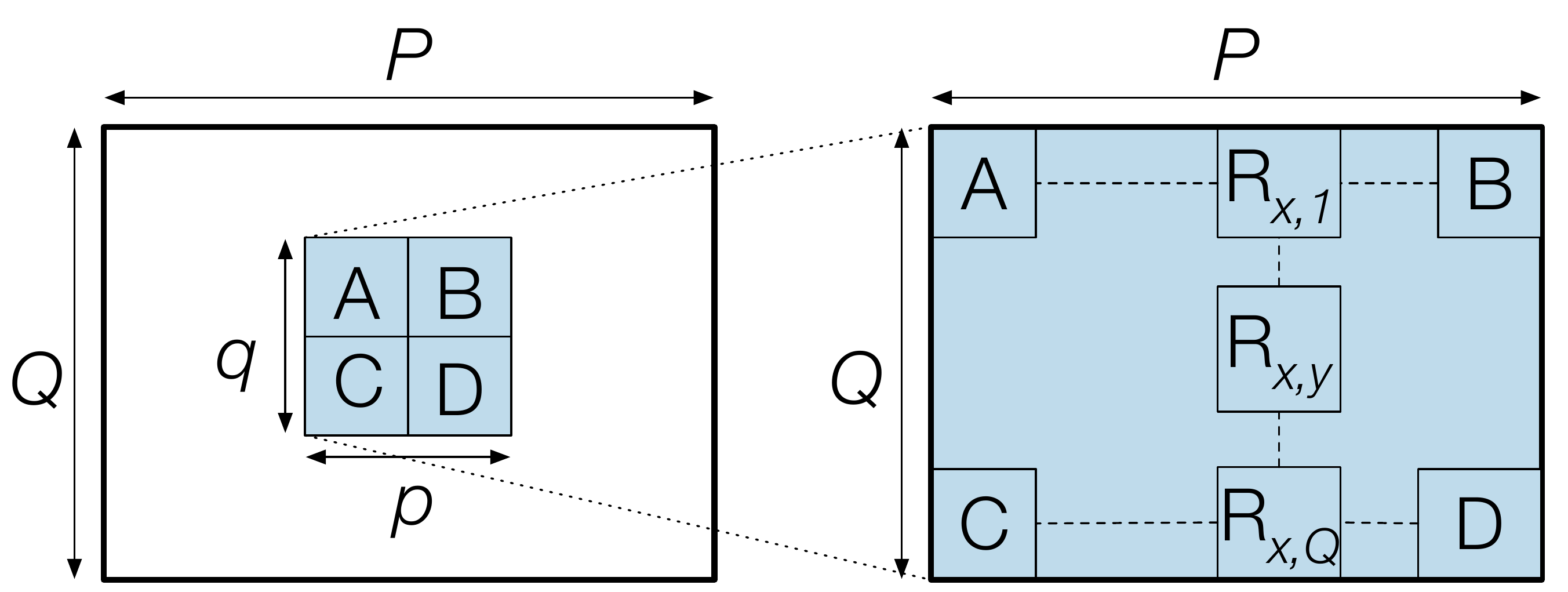}
        \label{fig:resize:v1}
    }
    \subfigure[\bf Improved Bilinear interpolation.]{
        \centering
        \includegraphics[width=0.34\textwidth]{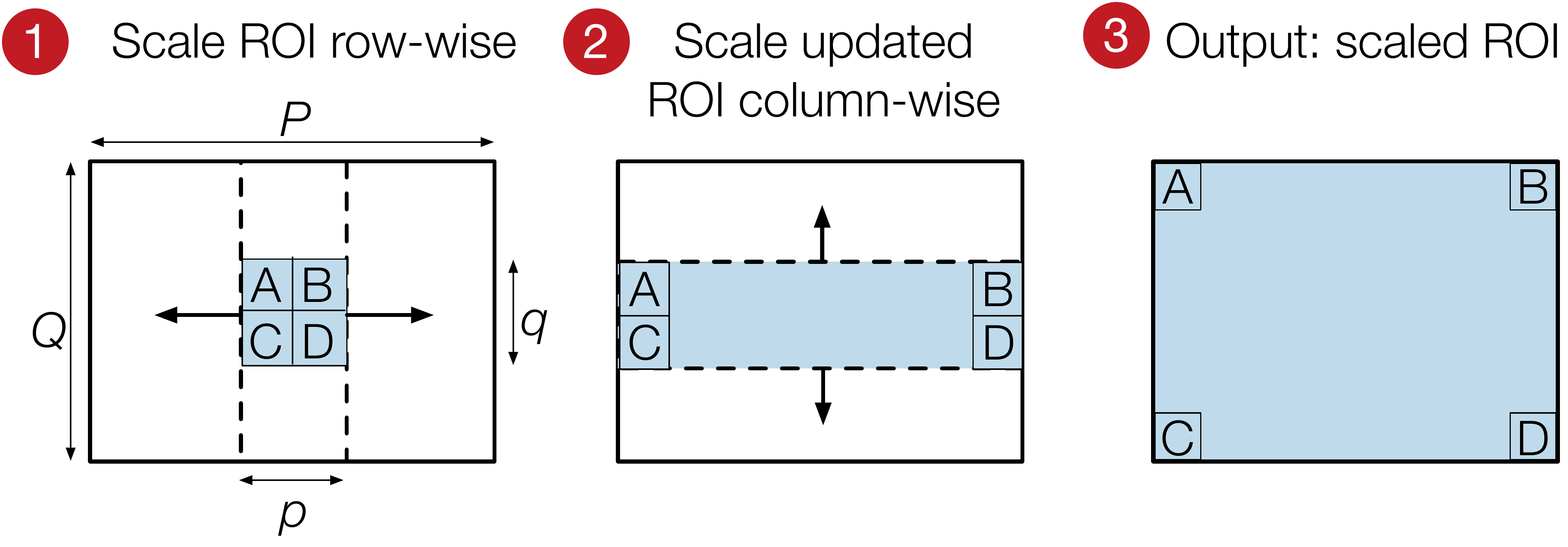}
        \label{fig:resize:v2}
    }
    \caption{Oblivious object cropping}
\end{figure*}

After obliviously decoding frames in \cref{s:decoding}, the next step as shown in \cref{fig:pipeline} is to develop data-oblivious techniques for background subtraction (\cref{s:algorithms:bgs}), bounding box detection (\cref{s:algorithms:objdet}), object cropping (\cref{s:algorithms:cropping}), and tracking (\cref{s:algorithms:tracking}).
We present the key ideas here; detailed pseudocode and proofs of obliviousness are available in 
\ifextended
\cref{s:proofs:algorithms}. 
\else
an extended appendix~\cite{visor:extended}.
\fi
Note that \cref{s:algorithms:bgs} and \cref{s:algorithms:tracking} modify popular algorithms to make them oblivious, while \cref{s:algorithms:objdet} and \cref{s:algorithms:cropping} propose new oblivious algorithms.

\subsection{Background Subtraction}\label{s:algorithms:bgs}
The goal of background subtraction is to detect moving objects in a video.
Specifically, it dynamically learns stationary pixels that belong to the video's background, and then subtracts them from each frame, %
thus producing a binary image with black background pixels and white foreground pixels.

Zivkovic \etal proposed a mechanism~\cite{MOG2:Zivkovic:2004,MOG2:Zivkovic:2006} that is widely used in practical deployments, that models each pixel %
as a mixture of Gaussians~\cite{MOG:Survey:2008}. The number of Gaussian components $M$ differs across pixels depending on their value (but is no more than $M_\code{max}$, a pre-defined constant). As more data arrives (with new frames), the algorithm updates each Gaussian component along with their weights ($\pi$), and adds new components if necessary.

To determine if a pixel $\vec{x}$ belongs to the background or not, the algorithm uses the $B$ Gaussian components with the largest weights and outputs true if $p(\vec{x})$ is larger than a threshold:
{
\setlength{\abovedisplayskip}{0pt}
\setlength{\belowdisplayskip}{0pt}
\begin{align*}
p(\vec{x}) = \sum^B_{m = 1}\pi_m\mathcal{N}(\vec{x}~|~\vec{\mu}_m,\Sigma_m)
\end{align*}
}
where
$\vec{\mu}_m$ and $\Sigma_m$ are parameters of the Gaussian components, and $\pi_m$ is the weight of the $m$-th Gaussian component.

This algorithm is not oblivious because it maintains a different number of Gaussian components per pixel, and thus performs different steps while updating the mixture model per pixel.
These differences are visible via access patterns, and these leakages reveal to an attacker how \emph{complex} a pixel is in relation to others---\ie whether a pixel's value stays stable over time or changes frequently. 
This enables the attacker to identify the positions of moving objects in the video.

For obliviousness, we need to perform an identical set of operations per pixel (regardless of their value); we thus 
{\em always} maintain $M_\code{max}$ Gaussian components for each pixel, of which $(M_\code{max} - M)$ are dummy components and assigned a weight $\pi=0$.
When newer frames arrive, we use \oselect operations to make all the updates to the mixture model, making dummy operations for the dummy components.
Similarly, to select the $B$ largest components by weight, %
we use the \osort primitive.%

\subsection{Bounding Box Detection}\label{s:algorithms:objdet}

The output from \S\ref{s:algorithms:bgs} is a binary image with black background pixels where the foreground objects are white blobs (\cref{fig:ccl}).
To find these objects, it suffices to find the {\em edge contours} of all blobs.
These are used to compute the \emph{bounding rectangular box} of each object. %
A standard approach for finding the contours in a binary image
is the border following algorithm of Suzuki and Abe~\cite{Suzuki:1985}. %
As the name suggests, the algorithm works by scanning the image until it locates an edge pixel, and then follows the edge around a blob. %
As \cref{fig:leakage} in \cref{s:background:attacks} illustrated, the memory access patterns of this algorithm leak the details of all the objects in the frame.%

A na\"ive way to make this algorithm oblivious is to implement each pixel access using the \oarr primitive (along with other minor modifications). However, we measure that this approach slows down the algorithm by over \approx$1200\times$.%

We devise a two-pass oblivious algorithm for computing bounding boxes by adapting the classical technique of connected component labeling (CCL)~\cite{Rosenfeld:1966}. The algorithm’s main steps are illustrated in \cref{fig:ccl} (whose original binary image contains two blobs). In the first pass, it scans the image and assigns each pixel a temporary label if it is ``connected'' to other pixels. %
In the second pass, it merges labels that are part of a single object. 
Even though CCL on its own is less efficient for detecting blobs than border following, it is far more amenable to being adapted for obliviousness.

We make this algorithm oblivious as follows.
First, we perform identical operations regardless of whether the current pixel is connected to other pixels. %
Second, for efficiency, we restrict the maximum number of temporary labels (in the first pass) to a parameter $N$ provided as input to \sys (per \cref{s:system:parameters}, \cref{table:parameters}). %
Note that the value of the parameter may be much lower than the worst case upper bound (which is the total number of pixels), and thus is more efficient.

\paragraph{Enhancement via parallelization}
We observe that the oblivious algorithm can be parallelized using a divide-and-conquer approach. %
We divide the frame into horizontal \emph{stripes} (\bubble{1} in \cref{fig:ccl:parallel}) and process {\em each stripe in parallel} (\bubble{2}). 
For objects that span stripe boundaries, each stripe outputs only a \emph{partial} bounding box containing the pixels within the stripe. We combine the partial boxes by re-applying the oblivious CCL algorithm to the boundaries of adjacent stripes (\bubble{3}). 
Given two adjacent stripes $S_{i}$ and $S_{i+1}$ one below the other, we compare each pixel in the top row of $S_{i+1}$ with its neighbors in the bottom row of $S_{i}$, and merge their labels as required.

\subsection{Object Cropping}\label{s:algorithms:cropping}
The next step after detecting bounding boxes of objects is to {\em crop} them out of the frame to be sent for CNN classification (\cref{fig:pipeline:resnet}). %
\sys needs to ensure that the cropping of objects does not leak \begin{enumerate*}[($i$)] \item their positions, or \item their dimensions.\end{enumerate*}

\subsubsection{Hiding object positions}\label{s:algorithms:cropping:positions}

A na\"ive way of obliviously cropping an object of size $p\times q$ is to slide a window (of size $p\times q$) horizontally %
in raster order, and copy the window's pixels %
if it aligns with the object's bounding box. Otherwise, perform a dummy copy. This, however, leads to a slow down of $4000\times$, with the major reason being redundant copies: while sliding the window forward by one pixel results in a new position in the frame, a majority of the pixels copied are the same as in the previous position. 
\iffull
For a $m\times n$ frame, and an object of size $p\times q$, the technique results in $pq(m-p)(n-q)$ pixel copies, as compared to $pq$ pixel copies when directly cropping the object.
\fi

We get rid of this redundancy by \emph{decoupling} the algorithm into multiple passes---one pass along each dimension of the image---such that each pass performs only a subset of the work. %
As \cref{fig:cropv1} shows, the first phase extracts the horizontal strip containing the object; the second phase extracts the object from the horizontal strip.

\bubble{1} Instead of sliding a window (of size $p \times q$) across the frame (of size $m \times n$), we use a horizontal strip of $m \times q$ that has width $m$ equal to that of the frame, and height $q$ equal to that of the object. 
We slide the strip vertically down the frame {\em row by row}. 
If the top and bottom edges of the strip are aligned with the object, we copy all pixels covered by the strip into the buffer; otherwise, we perform dummy copies. 
\iffull
This phase results in $mq(n-q)$ pixel copies.
\fi

\bubble{2} We allocate a window of size $p\times q$ equal to the object's size and then slide it {\em column by column} across the extracted strip in \bubble{1}. If the left and right edges of the window are aligned with the object's bounding box, we copy the window's pixels into the buffer; if not, we perform dummy copies.
\iffull
This phase performs $pq(m-p)$ pixel copies.
\fi

\ifextended
\cref{alg:crop} in \cref{s:proofs:algorithms} provides the detailed steps.
\fi

\subsubsection{Hiding object dimensions}\label{s:algorithms:cropping:size}
The algorithm in \cref{s:algorithms:cropping:positions} leaks the dimensions $p\times q$ of the objects.
To hide object dimensions, \sys takes as input parameters $P$ and $Q$ representing upper bounds on object dimensions (as described in \cref{s:system:parameters}, \cref{table:parameters}), 
and instead of cropping out the exact $p \times q$ object, %
we obliviously crop out a larger image of size $P \times Q$ that \emph{subsumes} the object. While the object sizes vary depending on their position in the frame (e.g., near or far from the camera), the maximum values ($P$ and $Q$) can be learned from profiling just a few sample minutes of the video, and they tend to remain unchanged in our datasets. 

This larger image now contains extraneous pixels surrounding the object, which might lead to errors during the CNN's object classification.
We remove the extraneous pixels surrounding the $p \times q$ object by obliviously {scaling it up} to fill the $P \times Q$ buffer. Note that all objects we send to the CNN across the CPU-GPU channel are of size $P \times Q$ (\cref{s:system:communication}), and recall from \cref{s:system:architecture} that we extract the same number of objects from each frame (by padding dummy objects, if needed).

We develop an oblivious routine for scaling up using bilinear interpolation%
~\cite{Fundamentals:Jain:1989}. 
 
Bilinear interpolation computes the value of a pixel in the scaled up image using a linear combination of a $2\times2$ array of pixels from the original image (see \cref{fig:resize:v1}). 
\iffull
The simplest way to implement this routine obliviously is to fetch the $4$ pixel values obliviously using \oarr for each pixel in the scaled up image. 
This would entail $PQ$ scans of the entire image,yielding a total of $O(P^2Q^2)$ pixel accesses.
\fi
We once again use decoupling of the algorithm into two passes to improve its efficiency (\cref{fig:resize:v2}) by scaling up along a single dimension per pass.
\iffull
The two passes perform a total of $O(P^2 Q + PQ^2)$ pixel accesses, improving  asymptotic performance over the $O(P^2Q^2)$ algorithm.
\fi

\paragraph{Cache locality}
Since the second pass of our (decoupled bilinear interpolation) algorithm performs column-wise interpolations, each pixel access during the interpolation touches a different cache line.
To exploit cache locality, we \emph{transpose} the image before the second pass, and make the second pass to also perform {\em row-wise} interpolations (as in the first pass).
This results in another order of magnitude speedup (\cref{s:eval:cropping}).

\subsection{Object Tracking}\label{s:algorithms:tracking}

Object tracking consists of two main steps: feature detection in {\em each frame} and feature matching {\em across frames}. 

\paragraph{Feature detection} SIFT \cite{SIFT:Lowe:1999,SIFT:Lowe:2004} is a popular algorithm for extracting features for {\em keypoints}, i.e., pixels that are the most ``valuable'' in the frame. In a nutshell, it generates candidate keypoints, where each candidate is a local maxima/minima; the candidates are then filtered to get the legitimate keypoints.%

Based on the access patterns of the SIFT algorithm, an attacker can infer the locations of all the keypoints in the image, which in turn, can reveal the location of all object ``corners'' in the image. A na\"ive way of making the algorithm oblivious is to treat each pixel as a keypoint, performing all the above operations for each. 
However, the SIFT algorithm's performance depends critically on its ability to filter out a small set of good keypoints from the frame.

To be oblivious {\em and} efficient, \sys takes as input two parameters $N_\code{temp}$ and $N$ (per \cref{table:parameters}).
    The parameter $N_\code{temp}$ represents an upper bound on the number of candidate keypoints, and $N$ on the number of legitimate keypoints. 
These parameters, coupled with \oselect and \osort, allow for efficient and oblivious identification of keypoints.
Finally, computing the {\em feature descriptors} for each keypoint requires accessing the pixels around it. For this, we use oblivious extraction (\cref{s:algorithms:cropping}). 
\ifextended
\cref{s:proofs:algorithms}'s Algorithm \ref{alg:featuredet} has the pseudocode.%
\fi

\paragraph{Feature matching}
The next step after detecting features is to match them across images.
Feature matching computes a distance metric between two sets of features, and identifies features that are ``nearest'' to each other in the two sets.
In \sys, we simply perform brute-force matching of the two sets, using \oselect operations to select the closest features.%

\section{Evaluation}\label{s:evaluation}

\ifpadding
    \begin{figure*}[t!]
        \minipage[t]{0.32\textwidth}
        \centering
        \includegraphics[width=\linewidth]{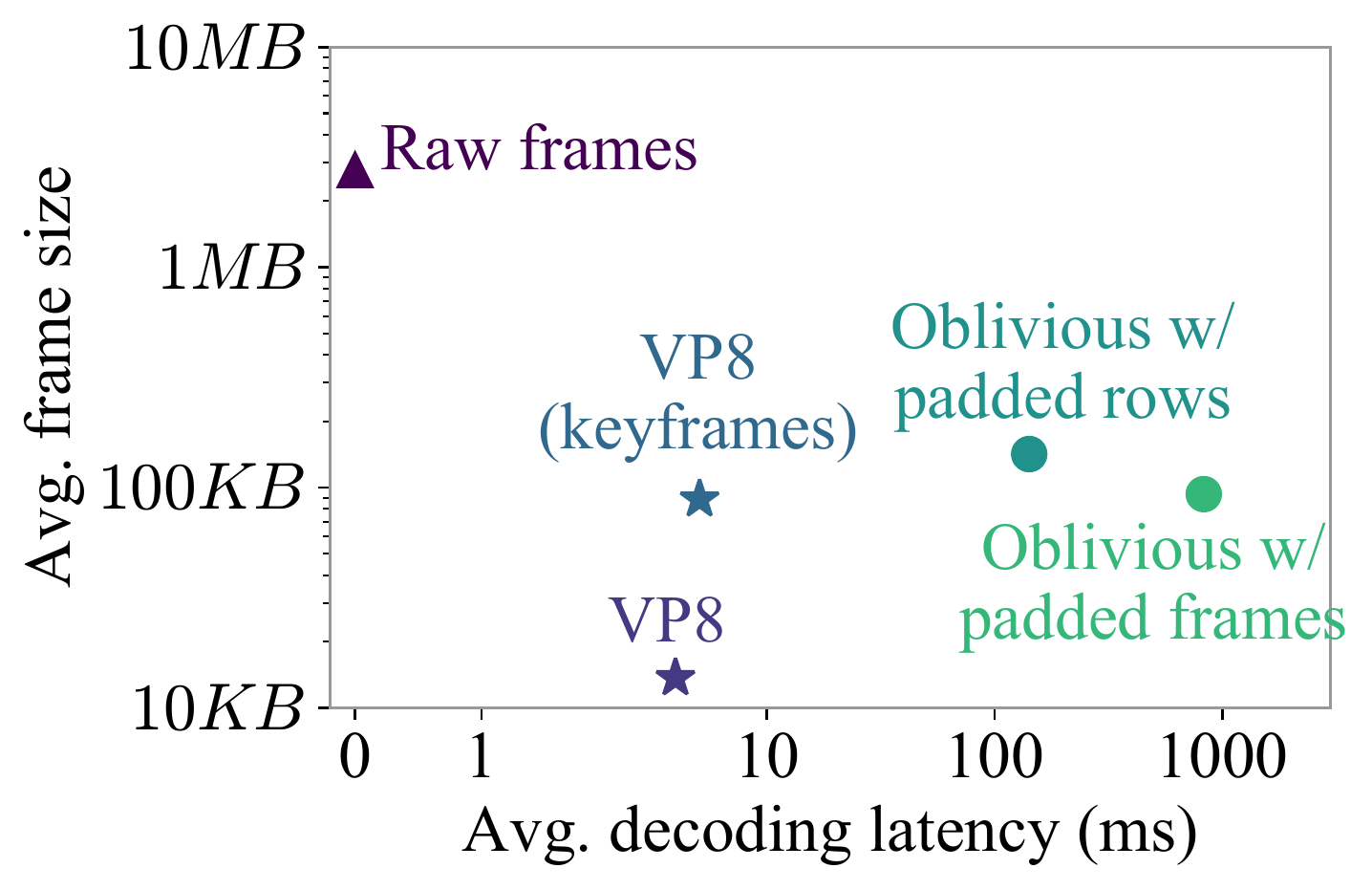}
        \caption{Decoding latency vs.\ B/W.}
        \label{fig:frames:lat}
        \endminipage\hfill
        \minipage[t]{0.32\textwidth}
        \centering
        \includegraphics[width=\linewidth]{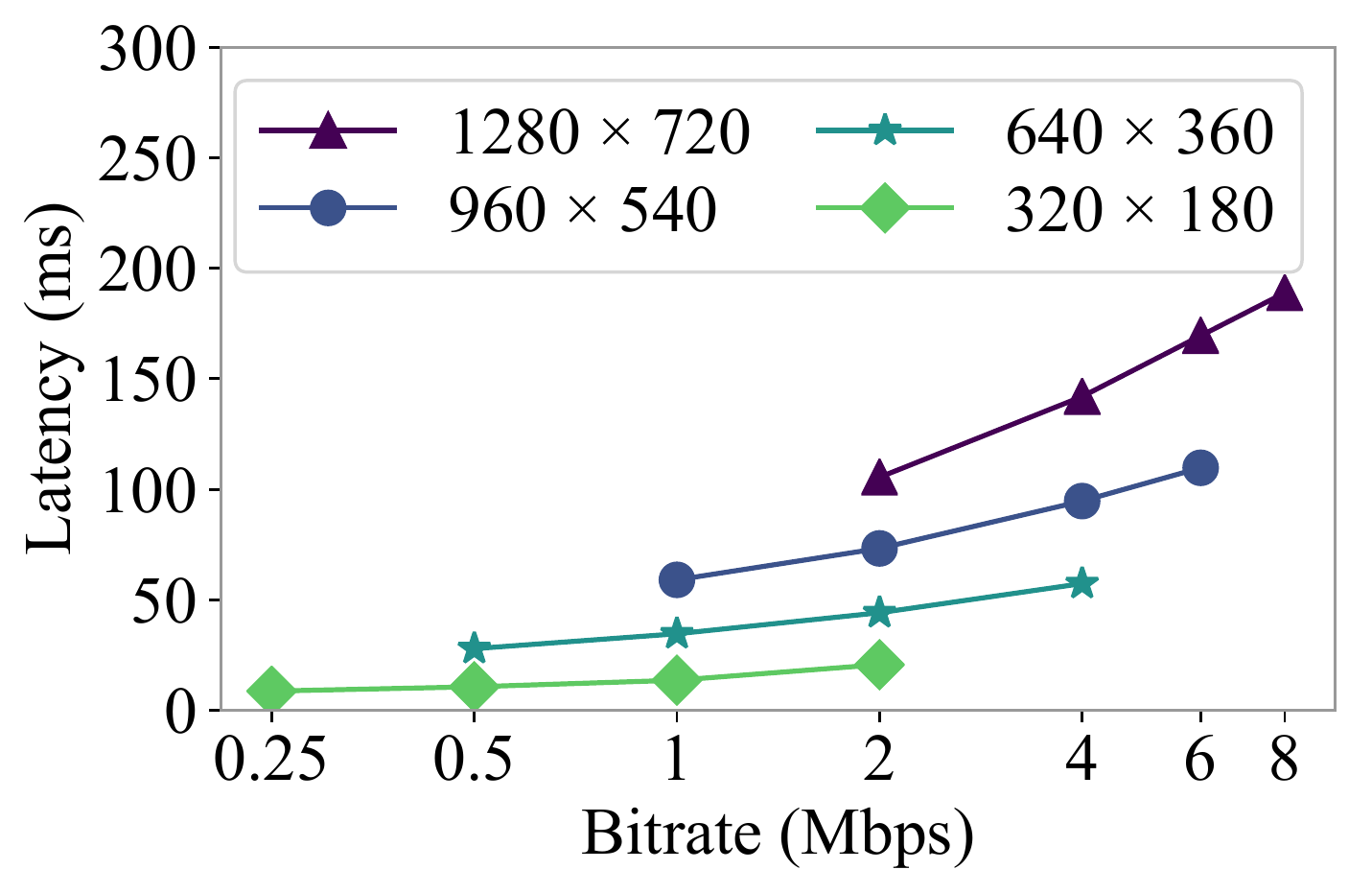}
        \caption{Latency of oblivious decoding.
        }
        \label{fig:decode:lat}
        \endminipage\hfill
        \minipage[t]{0.32\textwidth}%
        \centering
        \includegraphics[width=\linewidth]{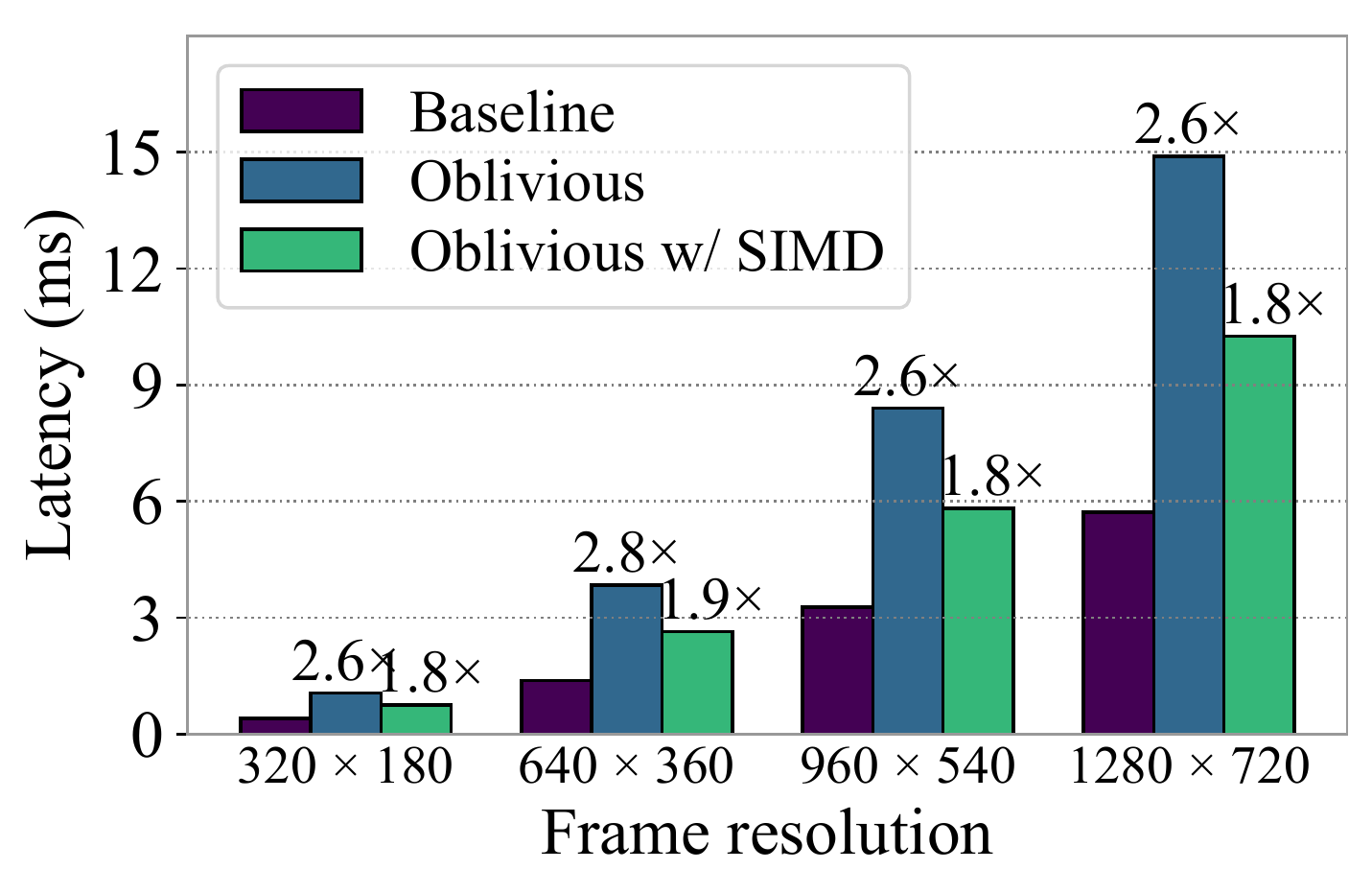}
        \caption{Background subtraction.}
        \label{fig:bgs:latency}
        \endminipage
    \end{figure*}
\else
\begin{figure*}[t!]
\minipage[t]{0.32\textwidth}
    \centering
    \includegraphics[width=\linewidth]{plots/frames_no_padding.pdf}
    \caption{Decoding latency vs.\ B/W.}
    \label{fig:frames:lat}
\endminipage\hfill
\minipage[t]{0.32\textwidth}
    \centering
    \includegraphics[width=\linewidth]{plots/decode_no_padding.pdf}
    \caption{Latency of oblivious decoding.
    }
    \label{fig:decode:lat}
\endminipage\hfill
\minipage[t]{0.32\textwidth}%
    \centering
    \includegraphics[width=\linewidth]{plots/bgs.pdf}
    \caption{Background subtraction.}
    \label{fig:bgs:latency}
\endminipage
\end{figure*}
\fi

\paragraph{Implementation}
We implement our oblivious video decoder atop FFmpeg's VP8 decoder~\cite{ffmpeg} and oblivious vision algorithms atop OpenCV 3.2.0~\cite{opencv}. We use Caffe~\cite{Jia:Caffe:2014} for running CNNs. We encrypt data channels using AES-GCM. 
We implement the oblivious primitives of \cref{s:background:primitives} using inline assembly code (as in~\cite{Ohrimenko:ObliviousML,Raccoon,Zerotrace:Sasy}), and manually verified the binary to ensure that compiler optimizations do not undo our intent; one can also use tools such as Vale~\cite{Vale:Bond:2017} to do the same.

\paragraph{Testbed} We evaluate \sys on Intel i7-8700K with 6 cores running at \SI{3.7}{\giga\hertz}, and an NVIDIA GTX 780 GPU with 2304 CUDA cores running at \SI{863}{\mega\hertz}.
We disable hyperthreading for experiments with \sys (per \cref{s:threatmodel}), but retain hyperthreading in the insecure baseline.
Disabling hyperthreading for security does not sacrifice the performance of Visor (due to its heavy utilization of vector units) unlike the baseline system that favors hyperthreading; see \cref{s:hyperthreading} for more details.
The server runs Linux v4.11; supports AVX2 and SGX-v1 instruction sets; %
and has \SI{32}{\giga\byte} of memory, with \SI{93.5}{\mega\byte} of enclave memory. The GPU has \SI{3}{\giga\byte} of memory. 

\paragraph{Datasets}
We use four real-world video streams (obtained with permission) in our experiments: streams 1 and 4 are from traffic cameras in the city of Bellevue (resolution $1280 \times 720$) while streams 2 and 3 are sourced from cameras surveilling commercial datacenters (resolution $1024 \times 768$). 
All these videos are privacy-sensitive as they involve government regulations or business sensitivity. For experiments that evaluate the cost of obliviousness across different resolutions and bitrates, we re-encode the videos accordingly.
A recent body of work~\cite{NoScope,Chameleon:Sigcomm,VideoStorm} has found that the accuracy of object detection in video streams is not affected if the resolution is decreased (while consuming significantly lesser resources), and 720p videos suffice. We therefore chose to use streams closer to 720p in resolution because we believe they would be a more accurate representation of real performance.

\paragraph{Evaluation highlights} We summarize the key takeaways of our evaluation.
\begin{compactenumerate}
    \item \sys's optimized oblivious algorithms (\cref{s:decoding}, \cref{s:algorithms}) are up to $1000\times$ faster than na\"ive competing solutions. (\cref{s:eval:modules})
    \item End-to-end overhead of obliviousness for real-world video pipelines with state-of-the-art CNNs are limited to \finaloverhead over a {\em non-oblivious} baseline. (\cref{s:eval:overall})
    \item \sys is generic and can accommodate multiple pipelines (\cref{s:model}; \cref{fig:pipeline}) that combine the different vision processing algorithms and CNNs. (\cref{s:eval:overall})
	\item \sys's performance is over 6 to 7 orders of magnitude better than a state-of-the-art general-purpose system for oblivious program execution. (\cref{s:evaluation:related})
\end{compactenumerate}
Overall, \sys's use of properties of the video streams has {\em no impact on the accuracy} of the analytics outputs.

\subsection{Performance of Oblivious Components}\label{s:eval:modules}
We begin by studying the performance of \sys's oblivious modules:
we quantify the raw overhead of our algorithms (without enclaves) over non-oblivious baselines;
we also measure the improvements over na\"ive oblivious solutions.

\subsubsection{Oblivious video decoding}\label{s:eval:decoding}

\ifpadding
Decoding of the compressed bitstream dominates decoding latency, consuming up to \approx$90\%$ of the total latency. %
Further, this stage is dominated by the oblivious assignment subroutine which sorts coefficients into the correct pixel positions using \osort, consuming up to \approx$83\%$ of the decoding latency. %
Since the complexity of oblivious sort is super-linear in the number of elements being sorted,
our technique for decoding at the granularity of \emph{rows of blocks} rather than frames significantly improves the latency of oblivious decoding.

\paragraph{Overheads} \cref{fig:frames:lat} shows
the bandwidth usage
and decoding latency for different oblivious decoding strategies (\ie decoding at the level of frames, or at the level of \emph{row of blocks}) for a video stream of resolution $1280 \times 720$. We also include two reference points: non-encoded frames and VP8 encoding. The baseline latency of decoding VP8 encoded frames is $4$--\SI{5}{\milli\second}.
Non-encoded raw frames incur no decoding latency but result in frames that are three orders of magnitude larger than the VP8 average frame size (10s of \SI{}{\kilo\byte}) at  a bitrate of \SI{4}{\mega\bps}.

Frame-level oblivious decoding introduces high latency (\approx\SI{850}{\milli\second}), which is two orders of magnitude higher than non-oblivious counterparts.
Furthermore, padding each frame to prevent leakage of the frame's bitrate increases the average frame size to  \approx\SI{95}{\kilo\byte}. 
On the contrary, oblivious decoding at the level of rows of blocks delivers \approx\SI{140}{\milli\second}, which is \approx$6\times$ lower than frame-level decoding. 
However, this comes with a modest increase in network bandwidth as the encoder needs to pad each row of blocks individually, rather than a frame. In particular, the frame size increases from \approx\SI{95}{\kilo\byte} to \approx\SI{140}{\kilo\byte}.

Apart from the granularity of decoding, the latency of the oblivious sort is also governed by:  \begin{enumerate*}[($i$)] \item the frame's resolution, and \item the bitrate.\end{enumerate*}
The higher the frame's resolution / bitrate, the more coefficients there are to be sorted. 
\cref{fig:decode:lat} plots oblivious decoding latency at the granularity of rows of blocks across video streams with different resolutions and bitrates. The figure shows that lower resolution/bitrates introduce lower decoding overheads. %
In many cases, lower image qualities are adequate for video analytics as it does not impact the accuracy of the object classification~\cite{Chameleon:Sigcomm}.

\else

    Decoding of the compressed bitstream dominates decoding latency, consuming up to \approx$90\%$ of the total latency. %
    Further, this stage is dominated by the oblivious assignment subroutine which sorts coefficients into the correct pixel positions using \osort, consuming up to \approx$83\%$ of the decoding latency. %
    Since the complexity of oblivious sort is super-linear in the number of elements being sorted,
    an optimization that we make in our implementation is to decode and assign coefficients to pixels at the granularity of {\em rows of blocks} rather than frames. 
    As we show, this significantly improves the latency of oblivious decoding, though the attacker now learns the total number of bits per row of blocks, instead of per frame.

    \paragraph{Overheads} \cref{fig:frames:lat} shows
    the bandwidth usage
    and decoding latency for different oblivious decoding strategies (\ie decoding at the level of frames, or at the level of \emph{row of blocks}) for a video stream of resolution $1280 \times 720$. We also include two reference points: non-encoded frames and VP8 encoding. The baseline latency of decoding VP8 encoded frames is $4$--\SI{5}{\milli\second}.
    Non-encoded raw frames incur no decoding latency but result in frames that are three orders of magnitude larger than the VP8 average frame size (10s of \SI{}{\kilo\byte}) at  a bitrate of \SI{4}{\mega\bps}.

    Frame-level oblivious decoding introduces high latency (\approx\SI{850}{\milli\second} for keyframes, and ~\approx\SI{310}{\milli\second} overall), which is two orders of magnitude higher than non-oblivious counterparts.
    On the contrary, oblivious decoding at the level of \emph{rows of blocks} delivers \approx\SI{140}{\milli\second} for keyframes and \approx\SI{50}{\milli\second} overall, which is \approx$6\times$ lower than frame-level decoding. 

Apart from the granularity of decoding, the latency of the oblivious sort is also governed by:  \begin{enumerate*}[($i$)] \item the frame's resolution, and \item the bitrate.\end{enumerate*}
The higher the frame's resolution / bitrate, the more coefficients there are to be sorted. 
\cref{fig:decode:lat} plots oblivious decoding latency at the granularity of \emph{row of blocks} across video streams with different resolutions and bitrates. The figure shows that lower resolution/bitrates introduce lower decoding overheads. %
In many cases, lower image qualities are adequate for video analytics as it does not impact the accuracy of the object classification~\cite{Chameleon:Sigcomm}. 
\fi

\begin{figure}
    \minipage[t]{0.48\linewidth}
    \centering
    \includegraphics[width=\linewidth]{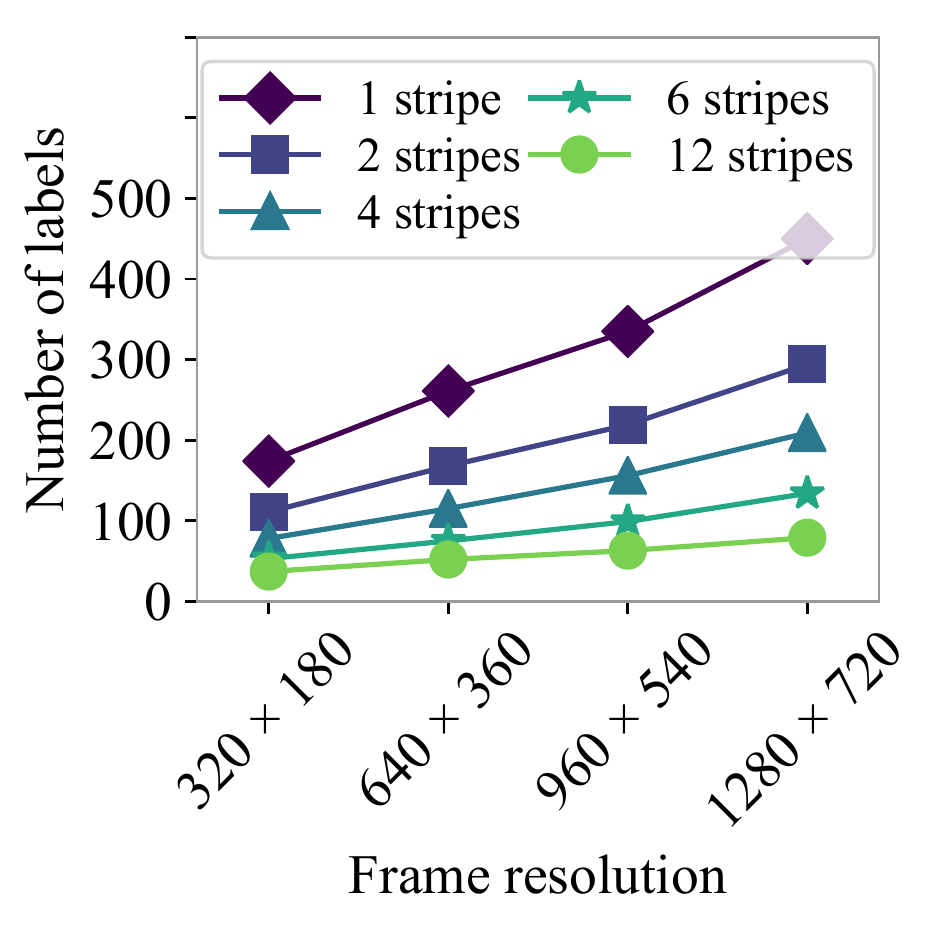}
    \caption{Number of labels for bounding box detection.
    }
    \label{fig:objdet:labels}
    \endminipage\hfill
    \minipage[t]{0.48\linewidth}
    \includegraphics[width=\linewidth]{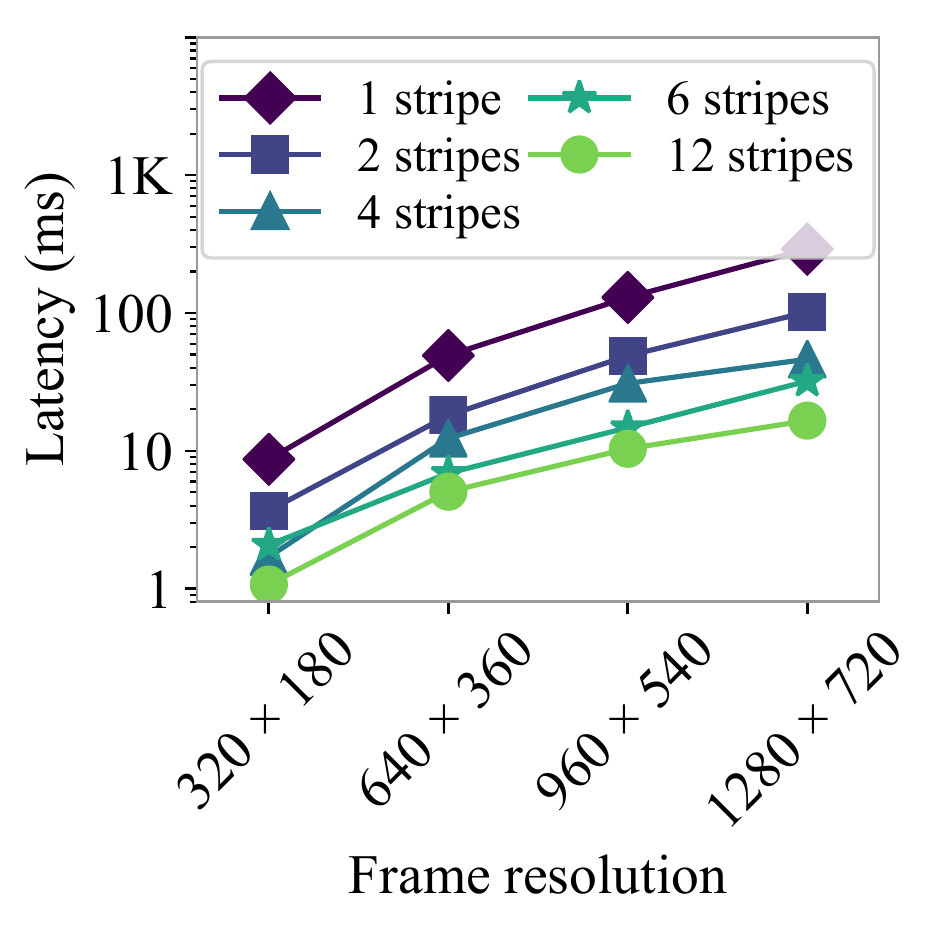}
    \caption{Latency of oblivious bounding box detection.
    }
    \label{fig:objdet:latency}
    \endminipage
\end{figure}

\begin{figure*}
    \minipage[t]{0.32\textwidth}%
    \centering
    \includegraphics[width=\linewidth]{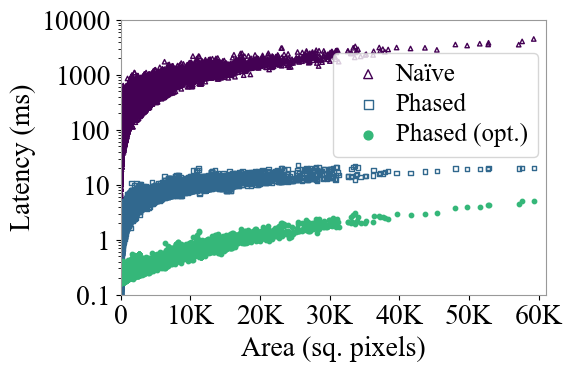}
    \caption{Oblivious object cropping.}
    \label{fig:objcrop:latency}
    \endminipage\hfill
    \minipage[t]{0.32\textwidth}
    \centering
    \includegraphics[width=\linewidth]{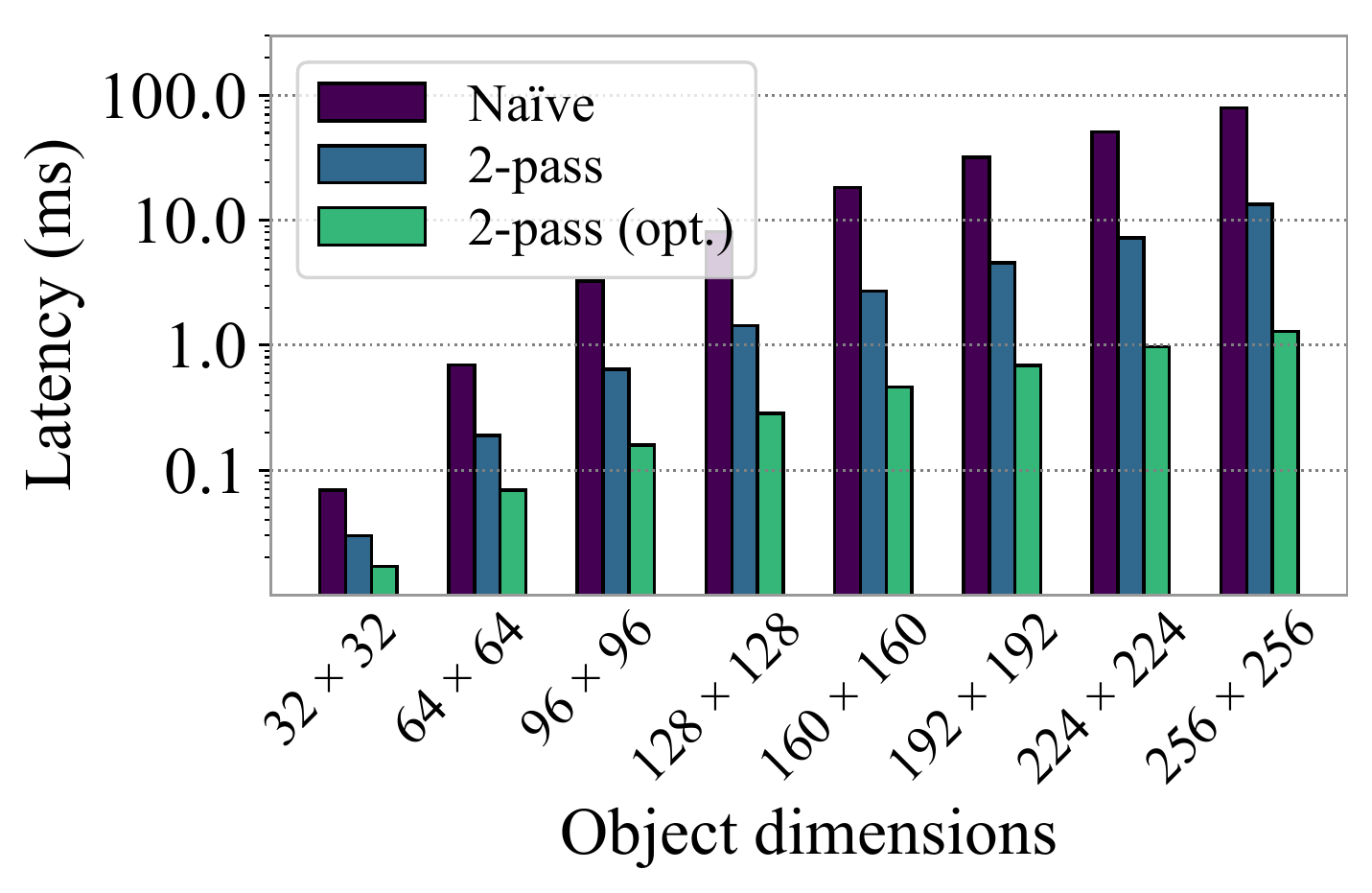}
    \caption{Oblivious object resizing.}
    \label{fig:resize:lat}
    \endminipage\hfill
    \minipage[t]{0.32\textwidth}
    \centering
    \includegraphics[width=\linewidth]{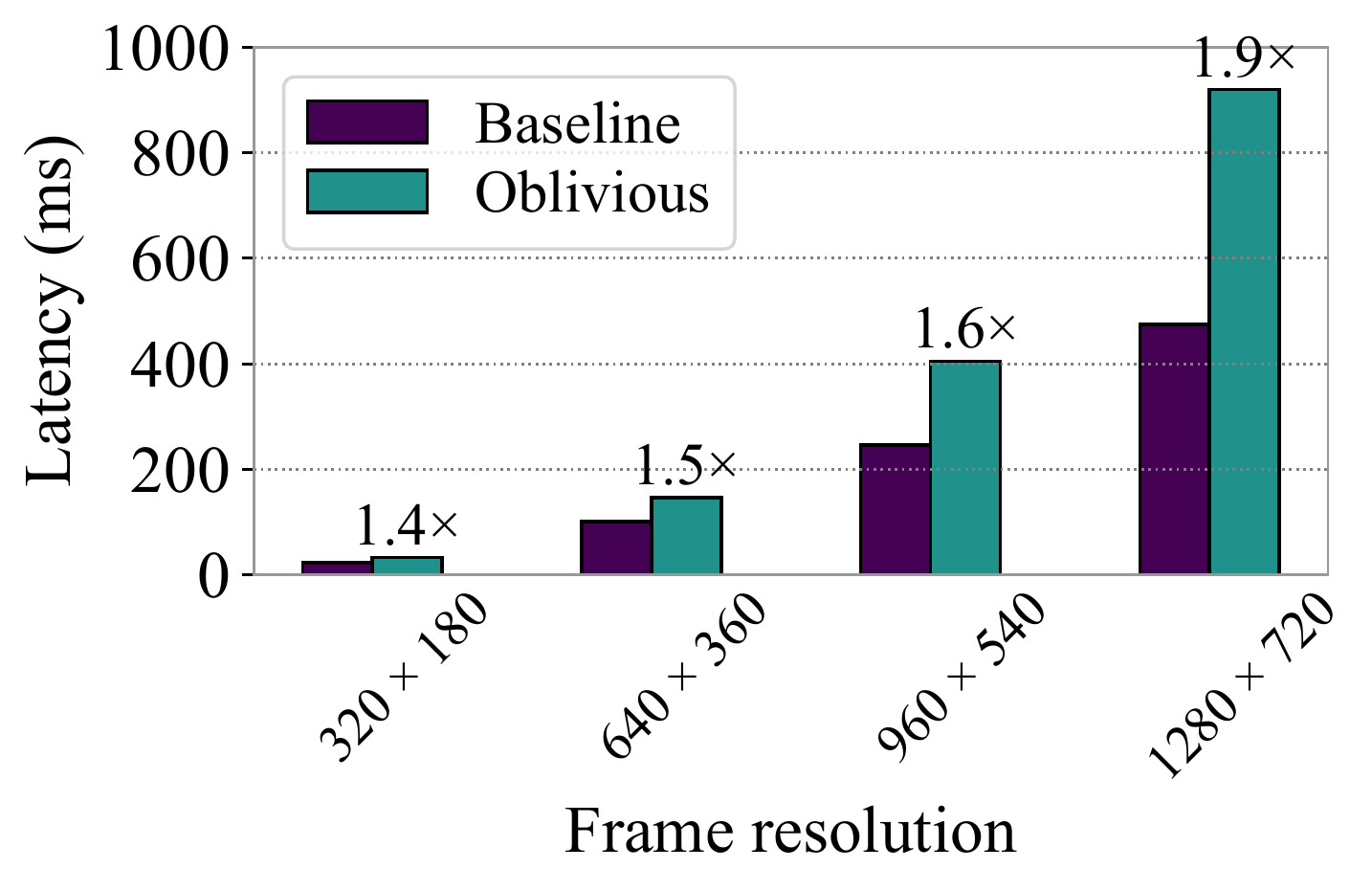}
    \caption{Oblivious object tracking.}
    \label{fig:sift:lat}
    \endminipage\hfill
\end{figure*}

\subsubsection{Background subtraction}
We set the maximum number of Gaussian components per pixel $M_\code{max} = 4$, following prior work~\cite{MOG2:Zivkovic:2004, MOG2:Zivkovic:2006}. 
Our changes for obliviousness enable us to make use of SIMD instructions for updating the Gaussian components in parallel.
This is because we now maintain the same number of components per pixel, and update operations for each component are identical.

\cref{fig:bgs:latency} plots the overhead of obliviousness on background subtraction across different resolutions. The SIMD implementation increases the latency of the routine only by $1.8\times$ over the baseline non-oblivious routine. As the routine processes each pixel in the frame independent of the rest, its latency increases linearly with the total number of pixels. 

\subsubsection{Bounding box detection}
For non-oblivious bounding box detection, we use 
the border-following algorithm of Suzuki and Abe~\cite{Suzuki:1985} (per \cref{s:algorithms:objdet}); this algorithm is efficient, running in sub-millisecond latencies.

The performance of our oblivious bounding box detection algorithm is governed by two parameters:  \begin{enumerate*}[($i$)] \item the number of stripes used in the divide-and-conquer approach, which controls the degree of parallelism, and \item an upper bound $L$ on the maximum number of labels possible per stripe, which determines the size of the algorithm's data structures.\end{enumerate*}

\cref{fig:objdet:labels} plots $L$ for streams of different frame resolutions while varying the number of stripes into which each frame is divided. As expected, as the number of stripes increases, the value of $L$ required per stripe decreases. Similarly, lower resolution frames require smaller values of $L$.

\cref{fig:objdet:latency} plots the latency of detecting all bounding boxes in a frame based on the value of the parameter $L$, ranging from a few milliseconds to hundreds of milliseconds. For a given resolution, the latency decreases as the number of stripes increase, due to two reasons:  \begin{enumerate*}[($i$)] \item increased parallelism, and \item smaller sizes of $L$ required per stripe.\end{enumerate*} Overall, the divide-and-conquer approach reduces latency by an order of magnitude down to a handful of milliseconds.

\subsubsection{Object cropping}
\label{s:eval:cropping}
We first evaluate oblivious object cropping while leaking object sizes. We include three variants: the na\"ive approach; the two-phase approach; and 
a further optimization that advances the sliding window forward multiple rows/columns at a time.
\cref{fig:objcrop:latency} plots the cost of cropping variable-sized objects from a $1280 \times 720$ frame, showing that the proposed refinements reduce latency by three orders of magnitude .

\cref{fig:resize:lat} plots the latency of obliviously resizing the target ROI within a cropped image to hide the object's size. 
While the latency of na\"ive bilinear interpolation is high (10s of milliseconds) for large objects, the optimized two-pass approach (that exploits cache locality by transposing the image before the second pass; \cref{s:algorithms:cropping:size}) reduces latency by two orders of magnitude down to one millisecond for large objects.

\subsubsection{Object tracking}
\cref{fig:sift:lat} plots the latency of object tracking with and without obliviousness.
We examine our sample streams at various resolutions to determine upper bounds on the maximum number of features in frames.
As the resolution increases, the overhead of obliviousness increases as well because our algorithm involves an oblivious sort of the intermediate set of detected features, the cost of which is superlinear in the size of the set.
Overall, the overhead is $< 2\times$.

\subsubsection{CNN classification on GPU}
\paragraph{Buffer} \cref{fig:queue:lat} benchmarks the sorting cost as a function of the object size and the buffer size. For buffer sizes smaller than 50, the sorting cost remains under \SI{5}{\milli\second}. %

\paragraph{Inference} We measure the performance of CNN object classification on the GPU. As discussed in \cref{s:algorithms:cnn}, oblivious inference comes free of cost. %
\cref{table:data:cnn} lists the throughput of different CNN models using the proprietary NVIDIA driver, with CUDA version 9.2. Each model takes as input a batch of 10 objects of size $224\times224$. %
Further, since GPU memory is limited to \SI{3}{\giga\byte}, we also list the maximum number of concurrent models that can run on our testbed.
As we show in \cref{s:eval:overall}, the latter has a direct bearing on the number of video analytics pipelines that can be concurrently served.%

\begin{figure*}[t]
\minipage[b]{0.36\textwidth}
    \centering
    \small
    \begin{tabular}[t]{r|c|c}
    \thickhline
    CNN & Batches/s & Max no. of models \T\B \\
    \hline
    AlexNet & 40.3 & ~7 \T \\
    ResNet-18 & 18.4 & 4 \\
    ResNet-50 & 8.2 &  1 \\
    VGG-16 & 5.4 & 1 \\
    VGG-19 & 4.4 & 1 \\
    Yolo & 3.9 & ~1 \B \\
    \thickhline
    \end{tabular}
    \caption{CNN throughput (batch size 10).}
    \label{table:data:cnn}
\endminipage\hfill
\minipage[b]{0.30\textwidth}%
    \centering
    \includegraphics[width=0.95\linewidth]{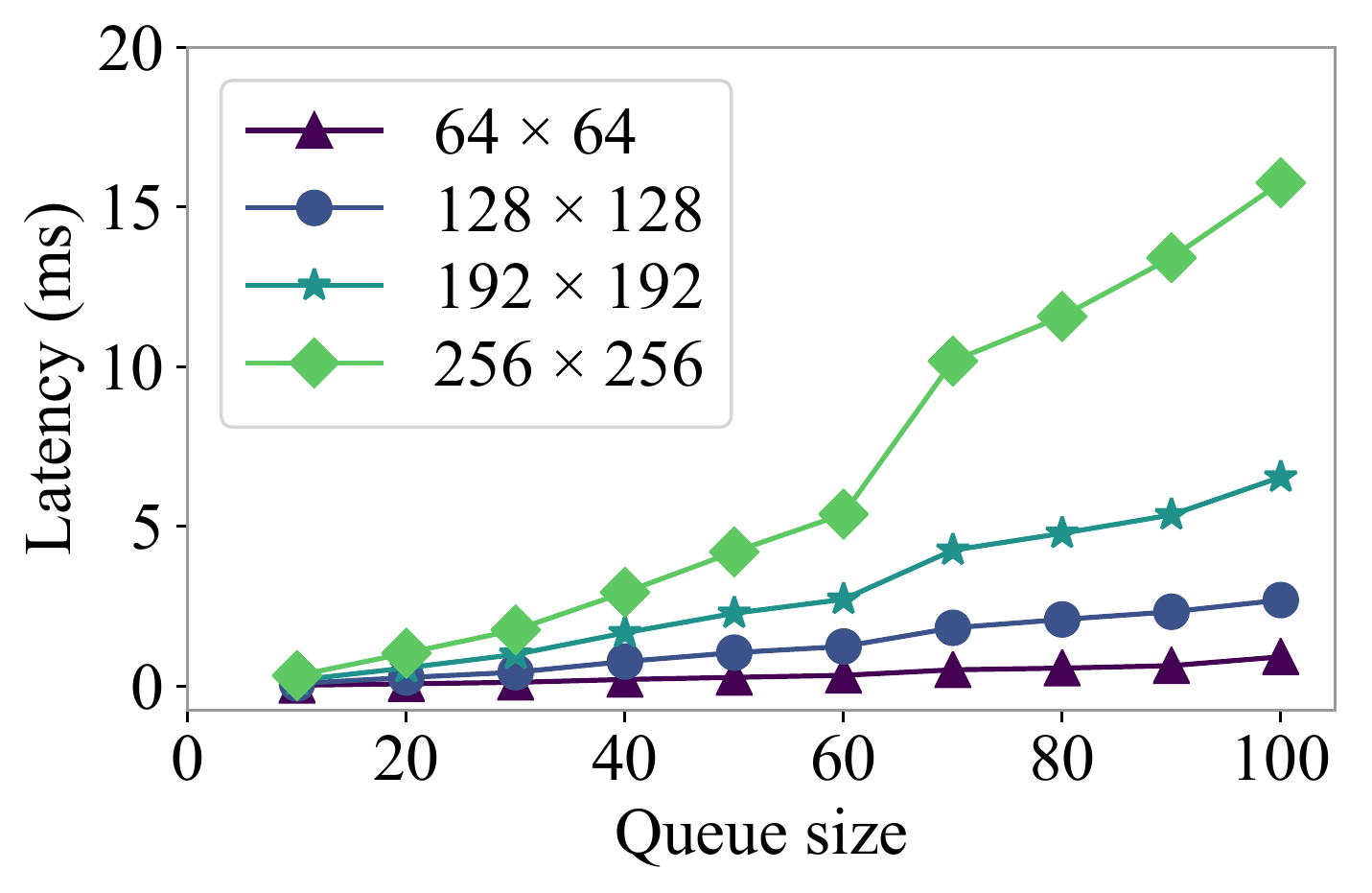}
    \caption{Oblivious queue sort.}
    \label{fig:queue:lat}
\endminipage\hfill
\minipage[b]{0.33\textwidth}
\centering
\ifpadding
    \includegraphics[width=\linewidth]{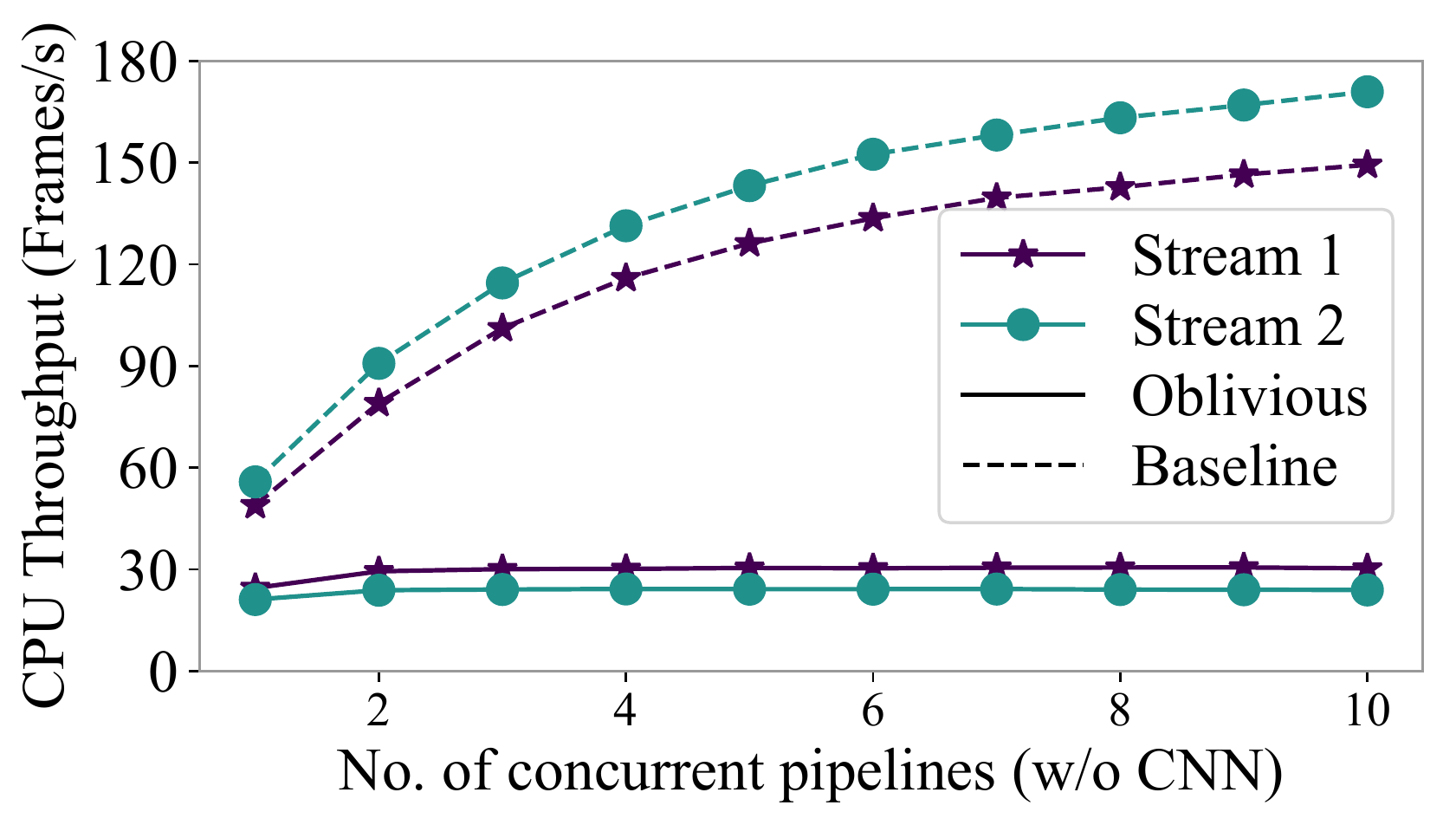}
\else
    \includegraphics[width=\linewidth]{plots/tput_cpu_no_padding_5objs.pdf}
\fi
    \caption{CPU throughput (pipeline 1).}
    \label{fig:tput:cpu}
\endminipage\hfill
\end{figure*}

\subsection{System Performance}\label{s:eval:overall}
We now evaluate the end-to-end performance of the video analytics pipeline using four real video streams. We present the overheads of running \sys's data-oblivious techniques and hosting the pipeline in a hybrid enclave.
We evaluate the two example pipelines in \cref{fig:pipeline}: pipeline 1 uses an object classifier CNN; pipeline 2 uses an object detector CNN (Yolo), and performs object tracking on the CPU.

\emph{Pipeline 1 configuration.} We run inference on objects that are larger than 1\% of the frame size as smaller detected objects do not represent any meaningful value. Across our videos, the number of such objects per frame is small---no frame has more than 5 objects, and 97-99\% of frames have less than 2 to 3 objects. 
Therefore, we configure:  \begin{enumerate*}[($i$)] \item \sys's object detection stage to conservatively output 5 objects per frame (including dummies) into the buffer, \item the consumption rate of \sys's CNN module to 2 or 3 objects per frame (depending on the stream), and \item the buffer size to 50, which suffices to prevent non-dummy objects from being overwritten.
\end{enumerate*}

\emph{Pipeline 2 configuration.} The Yolo object detection CNN ingests entire frames, instead of individual objects. In the baseline, we filter frames that don't contain any objects using background subtraction. However, we forego this filtering in the oblivious version since most frames contain foreground objects in our sample streams.
Additionally, Yolo expects the frames to be of resolution $448\times 448$. So we resize the input video streams to be of the same resolution.

\begin{figure*}
	\minipage[t]{0.33\textwidth}%
	\centering
	\ifpadding
		\includegraphics[width=\linewidth]{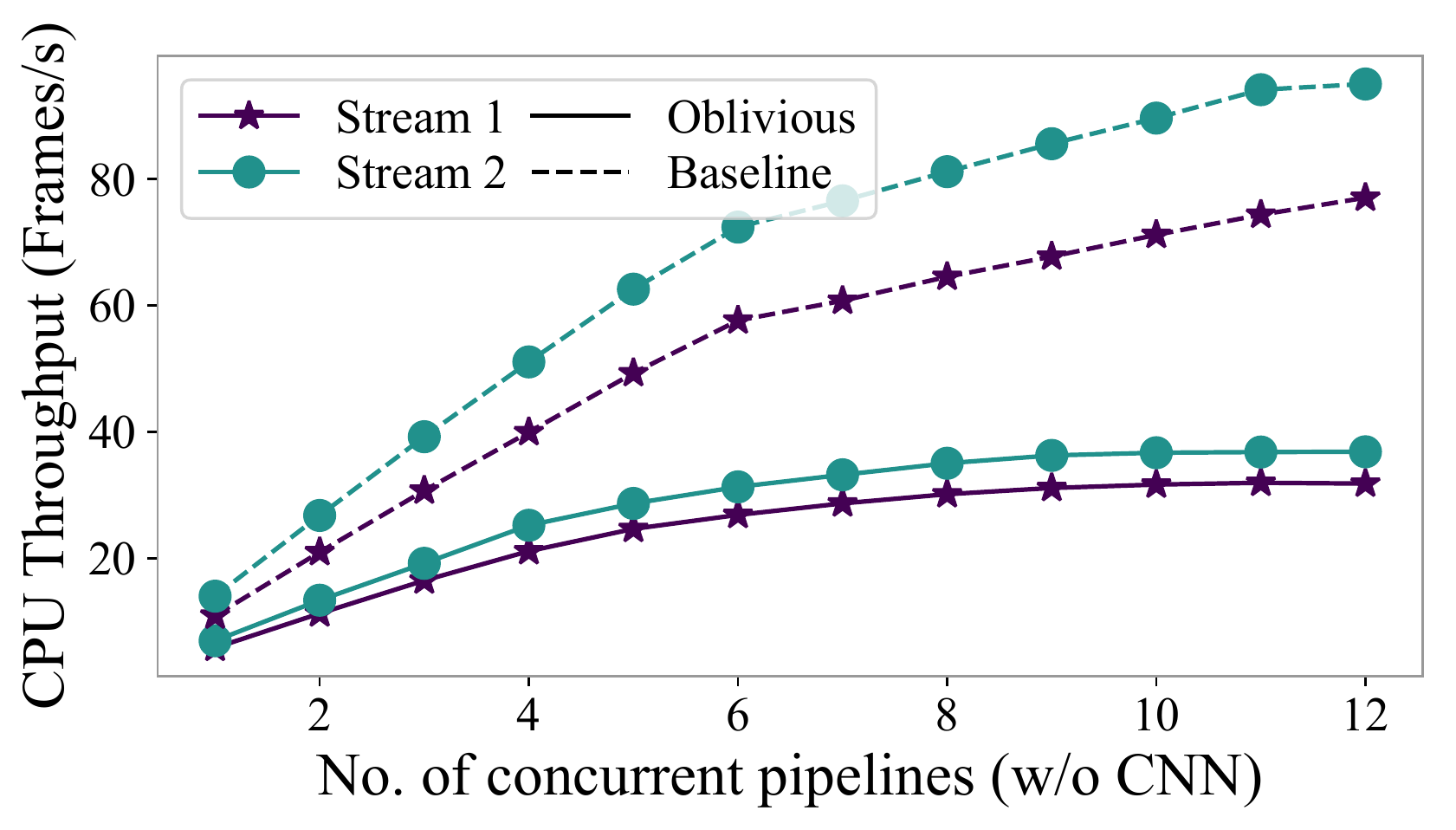}
	\else
		\includegraphics[width=\linewidth]{plots/tput_cpu_tracker.pdf}
	\fi

	\caption{CPU throughput (pipeline 2).}
	\label{fig:tput:cpu:tracker}
	\endminipage\hfill
	\minipage[t]{0.42\textwidth}
	\centering
	\ifpadding
		\includegraphics[width=\linewidth]{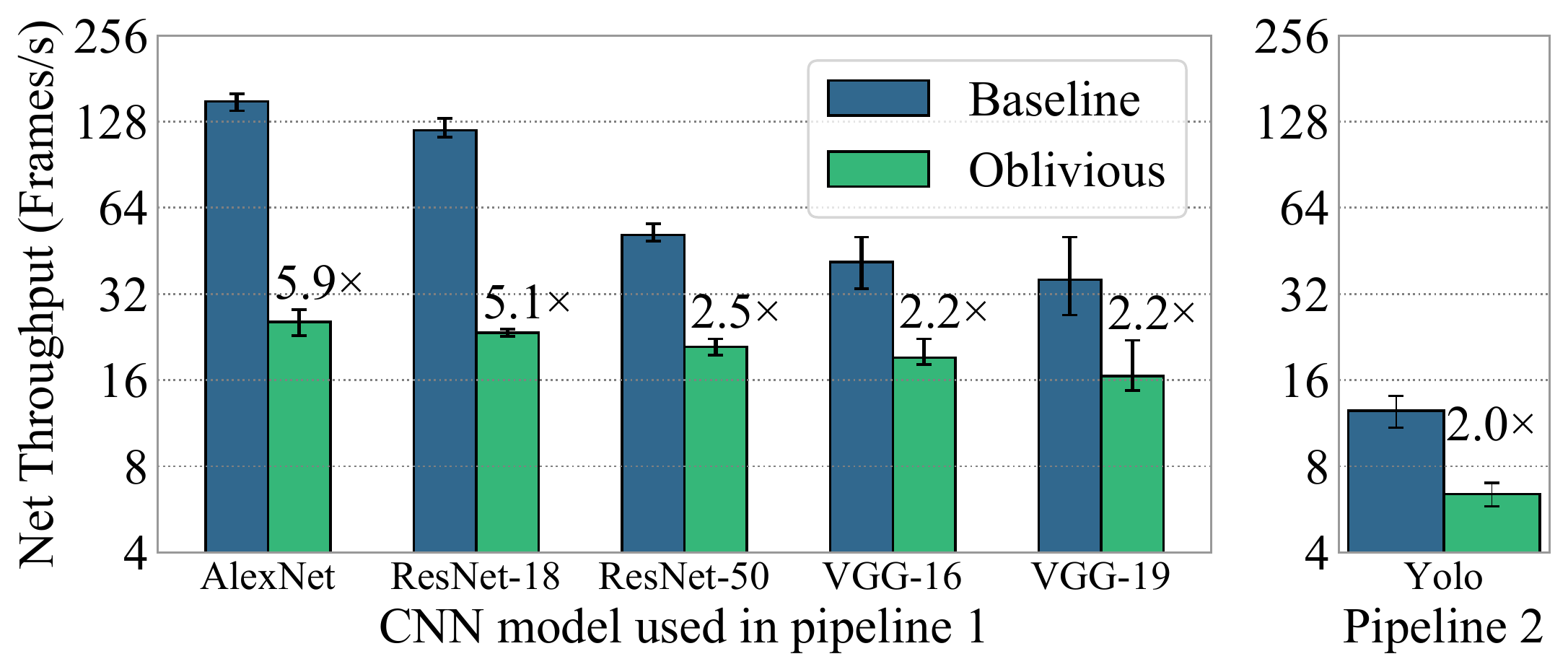}
	\else
		\includegraphics[width=\linewidth]{plots/e2e_tput_no_padding.pdf}
	\fi
	\caption{Overall pipeline throughput.}
	\label{fig:tput:e2e}
	\endminipage\hfill
	\minipage[t]{0.23\textwidth}
	\centering
	\ifpadding
		\includegraphics[width=\linewidth]{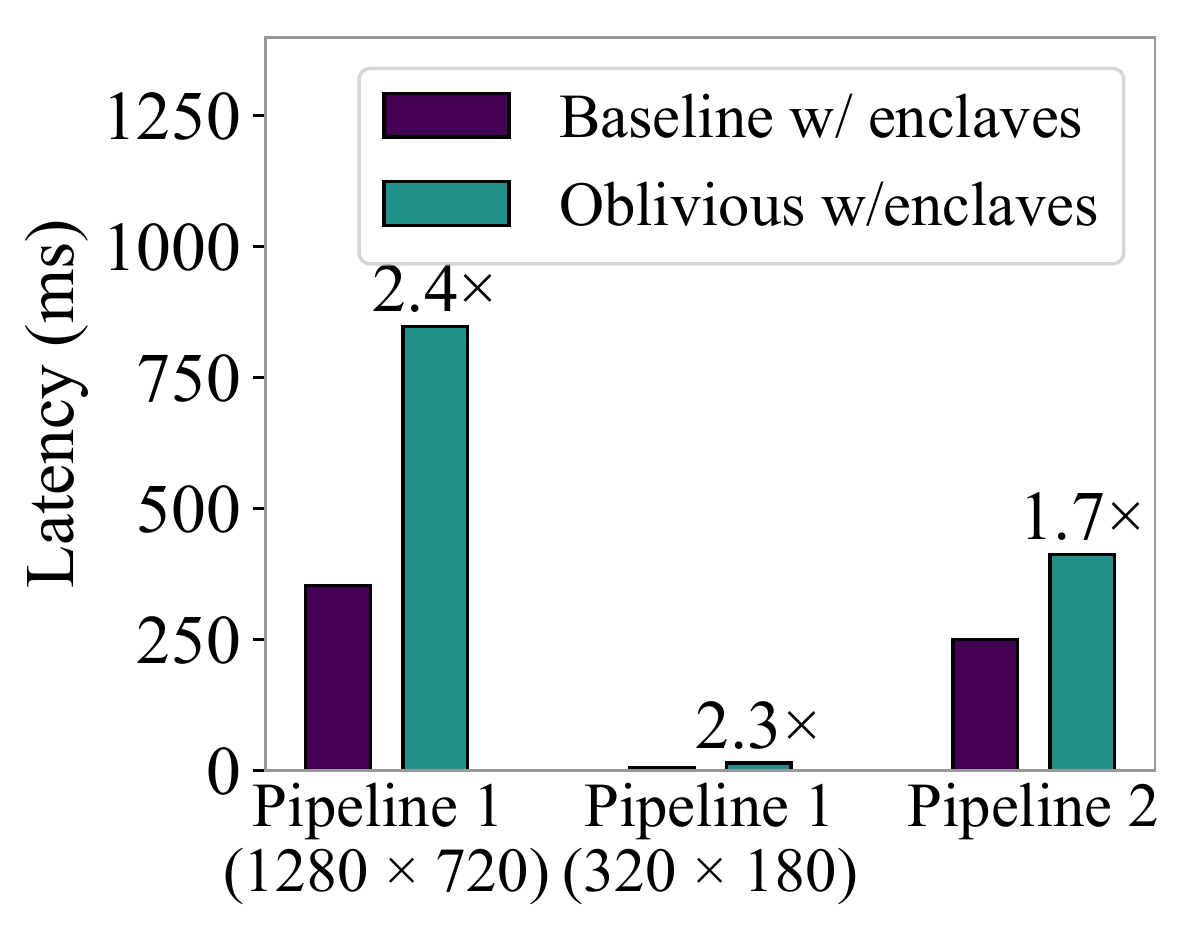}
	\else
		\includegraphics[width=\linewidth]{plots/sgx_obliv_lat_no_padding.pdf}
	\fi
	\caption{Cost of enclaves.}
	\label{fig:lat:sgx}
	\endminipage\hfill
\end{figure*}

\paragraph{Cost of obliviousness} \cref{fig:tput:cpu,fig:tput:cpu:tracker} plot the overhead of \sys on the CPU-side components of pipelines 1 and 2, while varying the number of concurrent pipelines. 
\ifpadding
    \sys reduces \emph{peak} CPU throughput by \approx$2.6\times$--$6\times$ across the two pipelines, compared to the non-oblivious baseline.  
    However, the throughput of the system ultimately depends on the number of models that can fit in GPU memory.

    \cref{fig:tput:e2e} plots \sys's end-to-end performance for both pipelines, across all four sample video streams. 
    In the presence of CNN inference, \sys's overheads depend on the model complexity. 
    Pipelines that utilize light models, such as AlexNet and ResNet-18, are bottlenecked by the CPU.
    In such cases, the overhead is determined by the cost of obliviousness incurred by the CPU components. 
    With heavier models such as ResNet-50 and VGG, the performance bottleneck shifts to the GPU. In this case, the overhead of \sys is governed by the amount of dummy objects processed by the GPU (as described in \cref{s:system:communication}).
    Overall, the cost of obliviousness remains in the range of $2.2\times$--$5.9\times$ across video streams for the first pipeline.
    In the second pipeline, the overhead is \approx$2\times$.
    The GPU can fit only a single Yolo model. 
    The overall performance, however, is bottlenecked at the CPU because the object tracking routine is relatively expensive.

    \paragraph{Cost of enclaves} We measure the cost of running the pipelines in CPU/GPU enclaves by replacing the NVIDIA stack with Graviton's stack, which comprises open-source CUDA runtime (Gdev~\cite{Kato:gdev}) and GPU driver (Nouveau~\cite{nouveau}). 

    \cref{fig:lat:sgx} compares \sys against a non-oblivious baseline when both systems are hosted in CPU/GPU enclaves. As SGX's EPC size is limited to \SI{93.5}{\mega\byte}, workloads with large memory footprints incur high overhead. 
    For pipeline 1, and for large frame resolutions, the latency of background subtraction increases from \approx\SI{6}{\milli\second} to \SI{225}{\milli\second} due to its working set size being \SI{132}{\mega\byte}.
    In Visor, the pipeline's net latency increases by $2.4\times$ (as SGX overheads mask some of \sys's overheads) while increasing the memory footprint to \SI{190}{\mega\byte}.
    When the pipeline operates on lower frame resolutions, such that its memory footprint fits within current EPC, the latency of the non-oblivious baseline tracks the latency of the insecure baseline (a few milliseconds); the additional overhead of obliviousness is $2.3\times$. %

    For pipeline 2, the limited EPC increases the latency of object tracking from \approx\SI{90}{\milli\second} to \approx\SI{240}{\milli\second}. 
    With \sys's obliviousness, the net latency increases by $1.7\times$.
\else
\sys reduces peak CPU throughput by \approx$2.1\times$--$2.8\times$ across the two pipelines, compared to the non-oblivious baseline.  
However, the throughput of the system ultimately depends on the number of models that can fit in GPU memory.

\cref{fig:tput:e2e} plots \sys's end-to-end performance for both pipelines, across all four sample video streams. 
In the presence of CNN inference, \sys's overheads depend on the model complexity. 
Pipelines that utilize light models, such as AlexNet and ResNet-18, are bottlenecked by the CPU.
In such cases, the overhead is determined by the cost of obliviousness incurred by the CPU components. 
With heavier models such as ResNet-50 and VGG, the performance bottleneck shifts to the GPU. In this case, the overhead of \sys is governed by the amount of dummy objects processed by the GPU (as described in \cref{s:system:communication}).
Overall, the cost of obliviousness remains in the range of $1.8\times$--$2.9\times$ across video streams for the first pipeline.
In the second pipeline, the overhead is in the range of $1.6\times$--$1.8\times$.
The GPU can fit only a single Yolo model. 
The overall performance, however, is bottlenecked at the CPU because the object tracking routine is relatively expensive.

\paragraph{Cost of enclaves} We measure the cost of running the pipelines in CPU/GPU enclaves by replacing the NVIDIA stack with Graviton's stack, which comprises open-source CUDA runtime (Gdev~\cite{Kato:gdev}) and GPU driver (Nouveau~\cite{nouveau}). 

\cref{fig:lat:sgx} compares \sys against a non-oblivious baseline with both systems hosted in CPU/GPU enclaves. As SGX's EPC size is limited to \SI{93.5}{\mega\byte}, workloads with larger memory footprints incur high overhead. 
With pipeline 1, for large resolutions, the latency of background subtraction increases from \approx\SI{6}{\milli\second} to \SI{225}{\milli\second} due to its working set size being \SI{132}{\mega\byte}.
Due to this limitation, the CPU is saturated by a single pipeline.
With Visor, the pipeline's net latency increases by $1.2\times$ (as SGX overheads mask some of \sys's overheads) while increasing the memory footprint to \SI{190}{\mega\byte}.
We also quantify overheads when the pipeline operates on lower frame resolutions such that its memory footprint fits within current EPC. 
In this case, the latency of the non-oblivious baseline tracks the latency of the insecure baseline (a few milliseconds); the additional overhead of obliviousness is $2.3\times$. %

Similarly, with pipeline 2, the limited EPC increases the latency of object tracking from \approx\SI{90}{\milli\second} to \approx\SI{240}{\milli\second}. 
With \sys's obliviousness, the net latency increases by $1.5\times$.

\fi

\subsection{Comparison against Prior Work}\label{s:evaluation:related}
We conclude our evaluation by comparing \sys against Obfuscuro~\cite{Obfuscuro}, a state-of-the-art general-purpose system for oblivious program execution.

The current implementation of Obfuscuro supports a limited set of instructions, and hence cannot run the entire video analytics pipeline.
On this note, we ported the OpenCV object cropping module to Obfuscuro, which requires only simple assignment operations.
Cropping objects of size $128\times128$ and $16\times16$ (from a $1280\times720$ image) takes $8.5$ hours and $8$ minutes in Obfuscuro respectively, versus \SI{800}{\micro\second} and \SI{200}{\micro\second} in \sys; making \sys faster by over $6$ to $7$ orders of magnitude.
We note, however, that Obfuscuro targets stronger guarantees than \sys as it also aims to obfuscate the programs; hence, it is not a strictly apples-to-apples comparison. Nonetheless, the large gap in performance is hard to bridge, and our experiments demonstrate the benefit of \sys's customized solutions.

Other tools for automatically synthesizing or executing oblivious programs are either closed-source~\cite{Raccoon,Wu:ISSTA:2018}, require special hardware~\cite{HOP:Nayak, ghostrider, Phantom}, or require custom language support~\cite{Fact:Cauligi}.
However, we note that the authors of Raccoon~\cite{Raccoon} (which provides similar levels of security as \sys) report up to $1000\times$ overhead on toy programs; the overhead would arguably be higher for complex programs like video analytics.

\section{Discussion}\label{s:discussion}

\paragraph{Attacks on upper bounds}
For efficiency, \sys extracts a fixed number of objects per frame based on a user-specified upper bound.
However, this leaves \sys open to adversarial inputs: an attacker who knows this upper bound can attempt to confuse the analytics pipeline by operating many objects in the frame at the same time.

To mitigate such attacks, we suggest two potential strategies:
\begin{enumerate*}[(i)]
    \item For frames containing $>=N$ objects (as detected in \cref{s:algorithms:objdet}), process those frames off the critical path using worst-case bounds (\eg total number of pixels). While this approach leaks which specific frames contain $>=N$ objects, the leakage may be acceptable considering these frames are suspicious.

    \item Filter objects based on their properties like object size or object location: \eg for a traffic feed, only select objects at the center of the traffic intersection. This limits the number of valid objects possible per frame, raising the bar for mounting such attacks. One can also apply richer filters on the pipeline results and reprocess frames with suspicious content.
\end{enumerate*}

\paragraph{Oblivious-by-design encoding}
Instead of designing oblivious versions of existing codecs, it may be possible to construct an oblivious-by-design coding scheme that is \begin{enumerate*}[(i)]\item potentially simpler, and \item performs better than \sys's oblivious decoding. \end{enumerate*}
This alternate design point is an interesting direction for future work.
We note, however, that any such codec would need to produce a perfectly constant bitrate (CBR) per frame to prevent bitrate leakage over the network.
While CBR codecs have been explored in the video literature, they are inferior to variable bitrate schemes (VBR) such as VP8 because they are lossier.
In other words, an oblivious CBR scheme would consume greater bandwidth than VP8 to match its video quality (and therefore, VP8 with padding), though it may indeed be simpler.
In \sys, we optimize for quality.

\section{Related Work}\label{s:relatedwork}

To the best of our knowledge, \sys is the first system for the secure execution of vision pipelines. We discuss prior work related to various aspects of \sys.

\paragraph{Video processing systems} A wide range of optimizations have been proposed to improve the efficiency of video analytic pipelines~\cite{Focus:OSDI18, NoScope, Chameleon:Sigcomm, VideoStorm}. %
These systems offer different design points for enabling trade-offs between performance and accuracy.
Their techniques are complementary to \sys which can benefit from their performance efficiency.

\paragraph{Data-oblivious techniques}
Eppstein \etal~\cite{Eppstein:2010:Oblivious:Geometric} develop data-oblivious algorithms for geometric computations.
Ohrimenko \etal~\cite{Ohrimenko:ObliviousML} propose data-oblivious machine learning algorithms running inside CPU TEEs.
These works are similar in spirit to \sys, but are not applicable to our setting.

Oblivious RAM~\cite{ORAM:GR96} is a general-purpose cryptographic solution for eliminating access-pattern leakage.
While recent advancements have reduced its computational overhead~\cite{StefanovDSFRYD13}, it still remains several orders of magnitude more expensive than customized solutions.
Oblix~\cite{Oblix} and Zerotrace~\cite{Zerotrace:Sasy} enable ORAM support for applications running within hardware enclaves, but have similar limitations.

Various systems~\cite{Raccoon, ghostrider, Obfuscuro, PAO:Sinha:2017, HOP:Nayak, Phantom, Wu:ISSTA:2018, Fact:Cauligi} also offer generic solutions for hiding access patterns at different levels, with the help of ORAM, specialized hardware, or compiler-based techniques.
Generic solutions, however, are less efficient than customized solutions (such as \sys) which can exploit algorithmic patterns for greater efficiency.

\paragraph{Side-channel defenses for TEEs}
\sys provides systemic protection against attacks that exploit access pattern leakage in enclaves.
Systems for data-oblivious execution (such as Obfuscuro~\cite{Obfuscuro} and Raccoon~\cite{Raccoon}) provide similar levels of security for general-purpose workloads, while \sys is tailored to vision pipelines.

In contrast, a variety of defenses have also been proposed to detect~\cite{dejavu} or mitigate \emph{specific} classes of access-pattern leakage.
For example, Cloak~\cite{cloak}, Varys~\cite{SGX:defense:Varys}, and Hyperrace~\cite{SGX:defense:Hyperrace} target cache-based attacks; while T-SGX~\cite{TSGX} and Shinde \etal~\cite{Shinde2016} propose defenses for paging-based attacks.
DR.SGX~\cite{Brasser:DRSGX} mitigates access pattern leakage by frequently re-randomizing data locations, but can leak information if the enclave program makes predictable memory accesses.

Telekine~\cite{telekine} mitigates side-channels in GPU TEEs induced by CPU-GPU communication patterns, similar to \sys's oblivious CPU-GPU communication protocol (though the latter is specific to \sys's use case).

\drop{
\paragraph{SGX side-channel defenses}
A variety of defenses have been proposed to detect~\mbox{\cite{dejavu}} or mitigate side-channel attacks on SGX, such as cache attacks~\mbox{\cite{cloak, SGX:defense:Varys}} and paging-based attacks~\mbox{\cite{TSGX, Shinde2016}}. These solutions, however, do not provide systemic protection against leakage due to access patterns, and only focus on mitigating specific side-channels.
DR.SGX~\mbox{\cite{Brasser:DRSGX}} aims to mitigate access pattern leakage by frequently re-randomizing data locations, but still leaks information if the enclave program makes predictable memory accesses.
}

\paragraph{Secure inference}
Several recent works propose cryptographic solutions for CNN inference~\cite{Gazelle, Minionn, CryptoNets, Deepsecure, Riazi:Chameleon:ML} relying on homomorphic encryption and/or secure multi-party computation \cite{Yao:GC}. While cryptographic approaches avoid the pitfalls of TEE-based CNN inference, the latter remains faster by orders of magnitude~\cite{Slalom:Tramer, Chiron:Hunt}.

\section{Conclusion}
We presented \sys, a system that enables privacy-preserving video analytics services. \sys uses a hybrid TEE architecture that spans both the CPU and the GPU, as well as novel data-oblivious vision algorithms. \sys provides strong confidentiality and integrity guarantees, for video streams and models, in the presence of privileged attackers and malicious co-tenants. Our implementation of \sys shows limited performance overhead for the provided level of security.

\section*{Acknowledgments}
We are grateful to Chia-Che Tsai for helping us instrument the Graphene LibOS.
We thank our shepherd, Kaveh Razavi, and the anonymous reviewers for their insightful comments. 
We also thank Stefan Saroiu, Yuanchao Shu, and members of the RISELab at UC Berkeley for helpful feedback on the paper. 
This work was supported in part by the NSF CISE Expeditions Award CCF-1730628, and gifts from the Sloan Foundation, Bakar Program,  Alibaba, Amazon Web Services, Ant Financial, Capital One, Ericsson, Facebook, Futurewei, Google, Intel, Microsoft, Nvidia, Scotiabank, Splunk, and VMware.

{\footnotesize
\begin{flushleft}
\setlength{\parskip}{0pt}
\setlength{\itemsep}{0pt}
\bibliographystyle{abbrv}
\bibliography{bib/references,bib/str_short,bib/conf}
\end{flushleft}
}
\appendix
\ifextended
\section{Security Proofs and Pseudocode}\label{s:proofs}

We now provide detailed pseudocode along with proofs of security for our algorithms.
We start by providing a formal definition of data-obliviousness.

\newcommand{\algo}{\mathcal{A}}
\newcommand{\simulator}{\mathcal{S}}
\newcommand{\leakage}{\mathcal{L}}
Let $\code{trace}_\algo(x)$ be the trace of observations that the adversary can make during an execution of an algorithm $\algo$, when given some input $x$, \ie the sequence of accessed memory addresses, along with the (encrypted) data that is read or written to the addresses.
To prove that the algorithm is data-oblivious, we show that there exists a simulator program that can produce a trace $T$ indistinguishable from $\code{trace}_\algo (x)$, when given \emph{only} the public parameters for the algorithm, and regardless of the value of $x$.
Since $T$ does not depend on any private data, the indistinguishability of $T$ and $\code{trace}_\algo$ implies that the latter leaks no information about the private data to the adversary, and only depends on the public parameters.
The following definition captures the definition formally.

\begin{definition}[Data-obliviousness]\label{def:oblivious}
We say than an algorithm $\algo$ is data-oblivious if there exists a polynomial-time simulator $\simulator$ such that for any input $x$
$$
\code{trace}_\algo(x) \equiv \simulator(\leakage(\algo))
$$
where $\leakage(\algo)$ is the leakage function and represents the public parameters of $\algo$.
\end{definition}

We now prove the security of each of our algorithms with respect to \cref{def:oblivious} in the following subsections.
\cref{table:leakage} summarizes the public parameters across \sys oblivious vision modules that are leaked to the attacker.

\begin{figure}
\centering
    \small
    \begin{tabular}[t]{p{2.5cm}|p{5cm}}
    \thickhline
    Component & Public parameters \T\B \\
    \thickhline
    Video decoding &
\begin{enumerate*}[($i$)]
    \item Metadata of video stream (format, frame rate, resolution);
    \item Number of bits used to encode each (padded) row of blocks;
    \item Maximum number of bits encoded per 2-byte chunk. 
\end{enumerate*}\\\hline
    Background sub. & -- \\\hline
    Bounding box det. &  
    \begin{enumerate*}[($i$)]
        \item Maximum number of objects per image; \item Maximum number of different labels that can be assigned to pixels (an object consists of all labels that are adjacent to each other).
    \end{enumerate*}\\\hline
    Object cropping & 
    Upper bounds on object dimensions.
    \\\hline
    Object tracking & 
    \begin{enumerate*}[($i$)]
    \item An upper bound on the intermediate number of features;
    \item An upper bound on the total number of features.
    \end{enumerate*}\\\hline
    CNN Inference & CNN architecture. \\\hline
    Overall & Modules and algorithms used in the pipeline.\\
    \thickhline
    \end{tabular}
    \caption{Summary of public parameters in \sys's oblivious vision modules observable by the attacker. 
    These consist of the input parameters provided to \sys, along with information leaked by \sys (such as frame rate and resolution).
    }
    \label{table:leakage}
    \vspace{0.5cm}
\end{figure}

\subsection{Oblivious video decoding}\label{s:proofs:decoder}

\cref{alg:decoding} provides detailed pseudocode for oblivious decoding the bitstream into pixel coefficients during video decoding, as described in \cref{s:decoding:bitstream}. We first explain the pseudocode in more detail, following the high-level description of \cref{s:decoding:bitstream}.

In our implementation, we model the prefix tree as a finite state machine (FSM).
While traversing the tree, we decode a single bit at each state (\ie each node in the tree) using the \textsc{EntropyDecode} subroutine.
This subroutine takes in a pointer \code{ptr} to the bitstream, and decodes a single bit from the bitstream via arithmetic operations. If no more bits can be decoded at the current position, it outputs \mynull; otherwise, it outputs the decoded bit 0 or 1.

To enable decoding and traversal, each state $S$ (\ie each node in the tree) is a structure containing four fields:
$(\code{prob}, \allowbreak \code{next}_0, \allowbreak \code{next}_1, \code{type})$. 
Here, $\code{prob}$ is the probability that the bit to be decoded at $S$ is 0 (as defined in the VP8 specifications~\cite{RFC:VP8Decoding}); and $\code{next}_0$ and $\code{next}_1$ are the indices of the next state based on whether the decoded bit is a 0 or 1.
Some states in the FSM are end states, \ie states that complete reconstructing a data value; for these states, \code{type} is set to \code{`end'}.
States that are not end states (\ie decode intermediate bits) have \code{type} set to \code{`mid'}.
The FSM also contains a dummy state $S_\code{dummy}$ that performs dummy bit decodes by invoking the entropy decoder with \dummy set to true; for the dummy state, \code{type} is set to \code{`dummy'}.

Next, we represent the FSM as an array---\code{Nodes} in \cref{alg:decoding}.
We set $\code{Nodes}[0]$ to be $S_\code{dummy}$, and $\code{Nodes}[1]$ to be the starting state. 
This enables us to implement transitions to any state $S_j$ by simply fetching the state at index $j$ from the array using the \oarr primitive.
As a result, the current state of the FSM remains hidden across transitions.
Each transition passes four items of information to the next state: (i)~the state that made the transition $S_\code{parent}$; (ii)~an integer \code{pos} that denotes the position in the bitstream of the current bit being decoded; (iii)~the (partially) constructed data value \code{data}, and (iv)~a counter \code{ctr} that counts the number of bits decoded at each position.
Note that the structure of the prefix tree (and hence the array \code{Nodes}) is public information since it is constructed per the VP8 specifications~\cite{RFC:VP8Decoding}.

\begin{algorithm}[t]
    \small
    \caption{Bitstream decoding}\label{alg:decoding}
    \begin{algorithmic}[1]
        \State \textbf{Constants:} Upper bound on number of encoded bits per 2-byte chunk in bitstream $N_\code{chunk}$; total number of bits in bitstream $N_\code{bits}$; array representation of the prefix tree for decoding \code{Nodes}
        \State \textbf{Globals:} The data value being decoded \code{data}; counter \code{ctr} that counts the number of bits decoded per chunk in the bitstream
        \State \textbf{Input:} Bitstream $B$
        \Procedure{DecodeBitstream}{$B$}
        \State $\code{ptr} = B.\code{start}$
        \State $S = \code{Nodes}[1]$
        \State $S_\code{parent} = \mynull$, $\code{data} = 0$, $\code{ctr} = 0$, $\code{pos} = 0$
        \State $O = [~]$
        \While {$\code{ptr} < B.\code{start} + N_\code{bits}$}
        \State $\code{isDummy = (S.\code{type} == `dummy')}$
        \State $b = \textsc{EntropyDecode}(\code{isDummy}, \code{ptr}, S.\code{prob})$
        \State $\code{isDummy = (b == \mynull)}$
        \State $\code{data} = \textsc{UpdateData}(\code{isDummy}, \code{data}, b)$

        \State $\code{pos} += 1$
        \State $\code{ctr} = \oselect(\code{ctr} == N_\code{chunk}, 0, \code{ctr}+1)$

        \State $\code{isEnd = (S.\code{type} == `\code{end}')}$
        \State $o_1 = \oselect(\code{isEnd}, \code{pos}, 0)$
        \State $o_2 = \oselect(\code{isEnd}, \code{data}, \mynull)$

        \Statex
        \State $\code{parent} = \oselect(\lnot\code{isDummy}, S.\code{index}, S_\code{parent}.\code{index})$
        \State $\code{next} = \oselect(b == 0, S.\code{next}_0, S.\code{index})$
        \State $\code{next} = \oselect(b == 1, S.\code{next}_1, \code{next})$
        \State $\code{next} = \oselect(\code{isDummy}, 0, \code{next})$
        \State $\code{next} = \oselect(\code{isDummy} \wedge$ 
        \Statex \hskip\algorithmicindent \hskip\algorithmicindent \hskip\algorithmicindent \hskip\algorithmicindent
        $\code{ctr} == N_\code{chunk}, \code{parent}, \code{next})$
        \Statex
        \State $O.\textsc{Append}((o_1, o_2))$

        \State $S_\code{parent} = \oarr(\code{Nodes},\code{parent})$
        \State $S = \oarr(\code{Nodes}, \code{next})$
        \State $\code{ptr} = \oselect(\code{ctr} == N_\code{chunk}, \code{ptr}+2, \code{ptr})$
        \EndWhile
        \State $\osort(O)$
        \State \Return $O$
        \EndProcedure

    \end{algorithmic}
\end{algorithm}

\begin{theorem}
    The bitstream decoding algorithm in \cref{alg:decoding} is data-oblivious per \cref{def:oblivious}, with public parameters $N_\code{bits}$, $N_\code{chunk}$, and the size of the prefix tree array \code{Nodes} (which is a known constant).
\end{theorem}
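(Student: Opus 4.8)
The plan is to exhibit a polynomial-time simulator $\simulator$ that, given only $N_\code{bits}$, $N_\code{chunk}$, and $|\code{Nodes}|$, produces a trace identical in distribution to $\code{trace}_\algo(B)$ for any bitstream $B$. The key observation is that \cref{alg:decoding} is written so that its control flow and memory-access sequence depend only on these public constants. First, I would note that the \textsc{while} loop runs for a fixed, publicly determined number of iterations: the loop guard advances \code{ptr} by exactly two bytes whenever $\code{ctr} == N_\code{chunk}$, and \code{ctr} is incremented (modulo $N_\code{chunk}$) on every iteration via an \oselect, independently of the decoded bit $b$; hence the total iteration count is $N_\code{bits}/2 \cdot N_\code{chunk}$, a public value. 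Consequently the simulator knows exactly how many loop bodies to emit.

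Second, I would argue that within each iteration the sequence of memory addresses touched is fixed. Every conditional in the body---updating \code{data}, setting $o_1,o_2$, computing \code{parent} and \code{next}---is realized with \oselect, which by \cref{s:background:primitives} compiles to register-to-register \code{CMOV} and therefore emits an access pattern independent of its condition and operands. The two data-dependent array lookups, $\oarr(\code{Nodes},\code{parent})$ and $\oarr(\code{Nodes},\code{next})$, each scan the whole \code{Nodes} array at cache-line granularity (per the \oarr guarantee), so their traces depend only on $|\code{Nodes}|$, not on the indices. The \textsc{EntropyDecode} call performs a bounded, fixed arithmetic computation on \code{ptr} and \code{prob} whether or not \dummy is set (the dummy state simply forces a dummy decode), and $O.\textsc{Append}$ writes one fixed-size tuple to the next contiguous slot of a pre-sized buffer. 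Thus the simulator can emit, per iteration, the canonical address sequence: the \textsc{EntropyDecode} accesses, the \oselect register ops, the append write, then two full cache-line scans of \code{Nodes}, then the \code{ptr} update. Finally, the concluding $\osort(O)$ operates on a buffer whose length equals the (public) iteration count, and a bitonic network's access pattern is fixed by its input size alone; so its contribution to the trace is likewise simulatable. Concatenating these pieces, $\simulator(\leakage(\algo))$ reproduces $\code{trace}_\algo(B)$ exactly, giving $\code{trace}_\algo(B) \equiv \simulator(\leakage(\algo))$.

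For the encrypted-payload component of the trace, I would invoke the authenticated-encryption assumption: each value written to memory is freshly encrypted, so the ciphertexts the adversary sees are computationally indistinguishable from random strings of the appropriate (public) length, which the simulator samples directly. Combining the exact equality of address sequences with the computational indistinguishability of the payloads yields the required indistinguishability of the full traces.

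The main obstacle I anticipate is not any single step but discharging the claim that \textsc{EntropyDecode} is itself data-oblivious---i.e.\ that arithmetic decoding of a single bit touches a publicly-fixed set of addresses regardless of \code{ptr}'s contents, the probability \code{prob}, and the \dummy flag. Arithmetic coders maintain range/value registers and renormalize by a data-dependent number of shifts; one must verify that the implementation performs a \emph{fixed} (worst-case) number of renormalization steps with \oselect-guarded updates, and that the \dummy branch is a genuine no-op at the access-pattern level rather than an early return. I would handle this by either citing a companion lemma for \textsc{EntropyDecode} or inlining its (short) oblivious pseudocode and checking it against \cref{def:oblivious} in the same style; everything else in the proof is then a routine concatenation argument over the loop body.
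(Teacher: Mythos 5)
Your proposal is correct and follows essentially the same route as the paper's proof: a simulator that runs the algorithm on an arbitrary (random) bitstream of length $N_\code{bits}$ and a line-by-line argument that the loop iterates exactly $N_\code{bits}\cdot N_\code{chunk}/2$ times, that every conditional is an \oselect, that both \code{Nodes} lookups are full \oarr scans, and that the final \osort's pattern is fixed by the (public) length of $O$. Your two additions --- explicitly simulating the encrypted payloads via the authenticated-encryption assumption, and flagging that \textsc{EntropyDecode}'s renormalization must be padded to a fixed worst-case number of \oselect-guarded steps --- are points the paper's proof handles only by assertion, so they strengthen rather than diverge from the argument.
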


\begin{proof}
    The simulator starts be generating a random bitstream $B$ of length $N_\code{bits}$, and then simply executes \cref{alg:decoding}.
    It outputs the trace produced by the algorithm.

    Lines 5--8 have fixed access patterns.

    The loop in line 9 runs a fixed number of times: \code{ctr} increments by 1 in each run of the loop on line 15 until it becomes equal to $N_\code{chunk}$, at which point the loop variable \code{ptr} is incremented by 2 in line 27.
    Thus, the loop makes exactly $N_\code{bits}\times N_\code{chunk} / 2$ iterations.

    Within the loop, line 10 has fixed access patterns.
    In line 11, the function \textsc{EntropyDecode} dereferences \code{ptr} and decodes a single bit from the dereferenced value using simple arithmetic operations; if \code{isDummy} is true it performs dummy operations instead, using \oselect. 
    Its access patterns thus only depend on the location pointed to by \code{ptr} within the bitstream $B$, and not the contents of the bitstream.
    Further, the value of the loop variable \code{ptr} is incremented per a fixed schedule (as described above), and thus only depends on the value of $N_\code{bits}$.

    Line 12 has fixed access patterns.
    In line 13, the function \textsc{UpdateData} updates the value of \code{data} with $b$ using arithmetic operations implemented using \oselect. Its access patterns are thus independent of the \code{data} or $b$.

    Lines 14--24 have fixed access patterns.
    The access patterns of lines 25--26 only depend on the length of \code{Nodes}, which is fixed and public.
    Line 27 also has fixed access patterns.

    Finally, line 29 uses the \osort primitive to sort the array $O$; its access patterns thus depend on the length of $O$. Since a single tuple is appended to $O$ per iteration of the loop, the length of $O$ is equal to the number of iterations, which only depends on $N_\code{bits}$ and $N_\code{chunk}$ as described above.

    Thus, the trace produced by the algorithm can be simulated only using the values of $N_\code{bits}$ and $N_\code{chunk}$.
\end{proof}

\subsection{Oblivious image processing}\label{s:proofs:algorithms}

In this section, we provide pseudocode along with proofs of security for the image processing algorithms described in \cref{s:algorithms}. 
For each algorithm, we first briefly describe its pseudocode, and then prove its security with respect to \cref{def:oblivious}.

\subsubsection{Background subtraction}
As described in \cref{s:algorithms:bgs}, the background subtraction algorithm (\cref{alg:bgs}) maintains a mixture of $M$ Gaussian components per pixel.

Let $\vec{x}^{(t)}$ denote the value of a pixel in RGB at time $t$.
The algorithm uses the value of the pixel to update each Gaussian component via a set of arithmetic operations (lines 5--8 in the pseudocode) along with their weights $\pi_m$ such that,
over time, components that represent background values for the pixel come to have larger weights, while foreground values are represented by components having smaller weights.

Then, it compute the distance of the sample from each of the $M$ components. If no component is sufficiently close, it adds a new component, %
increments $M$, and if the new $M$ $>$ $M_\code{max}$, discards the component with the smallest weight $\pi_m$ (lines 9--21).

Finally, it uses the $B$ largest components by weight to determine whether the pixel is part of the background (lines 22--30).
Note that $M_\code{max}$ and $B$ are algorithmic constants, independent of the input video streams.

\begin{algorithm}[t]
    \small
    \caption{Background subtraction}\label{alg:bgs}
    \begin{algorithmic}[1]
        \State \textbf{Constants:} Maximum number of Gaussian components $M_\code{max}$, number of components to count towards background decision $B$; threshold measures $\delta_\code{thr}$, $c_f$, and $c_\code{thr}$
        \State \textbf{Globals:} Actual number of Gaussian components $M$, array of Gaussian components \code{GMM} of size $M_\code{max}$
        \State \textbf{Input:} Pixel $x$
        \Procedure{BackgroundSubtraction}{$x$}
        \For{$m=1$ \textbf{to} $M_\code{max}$}
        \State $\dummy = (m > M)$
        \State $\textsc{UpdateGaussian}(\dummy, \code{GMM}[m], x)$
        \EndFor

        \Statex
        \State $\textsc{SortByWeight}(\code{GMM})$

        \State $\code{isClose} = \code{false}$
        \For{$m=1$ \textbf{to} $M_\code{max}$}
        \State $\delta = \textsc{GetDistance}(\code{GMM}[m], x)$
        \State $\code{isClose} = \code{isClose} \vee (\delta > \delta_\code{thr})$
        \EndFor
        \State $M = \oselect(\code{isClose} \wedge (M < M_\code{max}), M+1, M)$
        \State $G = \textsc{GenerateGaussian}()$
        \State $\code{GMM}[M_\code{max}-1] = \oselect(\code{isClose}, \code{GMM}[M_\code{max}-1], G)$

        \For{$m=M_\code{max-1}$ \textbf{to} $1$}
        \State $\code{toSwap} = (\code{GMM}[m].\pi < \code{GMM}[m+1].\pi)$
        \State $\code{GMM}[m] = \oselect(\code{toSwap}, \code{GMM}[m+1], \code{GMM}[m])$
        \EndFor

        \Statex
        \State $c = 0$
        \State $p = 0$
        \State $\code{toInclude} = \code{true}$
        \For{$m=1$ \textbf{to} $B$}
        \State $c = c + \code{GMM}[m].\pi$
        \State $p = \oselect(\code{toInclude}, p+c, p)$
        \State $\code{toInclude} = \oselect(c > c_f, \code{false}, \code{toInclude})$
        \EndFor
        \State \Return $p > c_\code{thr}$
        \EndProcedure
    \end{algorithmic}
\end{algorithm}

\begin{theorem}
    The background subtraction algorithm in \cref{alg:bgs} is data-oblivious per \cref{def:oblivious}, with public parameters $M_\code{max}$ and $B$ (which are known constants).
\end{theorem}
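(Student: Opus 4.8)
The plan is to mirror the structure of the bitstream-decoding proof: exhibit a polynomial-time simulator $\simulator$ that, given only the public parameters $M_\code{max}$ and $B$, reproduces $\code{trace}_\algo(x)$ for \emph{every} pixel value $x$. Concretely, $\simulator$ fixes an arbitrary pixel value (say $\vec{0}$), initializes the global state---an array \code{GMM} of exactly $M_\code{max}$ Gaussian components together with a counter $M$ set to any value in $[1,M_\code{max}]$---and then runs \cref{alg:bgs} verbatim, emitting the resulting sequence of accessed addresses and (encrypted) operands as its output. It then remains to check, line by line, that this trace is a deterministic function of $M_\code{max}$ and $B$ alone, so that $\simulator$'s ignorance of the true pixel value and mixture state is harmless.

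First I would handle the update loop (lines 5--8): it iterates exactly $M_\code{max}$ times irrespective of the true component count $M$, touching \code{GMM}$[m]$ in a fixed order, and the \dummy flag $m > M$ is consumed only inside \textsc{UpdateGaussian}, which performs its arithmetic on registers and writes back through \oselect, so a real update and a no-op update are indistinguishable in the access pattern. Line 9 sorts the fixed-size array \code{GMM} with \osort, whose network layout---and hence memory trace---depends only on $M_\code{max}$. The closeness loop (lines 10--14) again runs $M_\code{max}$ times with \textsc{GetDistance} implemented over registers and comparison-plus-\oselect; lines 15--17 are a constant number of \oselect and \textsc{GenerateGaussian} steps at data-independent addresses; the re-insertion ``bubble'' of lines 18--21 is a fixed descending loop of \oselect swaps; and the background-decision block of lines 22--30 runs its loop exactly $B$ times, accumulating into registers $c$, $p$ and the flag \code{toInclude} via \oselect, with the final comparison on line 30 being register-only. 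Every branch has thus been flattened into an \oselect and every loop bound is a public constant, giving $\code{trace}_\algo(x) \equiv \simulator(\leakage(\algo))$ with $\leakage(\algo) = (M_\code{max}, B)$, which is the claim.

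The main obstacle---really the only part that needs care---is arguing that the auxiliary subroutines \textsc{UpdateGaussian}, \textsc{GetDistance}, \textsc{GenerateGaussian}, and \textsc{SortByWeight} are themselves data-oblivious: their control flow must be straight-line (or loop over public bounds) and every conditional update must route through \oselect, so that dummy iterations leave the same footprint as real ones. Given \sys's design discipline that each pixel is processed by an identical instruction sequence (\cref{s:obl_overview}), this reduces to inspecting the pseudocode of each helper, which is routine. The one conceptual subtlety worth spelling out is that the \emph{global} $M$ and the per-pixel mixture state evolve across invocations; but because the loops always run to $M_\code{max}$ and dummy components carry weight $\pi = 0$, neither $M$ nor the mixture contents ever steer an address, so the simulator need not track them.
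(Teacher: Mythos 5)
Your proposal is correct and follows essentially the same approach as the paper: exhibit a simulator that runs the algorithm on an arbitrary pixel value and argue line by line that every loop bound is a public constant ($M_\code{max}$ or $B$), every conditional is flattened into \oselect, and the helper subroutines (\textsc{UpdateGaussian}, \textsc{GetDistance}, \textsc{SortByWeight} via \osort, etc.) have access patterns independent of data. Your observation that the evolving global $M$ and mixture state never steer an address is a nice explicit articulation of a point the paper leaves implicit, but the argument is the same.
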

\begin{proof}
    The simulator chooses a random pixel value $x$ and simply runs the algorithm.
    It outputs the trace produced by the algorithm.

    Lines 6--7 are executed exactly $M_\code{max}$ times. Here, the loop variable $m$ is public information.
    Line 6 has fixed access patterns. The function \textsc{UpdateGaussian} performs a set of arithmetic operations independent of the value of $x$, via \oselect operations using the condition value \code{isDummy}.
    The function updates 
    The access patterns of line 7 therefore only depend on $m$.

    \textsc{SortByWeight} in line 9 sorts the \code{GMM} array using the oblivious sorting primitive \osort. Hence, the access patterns of this step only depend on the length of \code{GMM}, which is $M_\code{max}$.

    Lines 12--13 are executed exactly $M_\code{max}$ times. %
    The function \textsc{GetDistance} computes the distance of $x$ from $\code{GMM}[m]$ via arithmetic operations, independent of the value of $x$.
    Thus, the access patterns of line 12 thus depends only on the loop variable $m$. 
    Line 13 has fixed access patterns.
    Lines 15--16 have fixed access patterns, and the access patterns of line 17 depends only on $M_\code{max}$.
    Lines 19--20 are executed exactly $M_\code{max}-1$ times. %
    The access patterns of both lines depend only on the loop variable $m$.

    Lines 22--24 have fixed access patterns.
    Lines 26--28 are executed exactly $M_\code{max}$ times. %
    The access patterns of line 26 depend only on the loop variable $m$; lines 27 and 28 have fixed access patterns.

    Thus, the trace produced by the simulator is indistinguishable from the trace produced by a real run.
\end{proof}

\subsubsection{Object detection}

\cref{alg:bbox} describes our algorithm for detecting bounding boxes of objects in an input image.
The algorithm maintains a list $L$ of tuples of the form $(\code{parent}, \code{bbox})$, where each tuple corresponds to a distinct ``label'' that will eventually be mapped to each blob. Initially, the list $L$ is empty. The \code{parent} field identifies other labels that are connected to the tuple's label (explained shortly), and the \code{bbox} field maintains the coordinates of the bounding box of the label (or blob).%

The algorithm first scans the image row-wise (lines 6--22).
Whenever a white pixel is detected, the algorithm checks if any of its neighbors scanned thus far were also white (\ie pixel to the left and the three pixels above).
In case at least one neighbor is white, the pixel is assigned the label of the neighbor with the smallest numerical value, $l_\code{min}$. 
The algorithm records that all white neighbors are connected, by setting the \code{parent} fields for each neighboring label to $l_\code{min}$ and updating the \code{bbox} field for $l_\code{min}$.
In case no neighbor is white, the pixel is assigned a new label, and a new entry is added to the list $L$, with its $\code{parent}$ set to the label itself and \code{bbox} as the coordinates of the current pixel.

Next, the algorithm \emph{merges} the bounding boxes of all connected labels into a single bounding box (lines 23--35).
Specifically, for every label $l$ in $L$, the algorithm first obtains the \code{parent} label of $l$ (say $l_\code{par}$), and then updates the \code{bbox} of $l_\code{par}$ to include the \code{bbox} of $l$. It repeats the process recursively with $l_\code{par}$, until it reaches a root label $l_\code{root}$ whose \code{parent} value is the label itself.
The process repeats for all labels in $L$, until only the root labels are left behind. 
Each root label corresponds to a distinct object in the frame.

\begin{algorithm}[th!]
    \small
    \caption{Bounding box detection}\label{alg:bbox}
    \begin{algorithmic}[1]
        \State \textbf{Constants:} Maximum number of labels $N$
        \State \textbf{Input:} Frame $F$
        \Procedure{BoundingBoxDetection}{$F$}
        \State Initialize list $L$ of $N$ tuples of type $(\code{parent}, \code{bbox})$, 
        \newline with $L[i].\code{parent} = i$ 
        \State $\code{ctr} = 1$
        \For{$i=1$ \textbf{to} $F.\code{height}$} 
        \For{$j=1$ \textbf{to} $F.\code{width}$} 
        \State $p = F[i][j]$
        \State $\code{isWhite} = (p \ne 0)$
        \State $(p_1, p_2, p_3, p_4) = \textsc{GetNeighbors}(F, i, j)$ 
        \State $(l_1, l_2, l_3, l_4) = \textsc{GetNeighborLabels}(F, i, j)$ 
        \State $\code{isNew} = (p_1 == p_2 == p_3 == p_4 == 0) \wedge \code{isWhite}$
        \State $l_\code{min} = \textsc{GetMinLabel}(l_1, l_2, l_3, l_4)$
        \State $l_\code{min} = \oselect(\code{isNew}, \code{ctr}, l_\code{min})$
        \State $\code{ctr} = \oselect(\lnot\code{isNew}, \code{ctr} + 1, \code{ctr})$
        \For{each label $l$ in $\{l_1, l_2, l_3, l_4\}$}
        \State $\textsc{UpdateParent}(L, \lnot\code{isNew}, l, l_\code{min})$
        \EndFor
        \State $\textsc{UpdateBBox}(L, \code{isWhite}, l_\code{min}, i, j)$
        \State $\textsc{SetLabel}(F, i, j, l_\code{min})$
        \EndFor
        \EndFor

        \Statex
        \For{$i=1$ \textbf{to} $N$}
        \State $\code{par} = L[i].\code{parent}$
        \State $\code{toMerge} = (\code{par} < i)$
        \For{$j=i$ \textbf{to} $1$}
        \State \begin{varwidth}[t]{\linewidth}
            $L[i].\code{parent} = \oselect(\code{toMerge}\wedge (\code{par}==j),$ \par
            \hskip\algorithmicindent $L[j].\code{parent}, L[i].\code{parent})$
        \end{varwidth}
        \EndFor
        \EndFor

        \For{$i=1$ \textbf{to} $N$}
        \For{$j=1$ \textbf{to} $N$}
        \State $\code{toMerge} = (L[j].\code{parent} == i)$
        \State $\textsc{MergeBBox}(\code{toMerge}, L[i].\code{bbox}, L[j].\code{bbox})$
        \EndFor
        \EndFor
        \State \Return $L$

        \EndProcedure
    \end{algorithmic}
\end{algorithm}

\begin{theorem}
    The bounding box detection algorithm in \cref{alg:bbox} is data-oblivious, with public parameters $N$, and the height and width of the input frame.
\end{theorem}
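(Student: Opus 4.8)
The plan is to follow the simulation-based template already used for Algorithms~\ref{alg:decoding} and~\ref{alg:bgs}. The simulator $\simulator$ is handed only the public parameters — the label bound $N$ and the frame dimensions $F.\code{height}$, $F.\code{width}$ — picks an arbitrary frame $F$ of those dimensions (content is irrelevant), allocates the list $L$ of $N$ tuples exactly as in line~4, runs \cref{alg:bbox} verbatim on $F$, and outputs the resulting trace. It then suffices to show that every memory reference made by the algorithm is a function only of $N$, $F.\code{height}$, and $F.\code{width}$, and never of a pixel value, a label index, or the counter \code{ctr}; then the trace $\simulator$ produces is literally identical to $\code{trace}_\algo(F)$ for any real input frame, giving data-obliviousness per \cref{def:oblivious}.

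First I would dispatch the control flow. The two scan loops (lines~6--7) iterate exactly $F.\code{height}\times F.\code{width}$ times; the neighbour loop (lines~17--19) iterates exactly $4$ times; and the three post-processing loops iterate $N$ times (lines~23--24), $\tfrac{N(N+1)}{2}$ times (lines~26--28), and $N^2$ times (lines~31--34). All bounds are public and the counters $i,j$ are public. The quantity \code{ctr} is written only through an \oselect (line~16) and consumed only as the value argument of an \oselect (line~15) or passed into an oblivious subroutine; so although its value is data-dependent it never indexes memory directly and leaks nothing through the trace.

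Next I would go line by line through the scan phase. Lines~8, 10, 11, and~21 touch $F$ at $(i,j)$ and at the fixed stencil of already-scanned neighbours of $(i,j)$, so their access patterns depend only on $(i,j)$; in particular \textsc{SetLabel} writes the encrypted value $l_\code{min}$ to the public address $F[i][j]$, which is oblivious. Lines~9 and~12--16 are register-only computations and \oselect's with fixed operands, hence have fixed access patterns. The only place a data-dependent label value is used to touch memory is inside \textsc{UpdateParent} (line~18) and \textsc{UpdateBBox} (line~20): here I would invoke the fact that these operate over the fixed-size structure $L$ using the oblivious primitives of \cref{s:background:primitives} — each call makes a linear pass over all of $L$ with \oselect (equivalently \oarr), updating the entry whose label matches $l$ (resp.\ $l_\code{min}$) and performing dummy writes elsewhere — so the access pattern of each call depends only on $N$. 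The $\lnot\code{isNew}$ and \code{isWhite} flags simply degrade the pass to all-dummy when appropriate, preserving the trace. In the merge phase, line~24 reads $L[i].\code{parent}$, lines~26--28 \oselect over $L[i]$ and $L[j]$, and \textsc{MergeBBox} (line~33) touches $L[i].\code{bbox}$ and $L[j].\code{bbox}$ — all indexed by the public loop counters alone.

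I expect the main obstacle to be precisely the bookkeeping around the label-indexed subroutines: one must verify that \textsc{UpdateParent} and \textsc{UpdateBBox} — the natural hiding spot for a leak, since conceptually they ``update entry $l$ of $L$'' with $l$ secret — genuinely scan $L$ obliviously instead of indexing it, and that threading the $\lnot\code{isNew}$ flag through them correctly turns them into oblivious dummy operations. Once that is pinned down, every remaining access is indexed by $N$, $F.\code{height}$, or $F.\code{width}$, so $\simulator$ reproduces $\code{trace}_\algo$ exactly and the theorem follows.
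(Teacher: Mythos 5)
Your proposal is correct and follows essentially the same route as the paper: a simulator that runs the algorithm on an arbitrary frame of the public dimensions, a check that all loop bounds are functions of $N$ and the frame dimensions, and the key observation that \textsc{UpdateParent} and \textsc{UpdateBBox} hide the secret label index by obliviously scanning all $N$ entries of $L$ with \oarr/\oselect. The only cosmetic caveat is that the simulated trace is \emph{indistinguishable} from (not literally identical to) the real one, since the trace includes the encrypted data payloads.
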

\begin{proof}
    The simulator generates a random frame $F$ of the given height and width and runs the algorithm.
    It outputs the trace produced by the algorithm.

    The access patterns of line 4 depends only on $N$.
    Line 5 has fixed access patterns.

    The loops (lines 6--22) are executed a fixed number of times, equal to the height and width of the frame.
    The access patterns of line 8 depend only on the loop variables $i$ and $j$, which are public information.
    Line 9 has fixed access patterns.
    In line 10, the function \textsc{GetNeighbors} returns the four pixels neighboring the input coordinates ($i$ and $j$), and its access patterns thus depend only on $i$ and $j$. 
    Similarly in line 11, \textsc{GetNeighborLabels} looks up the labels assigned to the neighboring pixels, and thus has access patterns that only depend on $i$ and $j$.
    Line 12 has fixed access patterns.
    In line 13, \textsc{GetMinLabel} selects the minimum of the input values using \oselect operations, and thus has fixed access patterns.
    Lines 14--15 have fixed access patterns.
    In line 17, \textsc{UpdateParent} uses \oarr combined with \oselect to update $L[l].\code{parent}$ to $l_\code{min}$; it thus has access patterns that only depend on the length $N$ of the array $L$.
    In line 18, \textsc{UpdateBbox} similarly uses \oarr combined with \oselect to update $L[l_\code{min}].\code{bbox}$ with the current coordinates $i$ and $j$; its access patterns therefore only depend on $L$'s length $N$.
    In line 19, \textsc{SetLabel} sets the label of the pixel at $F[i][j]$ to $l_\code{min}$; its access patterns depend only on the loop variables $i$ and $j$.

    Lines 24--28 are run $N$ times.
    The access patterns of line 24 depend only on the loop variable $i$.
    Line 25 has fixed access patterns.
    Line 27 is executed $i$ times, which is public information; also, the access patterns of this line only depend on the loop variables $i$ and $j$.

    Lines 32--33 are run $N^2$ times.
    The access patterns of line 32 only depend on the loop variable $j$.
    In line 33, the function \textsc{MergeBbox} uses \oselect operations to update $L[i].\code{bbox}$ with $L[j].\code{bbox}$; it therefore has fixed access patterns.

    Thus, the trace produced by the simulator is indistinguishable from the trace produced by a real run of the algorithm.
\end{proof}

\subsubsection{Object cropping}

\cref{alg:crop,alg:resize} together describe \sys's oblivious cropping algorithm.
\sys crops out images of a fixed upper bounded size using \cref{alg:crop}, and then scales up the ROI within the cropped image using \cref{alg:resize} (as described in \cref{s:algorithms:cropping}).
The pseudocode is self-explanatory.

\begin{algorithm}[t]
    \small
    \caption{Object cropping}\label{alg:crop}
    \begin{algorithmic}[1]
        \State \textbf{Constants:} Upper bounds on object dimensions \code{height}, \code{width}
        \State \textbf{Input:} Frame $F$, bounding box coordinates \code{bbox}
        \Procedure{CropObject}{$F$, \code{bbox}}
        \State Initialize an empty buffer \code{buf} with width $=F.\code{width}$ and height $=\code{height}$
        \For{$i=1$ \textbf{to} $F$.\code{height}}
        \State $\code{cond} = (i == \code{bbox.top})$
        \State $\textsc{CopyRows}(\code{cond}, i, F, \code{buf})$
        \EndFor

        \Statex
        \State Initialize an empty buffer \code{obj} with width $=\code{width}$ and height $=\code{height}$
        \For{$i=1$ \textbf{to} $F$.\code{width}}
        \State $\code{cond} = (i == \code{bbox.left})$
        \State $\textsc{CopyCols}(\code{cond}, i, \code{buf}, \code{obj})$
        \EndFor
        \EndProcedure
    \end{algorithmic}
\end{algorithm}

\begin{theorem}
\label{thm:cropping}
    The object cropping algorithm in \cref{alg:crop} is data-oblivious, with public parameters equal to the dimensions of the input frame, and the upper bounds on the object dimensions \code{height} and \code{width}.
\end{theorem}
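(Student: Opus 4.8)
The plan is to instantiate the simulation-paradigm template already used for \cref{alg:bgs} and \cref{alg:bbox}. The simulator $\simulator$ receives $\leakage(\algo) = (F.\code{height},\,F.\code{width},\,\code{height},\,\code{width})$, fabricates an arbitrary frame of those dimensions together with an arbitrary well-formed \code{bbox}, runs \cref{alg:crop} on them, and outputs the resulting trace; the task is then to verify line by line that every memory reference made by \cref{alg:crop} is a deterministic function of these public quantities alone, so that $\code{trace}_\algo(F,\code{bbox}) \equiv \simulator(\leakage(\algo))$.

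The bulk of the argument is a routine walk through the pseudocode. The two buffer allocations (lines 4 and 9) touch fixed address ranges of sizes $F.\code{width} \times \code{height}$ and $\code{width} \times \code{height}$, all public. The first loop (lines 5--8) runs exactly $F.\code{height}$ times with the public counter $i$; line 6 forms \code{cond} from $i$ and \code{bbox.top} entirely in registers, which is invisible in our model, and line 7's \textsc{CopyRows} reads the horizontal strip of $F$ at vertical offset $i$ (rows $i,\dots,i+\code{height}-1$) and conditionally writes it into \code{buf} via \oselect with predicate \code{cond} --- a dummy \code{CMOV} write when the strip is not aligned with the object. Hence \textsc{CopyRows} issues the same sequence of reads and writes whether or not \code{cond} holds, determined only by $i$, \code{height}, and $F.\code{width}$. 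The second loop (lines 10--13) is symmetric: it runs $F.\code{width}$ times, and \textsc{CopyCols} reads the vertical window of \code{buf} at horizontal offset $i$ (columns $i,\dots,i+\code{width}-1$) and obliviously copies it into \code{obj}, with accesses depending only on $i$, \code{width}, and \code{height}. Composing the two public-only loops yields a public-only trace, which establishes \cref{thm:cropping}.

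The one step that needs care --- and is really the only obstacle --- is pinning down the exact memory footprint of \textsc{CopyRows} and \textsc{CopyCols} near the ends of each scan, where $i+\code{height}-1$ (resp.\ $i+\code{width}-1$) would run past the edge of $F$ (resp.\ \code{buf}). To keep the footprint a clean function of the public parameters, I would specify these subroutines so that they always traverse a strip/window of the full upper-bound size over a publicly fixed zero-padding of $F$ and \code{buf}; this makes no access data-dependent, keeps every access in bounds, and the \oselect guard additionally suppresses the contribution of any padded rows or columns. With that convention fixed, the line-by-line check above goes through verbatim, and the subsequent removal of the extraneous border around the object is handled separately by the resizing routine in \cref{alg:resize}.
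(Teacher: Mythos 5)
Your proposal is correct and follows essentially the same argument as the paper: the same simulator construction from the public dimensions, followed by the same line-by-line verification that each loop runs a publicly determined number of times and that \textsc{CopyRows}/\textsc{CopyCols} access patterns depend only on the loop index and the public size parameters. Your additional care about the strip/window footprint at the frame boundary is a reasonable implementation convention that the paper's proof leaves implicit, but it does not change the structure of the argument.
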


\begin{proof}
    The simulator generates a random frame of the given dimensions, along with a bounding box \code{bbox} with random coordinates.
    It then runs the algorithm, and outputs the produced trace.

    The access patterns of line 4 depend only the frame's width, and the parameter \code{width}, both of which are known to the simulator.
    Lines 6--7 run a fixed number of times, equal to the height of the frame.
    Line 6 has fixed access patterns.
    In line 7, \textsc{CopyRows} uses \oselect to copy pixels from $F$ into \code{buf}; its access patterns thus only depend on the loop variable $i$, the width of the frame, and the parameter \code{height}.

    The access patterns of line 9 depend only the parameters \code{width} and \code{height}.
    Lines 11--12 run a fixed number of times, equal to the width of the frame.
    Line 11 has fixed access patterns.
    In line 12, \textsc{CopyCols} uses \oselect to copy pixels from \code{buf} into \code{obj}; its access patterns thus only depend on the loop variable $i$, and the parameters \code{height} and \code{width}.

    Thus, the trace produced by the simulator is indistinguishable from the trace produced by a real run of the algorithm.
\end{proof}

\begin{algorithm}[t!]
    \small
    \caption{Object resizing}\label{alg:resize}
    \begin{algorithmic}[1]
        \State \textbf{Input:} Object buffer $O$, bounding box coordinates \code{bbox}
        \Procedure{ResizeObject}{$O$, \code{bbox}}
        \State\textsc{ResizeHorizontally}({$O$, \code{bbox}, \code{false}})
        \State\textsc{Transpose}($O$)
        \State\textsc{ResizeHorizontally}({$O$, \code{bbox}, \code{true}})
        \State\textsc{Transpose}($O$)
        \EndProcedure
        
        \Statex
        \Procedure{ResizeHorizontally}{$O$, \code{bbox}}%
        \For{$i=1$ \textbf{to} $O$.\code{height}}
        \For{$j=1$ \textbf{to} $O$.\code{width}}
        \State $p = \textsc{PixelOfInterest}(j, \code{bbox})$%
        \State $a = \oarr($O[i]$, p)$
        \State $b = \oarr($O[i]$, p+1)$
        \State $O[i][j] = \textsc{LinearInterpolate}(a, b)$
        \EndFor
        \EndFor
        \EndProcedure
        
    \end{algorithmic}
\end{algorithm}

\begin{theorem}
    The object resizing algorithm in \cref{alg:resize} is data-oblivious, with public parameters equal to the dimensions of the input object $O$.
\end{theorem}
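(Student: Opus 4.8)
The plan is to follow the same simulation template used for \cref{thm:cropping} and the other algorithms in this appendix: exhibit a polynomial-time simulator $\simulator$ that is handed only the public parameters --- here, the height and width of the input object buffer $O$ --- picks an arbitrary buffer of those dimensions together with an arbitrary bounding box \code{bbox}, runs \cref{alg:resize} verbatim on them, and emits the resulting trace. It then suffices to check that every memory reference the algorithm makes is a deterministic function of the (public) buffer dimensions, and in particular is independent both of the pixel values stored in $O$ and of the private object dimensions encoded in \code{bbox}.

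First I would analyze the inner routine \textsc{ResizeHorizontally}. Its two nested loops iterate exactly $O.\code{height}$ and $O.\code{width}$ times, both public, so the loop skeleton is fixed. In the body, \textsc{PixelOfInterest} derives the source index $p$ from the loop variable $j$ and the \code{bbox} coordinates using register arithmetic only, so it makes no data-dependent memory accesses and $p$ never leaves a register at that point. The two fetches $a = \oarr(O[i], p)$ and $b = \oarr(O[i], p+1)$ go through the oblivious-array primitive of \cref{s:background:primitives}, which scans the whole row $O[i]$ at cache-line granularity; their access patterns therefore depend only on the row length $O.\code{width}$ and not on $p$. Finally \textsc{LinearInterpolate} combines the register-resident values $a$ and $b$ and writes the result to the fixed slot $O[i][j]$. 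Hence one call to \textsc{ResizeHorizontally} yields a trace that is a function of $O$'s dimensions alone.

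Next I would dispose of \textsc{ResizeObject} itself, which is the fixed four-step sequence \textsc{ResizeHorizontally}, \textsc{Transpose}, \textsc{ResizeHorizontally}, \textsc{Transpose}. The boolean third argument differing between the two \textsc{ResizeHorizontally} calls is a public constant that merely selects which dimension is stretched, so it alters neither control flow nor addresses. \textsc{Transpose} of a matrix with public dimensions visits its elements in a fixed order and so has a data-independent access pattern; after the first transpose the two dimensions are simply swapped, still public. Concatenating the four sub-traces --- each, by the previous paragraph and this one, determined solely by the publicly known (possibly swapped) dimensions --- gives a total trace that $\simulator$ reproduces exactly, which establishes data-obliviousness per \cref{def:oblivious}.

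The main obstacle is not control flow but the two address-level leakage sources peculiar to this routine: the source index $p$, which is computed from the private object dimensions, and the row reads $a,b$ indexed by $p$. The whole argument rests on (i) \textsc{PixelOfInterest} being implemented purely with register arithmetic so that $p$ is never used to address memory directly, and (ii) every subsequent use of $p$ as an index being routed through \oarr, whose linear scan makes the observed pattern independent of $p$. Confirming that the compiled implementation actually honours both constraints --- that no register spill or ``optimized'' direct index sneaks past --- is the crux here, and is the same manual-verification obligation noted for the oblivious primitives in \cref{s:evaluation}.
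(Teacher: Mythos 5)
Your proposal is correct and follows essentially the same simulation argument as the paper's own proof: a simulator that knows only the dimensions of $O$ runs the algorithm on an arbitrary buffer and bounding box, the loop bounds and \textsc{Transpose} accesses are fixed by the public dimensions, \textsc{PixelOfInterest} and \textsc{LinearInterpolate} are pure register arithmetic, and the two row fetches go through \oarr so their access patterns depend only on the row length. Your added emphasis on why the index $p$ never leaks is a faithful elaboration of the same reasoning, not a different route.
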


\begin{proof}
    The simulator generates a random object buffer with the given dimensions, along with a bounding box \code{bbox} with random coordinates.
    It then runs the algorithm, and outputs the produced trace.

    The function \code{Transpose} transposes the object buffer, and thus its access patterns only depend on the dimensions of $O$.
    The function \code{ResizeHorizontally} works as follows.
    The loops (lines 9--15) are executed a fixed number of times, equal to the dimensions of $O$. 
    Line 11 computes the location of the pixels to be used for linearly interpolating the current pixel, using a set of arithmetic operations; it thus has fixed access patterns.
    Lines 12 and 13 have access patterns that only depend on the width of $O$.
    In line 14, \textsc{LinearInterpolate} linearly interpolates the current pixel's value using a set of arithmetic operations; the access patterns of this line thus depend only on the loop variables.

    Thus, the trace produced by the simulator is indistinguishable from the trace produced by a real run of the algorithm.
\end{proof}

\subsubsection{Object tracking}
\cref{alg:featuredet} describes the feature detection phase of the object tracking. We omit a description of feature matching since it is oblivious by default.

The algorithm first creates a set of increasingly blurred versions of the input image (line 5).
Then, it identifies a set of candidate \emph{keypoints} in these blurred images, \ie pixels that are the maximum and minimum of all their neighbors (lines 6--14).
This set of keypoints is further refined to identify those that are robust to changes in illumination (\ie have high intensity), or represent a ``corner'' (lines 15--18).
Mathematically, these tests involve the computation of derivatives at the candidate point, and then a comparison of the results against a threshold. Candidates that fail these tests are discarded.

Finally, for each keypoint, the algorithm constructs a \emph{feature descriptor}.
It calculates the ``orientation'' of the pixels around the keypoint (within a $16\times16$ neighborhood) based on pixel differences, and then constructs a histogram over the computed values (lines 20--14).
The histogram acts as the keypoint's descriptor.

\begin{algorithm}[t!]
    \small
    \caption{Feature detection}\label{alg:featuredet}
    \begin{algorithmic}[1]
        \State \textbf{Input:} Object buffer $O$, maximum number of candidate keypoints $N_\code{temp}$, maximum number of actual keypoints $N$
        \Procedure{DetectFeatures}{$O$, $N_\code{temp}$, $N$}
        \State Initialize an empty list $L$ of size $N_\code{temp}$ for candidate keypoints, and a list $H$ of size $N$ for features of final keypoints
        \State $\code{ctr} = 0$
        \State $\code{images} = \textsc{GetDifferenceOfGaussians}(O)$
        \For {each pixel $p$ in \code{images}}
        \State $\code{nbrs} = \textsc{GetNeighbors}(p)$
        \State $\code{isExtrema} = \textsc{CheckExtrema}(p, \code{nbrs})$
        \State $k = (p, \code{nbrs})$
        \For {$i = 1$ to $N_\code{temp}$}
        \State $L[i] = \oselect(\code{isExtrema} \wedge i == \code{ctr}, k, L[i])$
        \EndFor
        \State $\code{ctr} = \oselect(\code{isExtrema} \wedge \code{ctr} < N_\code{temp}, \code{ctr+1}, \code{ctr})$
        \EndFor
        \Statex
        \For {$i = 1$ to $N_\code{temp}$}
        \State $\code{isRobust} = \textsc{CheckRobustness}(L[i])$
        \State $L[i] = \oselect(\code{isRobust}, L[i], \mynull)$
        \EndFor
        \State $\osort(L)$ such that non-null values move to the head of $L$
        \Statex
        \For {$i = 1$ to $N$}
        \State $\code{bbox} = \textsc{CalcNeighborhoodBbox}(L[i])$
        \State $\code{roi} = \textsc{CropObject}(\code{images}, \code{bbox})$
        \State $H[i] = \textsc{CalcOrientationHist}(L[i], \code{roi})$
        \EndFor
        \State \Return $H$
        \EndProcedure
        
    \end{algorithmic}
\end{algorithm}

\begin{theorem}
    The feature detection algorithm in \cref{alg:featuredet} is data-oblivious, with public parameters equal to the dimensions of the input image $O$, and upper bounds $N_\code{temp}$ and $N$.
\end{theorem}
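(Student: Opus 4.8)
The plan is to follow the same simulation-based template used for the preceding algorithms. The simulator generates a random object buffer $O$ of the given dimensions, runs \cref{alg:featuredet} on it, and outputs the resulting trace; I then argue line by line that every memory access depends only on $O$'s dimensions, $N_\code{temp}$, and $N$, and never on pixel values. Since the trace is reproducible from public parameters alone, data-obliviousness per \cref{def:oblivious} follows.

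First I would dispatch the structurally easy lines. Line 3 allocates lists of sizes $N_\code{temp}$ and $N$, so its access pattern is a function of those public bounds; line 4 is fixed. In line 5, \textsc{GetDifferenceOfGaussians} repeatedly blurs and subtracts the image on a fixed convolution schedule, so it scans $O$ in a data-independent order determined only by $O$'s dimensions. The loop on lines 6--14 runs exactly once per pixel of \code{images}, a count fixed by $O$'s dimensions; within it, \textsc{GetNeighbors} touches a fixed stencil around $p$ (depends only on $p$'s coordinates), \textsc{CheckExtrema} compares $p$ against its neighbours using only \oselect-based arithmetic (fixed access pattern), the inner loop on lines 10--12 scans all of $L$ writing via \oselect (access pattern depends only on $N_\code{temp}$, following the standard oblivious-append idiom), and the \code{ctr} update on line 13 is a single \oselect. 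The loop on lines 15--18 runs exactly $N_\code{temp}$ times; \textsc{CheckRobustness} evaluates the derivative/threshold tests on $L[i]$ with branchless arithmetic, so its access pattern depends only on the loop index, and line 17 is a fixed \oselect. Line 19 invokes \osort on $L$, whose access pattern depends only on $|L| = N_\code{temp}$.

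The remaining work, and the step I expect to be the main obstacle, is the feature-descriptor loop on lines 20--24. Lines 20--22 are routine: the loop runs $N$ times, \textsc{CalcNeighborhoodBbox} is pure arithmetic on $L[i]$, and \textsc{CropObject} on line 22 is data-oblivious by \cref{thm:cropping} (its access pattern depends only on the frame and object dimensions, which are public). The delicate part is \textsc{CalcOrientationHist} on line 23: histogram construction is naturally data-dependent, since the bin into which each gradient orientation falls is determined by a secret pixel difference. I would handle this by recasting the histogram in the scan pattern of \cref{s:obl_overview}: compute all $16\times16$ gradient magnitudes and orientations with fixed arithmetic, and then, for each of the fixed number of histogram bins, scan the full fixed-size array of orientations and use \oselect to conditionally accumulate each contribution into that bin. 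This yields an access pattern determined solely by the $16\times16$ neighbourhood size and the constant number of bins, at the cost of a scan quadratic in the neighbourhood size; I would make this explicit in the pseudocode of \textsc{CalcOrientationHist}. Given that, line 23's access pattern depends only on public constants, line 24 returns $H$ with a fixed pattern, and the composed trace is therefore simulatable from the dimensions of $O$, $N_\code{temp}$, and $N$ alone.
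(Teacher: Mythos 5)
Your proposal is correct and follows essentially the same simulation argument as the paper: generate a random buffer of the public dimensions, run the algorithm, and check line by line that every access depends only on the dimensions of $O$, $N_\code{temp}$, and $N$. Your treatment of lines 3--22 matches the paper's almost clause for clause, including the appeal to \cref{thm:cropping} for \textsc{CropObject} and the observation that the oblivious-append loop and the \osort both have access patterns determined by $N_\code{temp}$ alone. The one place you go beyond the paper is \textsc{CalcOrientationHist}: the paper disposes of it by asserting that it ``performs arithmetic operations,'' whereas you correctly flag that histogram binning is inherently data-dependent (the bin index is a secret function of the gradient orientation) and propose the standard fix of scanning every bin for each of the fixed $16\times16$ contributions with an \oselect accumulation. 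That extra step is not a deviation from the paper's approach so much as a necessary elaboration of it --- the paper's one-line justification is too terse to be checkable as written, and your version is the argument the paper implicitly relies on. The cost you note (a scan quadratic in the neighbourhood size times the bin count, all public constants) is the right price and does not affect the obliviousness claim.
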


\begin{proof}
     The simulator generates a random image buffer with the given dimensions, and then runs the algorithm.
     It outputs the trace produced by the algorithm.
     
     Line 5 performs Gaussian blurring operations on the input image $O$, which perform a convolution of the input image with a specified \emph{kernel} (\ie a small matrix). The access patterns of these matrix multiplications are fixed, and independent of the values of the matrices.
     
     The loop (lines 6--14) runs a fixed number of times, the value of which depends on the resolution of the input image, which is public.
     Line 7 fetches the neighbors of the current pixel; its access patterns are therefore dependent only on coordinates of the loop variable, which is public.
     Line 8 checks the value of the current pixel with the obtained \code{nbrs} using \oselect operations, and thus has fixed access patterns, independent of the values.
     Line 9 has fixed access patterns.
     The loop in lines 10--12 executes a fixed $N_\code{temp}$ number of times.
     The access patterns of line 11 depend only on the public loop variable.
     Line 13 has fixed access patterns.
     
     The loop in lines 15--18 executes a fixed $N_\code{temp}$ number of times.
     Line 16 has fixed access patterns.
     The access patterns of line 17 depend only on the public loop variable.
     Line 19 has fixed access patterns that depend only on the size $N_\code{temp}$ of the array $L$.

     The loop in lines 15--18 executes a fixed $N$ number of times.
     The function \textsc{CalcNeighborhoodBbox} in Line 21 computes the bounding box of the $16\times16$ neighborhood of the current keypoint using arithmetic operations, and has fixed access patterns.
     The access patterns of the function \textsc{CropObject} depend only on the dimensions of the input image $O$ (which is public) and the resolution of the bounding box, which is fixed (from \cref{thm:cropping}).
     The function \textsc{CalcOrientationHist} performs arithmetic operations, and the access patterns of line 23 depend only on the public loop variable.
     
     Thus, the trace produced by the simulator is indistinguishable from the trace produced by a real run of the algorithm.
\end{proof}

\fi

\section{Impact of Video Encoder Padding} \label{s:appendix:padding}
In \sys, the source video streams are padded at the camera to prevent information leakage  due to variations in bitrate of the encrypted network traffic.
However, it may not always be possible to modify legacy cameras to incorporate padding. This security guarantee also comes at the cost of performance and increased network bandwidth.

While we recommend padding the video streams for security, we studied the impact of disabling video encoder padding on \sys so as to aid practitioners in taking an informed decision between security and performance. Disabling padding has two  implications on \sys.

First, the encoded stream may also contain interframes in addition to keyframes (see \cref{s:padding}). Thus, we  have devised an oblivious routine for interframe prediction, which is described in \cref{s:decoding:inter}.
Second, the performance overhead of \sys (\approx\finaloverhead) reduces to a range of \approx\finaloverheadnopadding. This is due to lower interframe decoding latency and smaller number of decoded bits per row of blocks (which are obliviously sorted). %

\subsection{Inter-Prediction for Interframes} \label{s:decoding:inter}
Inter-predicted blocks use {\em previously decoded frames} as reference (either the previous frame, or the most recent keyframe). %
Obliviousness of inter-prediction requires that the reference block (which frame, and block's coordinates therein) remains private during decoding. Otherwise, an attacker observing access patterns during inter-prediction can discern the motion of objects across frames. Furthermore, some blocks even in interframes can be \emph{intra}-predicted for coding efficiency, and oblivious approaches need to conceal whether an interframe block is inter- or intra-predicted. 
A na\"ive, but inefficient, approach to achieve obliviousness is to access \emph{all blocks in possible reference frames} at least once---if any block is left untouched, its location its leaked to the attacker. %

We leverage properties of video streams to make our oblivious solution efficient: 
\begin{enumerate*}[($i$)]
	\item Most blocks in interframes are inter-predicted (\approx$99\%$ blocks in our streams); and 
	\item Coordinates of reference blocks are close to the coordinates of inter-predicted blocks (in a previous frame), \eg $90\%$ of blocks are radially within 1 to 3 blocks.
\end{enumerate*}
These properties enable two optimizations. %
First, we assume every block in an interframe is inter-predicted.
Any error due to this assumption on intra-predicted blocks is minor in practice. 
Second, %
instead of scanning all blocks in prior frames, we only access blocks within a small distance of the current block.
If the reference block is indeed within this distance, we fetch it obliviously using \oarr; else, (in the rare cases) we use the block at the same coordinates in the previous frame as reference.

\section{Impact of Disabling Hyperthreading}\label{s:hyperthreading}
\sys requires hyperthreading to be disabled in the underlying system for security (see \cref{s:threatmodel}). In contrast, in our evaluation, the baseline system leveraged hyperthreading to maximize its throughput. 

We measured the impact of disabling hyperthreading on \sys's performance to be $5\%$. Visor heavily utilizes vector units due to the increased data-level parallelism of oblivious algorithms, leaving little space for performance improvement when hyperthreading is enabled~\cite{DCDRF:SOCC18}. As such, the increased security comes with negligible performance overhead.

Disabling hyperthreading in cloud VMs is considered to be a good practice due to the reduced impact of microarchitectural data-sampling vulnerabilities that affect commodity Intel CPUs (not just Intel SGX)~\cite{MDS:attack:RIDL,MDS:attack:Fallout,SGX:attack:ZombieLoad,cacheOut}. Our experiments demonstrate that disabling hyperthreading in the baseline system reduces its performance by $30\%$; bridging considerably the performance gap between Visor and insecure baseline systems in hyperthreading-disabled cloud deployments.

\end{document}